\makeatletter \@addtoreset{figure}{section}
\def\thefigure{\thesection.\@arabic\c@\input{../../../../Applications/TeX/TeXShop.app}
figure}
\def\fps@figure{h, t}
\def\thetable{\thesection.\@arabic\c@table}
\def\fps@table{h, t}
\newtheorem{theorem}{Theorem}[section]
\newtheorem{corollary}[theorem]{Corollary}
\newtheorem{definition}[theorem]{Definition}
\newtheorem{proposition}[theorem]{Proposition}
\newtheorem{example}{Example}[section]
\newtheorem{lemma}[theorem]{Lemma}
\newtheorem{remark}{Remark}[section]
\newtheorem{remarks}[remark]{Remarks}
\def\bd{\begin{definition}}
\def\ed{\end{definition}}
\def\bt{\begin{theorem}}
\def\et{\end{theorem}}
\def\bp{\begin{proposition}\rm}
\def\ep{\end{proposition}}
\def\bc{\begin{corollary}}
\def\ec{\end{corollary}}
\def\bl{\begin{lemma}\em}
\def\el{\end{lemma}}
\def\be{\begin{equation}}
\def\ee{\end{equation}}
\def\br{\begin{remark}\rm\small}
\def\er{\end{remark}}
\def\brs{\begin{remarks}.\\ \rm\
\begin{enumerate}}
\def\ers{\end{enumerate}\end{remarks}}
\def\bea{\begin{eqnarray}}
\def\eea{\end{eqnarray}}
\newenvironment{proof}[1][Proof]{\noindent {\bf #1.} }{\hfill {\bf Q.E.D.}\\}
\newcommand{\Mat}[2]{{\mathop {\mathrm {Mat}}}^{#1 \times #2}}
\newcommand{\Id}[1]{{\mathbf{I}}_{#1}}
\newcommand{\ef}[1]{{\bm e}(#1)}
\newcommand{\eg}[1]{{\bm E}(#1)}
\def\vf{{\mathbf f}}
\def\vg{{\mathbf g}}
\def\vv{{\mathbf v}}
\def\vfe{{\mathbf f}}
\def\vge{{\mathbf g}}
\def\Cn{{\mathbf C}}
\def\Rn{{\mathbf R}}
\def\Zn{{ \mathbf Z}}
\def\Nn{{ \mathbf N}}
\def\Pr{{\mathbf P}}
\def\ra{{\rightarrow}}
\def\mt{{\mapsto}}
\def\Tr{\mathop \mathrm {Tr}}
\def\tr{\mathop \mathrm {tr}}
\def\det{\mathop \mathrm {det}}
\def\dim{\mathop \mathrm {dim}}
\def\End{\mathop \mathrm {End}}
\def\span{\mathop \mathrm {span}}
\def\diag{\mathop \mathrm {diag}}
\def\ss{\subset}
\def\Gr{\mathop \mathrm {Gr} \nolimits}
\def\ds{\displaystyle}
\def\&{&{\hskip -20pt}}
\def\AA{\mathcal{A}}
\def\FF{\mathcal{F}}
\def\HH{\mathcal{H}}
\def\MM{\mathcal{M}}
\def\Cb{\mathbf{C}}
\def\Ib{\mathbf{I}}
\def\ab{\mathbf{a}}
\def\bb{\mathbf{b}}
\def\cb{\mathbf{c}}
\def\tb{\mathbf{t}}
\def\Pb{\mathbf{P}}
\def\Zb{\mathbf{Z}}
\def\0b{\mathbf{0}}
 \def\grg{\mathfrak{g}}
 \def\grl{\mathfrak{l}}
\def\grP{\mathfrak{P}}
\def\nchi{\hbox{\raise 2.5pt\hbox{$\chi$}}}
\begin{document}
\baselineskip 16pt
\begin{flushright}
CRM-3336 (2014)
\bigskip
\end{flushright}
\medskip
\begin{center}
\begin{Large}\fontfamily{cmss}
\fontsize{17pt}{27pt}
\selectfont
\textbf{ Finite Dimensional KP $\tau$-functions}
\break \smallskip
I. Finite Grassmannians
\end{Large}

\bigskip \bigskip
\begin{large}  F. Balogh$^{1, 2}$\footnote{e-mail: fbalogh@sissa.it } T. Fonseca$^{1, 3}$\footnote{e-mail:  tiago.dinis.da.fonseca@sapo.pt}  and J. Harnad$^{1, 4}$\footnote {e-mail: harnad@crm.umontreal.ca \\
Work of J.H. supported by the Natural Sciences and Engineering Research Council of Canada (NSERC) and the Fonds Qu\'ebecois de la recherche sur la nature et les technologies (FQRNT). }
 \end{large}\\
\bigskip
\begin{small}
$^{1}${\em Centre de recherches math\'ematiques,
Universit\'e de Montr\'eal\\ C.~P.~6128, succ. centre ville, Montr\'eal,
Qu\'ebec, CANADA, H3C 3J7}\\
\smallskip
$^{2}${\em SISSA, Via Bonomea 265, I-34136 Trieste, ITALY}\\
 \smallskip
 $^{3}${\em LAPTh
Laboratoire d'Annecy-le-Vieux de Physique Th\'eorique\\
9, chemin de Bellevue,  F-74941 Annecy-le-Vieux, FRANCE} \\
 \smallskip
$^{4}${\em Department of Mathematics and
Statistics, Concordia University\\ 1455 de Maisonneuve Blvd. West
Montreal, Quebec, CANADA, H3G 1M8} 
\smallskip
\end{small}
\end{center}
\bigskip
\date{\today}
% \begin{center}Compiled \@date\end{center}
%%%%%%%%%%%%%%%%  Abstract  %%%%%%%%%%%%%%%%
\begin{center}{\bf Abstract}
\end{center}
\smallskip

\begin{small}

We study $\tau$-functions of the KP hierarchy in terms of abelian group actions on finite dimensional Grassmannians,
 viewed as  subquotients of the  Hilbert space Grassmannians of Sato, Segal and Wilson.  A  determinantal formula of Gekhtman and Kasman involving exponentials of finite dimensional matrices is shown to follow naturally from such reductions. All  reduced flows of exponential  type generated by matrices with arbitrary nondegenerate Jordan forms, are derived, both in the Grassmannian setting and within the fermionic operator formalism.  A slightly more general determinantal formula involving resolvents of the matrices generating the flow,  valid on the big cell of the Grassmannian,  is also derived. An explicit expression is deduced for the  Pl\"ucker coordinates appearing as coefficients in the Schur function expansion of the $\tau$-function.  
 \end{small}
\bigskip \bigskip

%%%%%%%%%%%%%%%  Section 1.  %%%%%%%%%%%%%%%%%%%
\section{Introduction}

%%%%%%%%%%%%%%%  SubSection 1. 1 %%%%%%%%%%%%%%%%%%%

\subsection{$\tau$-functions and Hilbert space Grassmannians}

In the approach  to the KP integrable  hierarchy developed by  Sato \cite{S, SS} and Segal and Wilson~\cite{SW},  all solutions are expressed in terms of a $\tau$-function $\tau_W({\bf t})$  of  the  infinite set of KP flow parameters  ${\bf t}=(t_1, t_2, \dots)$,
determined uniquely by  the elements  $W \in  \Gr_{\HH_+}(\HH)$ of an infinite dimensional Grassmann manifold.
   These  are closed subspaces $W \ss \HH $ of a separable Hilbert space  $ \HH$  admitting a natural orthogonal splitting 
\be
\label{H_splitting}
   \HH = \HH_+\oplus \HH_-
\ee
into  the direct sum of two semi-infinite subspaces $\HH_{\pm}$. These are obtained by applying a bounded, invertible linear map 
$g \in GL(\HH)$ to  $\HH_+$
\be
W= g(\HH_+).
\ee  
and are {\em comparable} with the subspace $\HH_+$,  in the sense that the orthogonal projection map $\pi^\perp_+:W \ra \HH_+$ to $\HH_+$  is a Fredholm operator, while  orthogonal projection  $\pi^\perp_-:W \ra \HH_-$  to $\HH_-$  is compact. 
In~\cite{SW}, $\HH$ is taken as  the space $L^2(S^1)$ of square integrable functions $f(z)$ on the unit circle $|z|=1$ in the complex plane and $\HH_+$ and $\HH_-$ are the subspaces of functions with  only positive or negative Fourier components, respectively, with the orthonormal basis $\{e_i := z^{-i-1}\}_{i\in \Zb}$ consisting of the monomials in $z$. 

The Grassmannian $ \Gr_{\HH_+}(\HH)$  is viewed as a universal phase space, with dynamics defined by the action of an infinite abelian group  
\be
\Gamma_+ =\{ \gamma_+({\bf t}):= e^{\sum_{i=1}^\infty t_i \Lambda^i}\} 
\ee
consisting of  flows  generated by the shifts $\Lambda:e_i \mt e_{i-1}$  of the orthonormal basis elements:   
  \bea
\Gamma_+ \times Gr_{\HH_+}(\HH) &\&\ra Gr_{\HH_+}(\HH) \cr
(\gamma_+({\bf t)} ,W)  &\& \ra W({\bf t}):= \gamma_+({\bf t)} W. 
\eea
The flow parameters  ${\bf t}=(t_1, t_2, \dots)$  are thus additive coordinates on the abelian group $\Gamma_+$.  The element $W \in  \Gr_{\HH_+}(\HH)$  parametrizing the $\tau$-function is the initial point  $W({\bf 0})$ of the $\Gamma_+$ orbit $W({\bf t})$ and $\tau_W({\bf t})$ is defined as the  determinant of the orthogonal projection of $W({\bf t})$  to the subspace $\HH_+$:
\be
\tau_W({\bf t}) := \det \left(\pi^{\perp}: W({\bf t}) \ra \HH_+\right ).
\label{tauWdef}
\ee
relative to a suitably defined, admissible basis.

Conversely, knowing  the $\tau$-function is sufficient to determine $W$,   since its Pl\"ucker coordinates  $\pi_{\lambda, N}(W) $  are just the coefficients in the expansion of  
$\tau_W({\bf t})$ in a basis of Schur functions
\be
\tau_W({\bf t}) = \sum_{\lambda}\pi_{\lambda, N}(W) S_\lambda({\bf t}),
\ee
where  $(\lambda, N)$ denotes a pair consisting of an integer partition $\lambda$ and an integer $N\in \Zn$. The latter is the Fredholm index of the orthogonal projection map  $\pi^\perp_+:W \ra \HH_+$  to $\HH_+$, which determines the connected component of the Grassmannian $Gr_{\HH_+}(\HH)$, and is referred to in~\cite{SW} as the {\it virtual dimension} of $W$. As in
finite dimensions, the   Pl\"ucker coordinates  $\{\pi_{\lambda,N}(W)\}$, are not independent since, being maximal minors of
the matrix of homogeneous coordinates, they must satisfy the quadratic  Pl\"ucker relations which, in this setting, form an infinite set. As shown by Sato (\cite{S, SS}), these are equivalent to the Hirota bilinear differential relations for $\tau_W({\bf t})$, which in turn are equivalent to the equations of the KP hierarchy. 
.

\br {\bf Gauge transformations.}
We recall that an invertible  linear transformation  of the form
 \be
W \mapsto \gamma_-({\bf s})W, \quad \gamma_-({\bf s}):= e^{\sum_{i=1}^\infty s_i z^{-i} }
\label{gaugetransform}
\ee
has the effect of multiplying $\tau_W({\bf t})$ by the linear exponential factor $e^{-\sum_{i=1}^\infty  i s_i t_i}$.
\be
\tau_{\gamma_-({\bf s})W}({\bf t}) = e^{-\sum_{i=1}^\infty  i s_i t_i} \tau_W({\bf t}).
\ee
 Since the KP solutions are uniquely determined by the logarithmic derivatives  of the corresponding Baker-Akhiezer function $ \psi_W(z, {\bf t})$, given by the Sato formula 
\be
\psi_W(z, {\bf t}) = e^{\sum_{i=1}^\infty t_i z^i} { \tau_W ({\bf t} - [z^{-1}]) \over \tau_W({\bf t})}, \quad [z^{-1}] := \left({1\over z}, {1\over 2 z^2},  \dots, {1\over i z^i} , \dots\right), 
\ee 
and the transformation~(\ref{gaugetransform}) just multiplies  $ \psi_W(z, {\bf t})$  by the time independent factor $\gamma_-({\bf s})$, this has no effect upon the solutions. These are therefore referred to as {\it gauge transformations}. 
\er

%%%%%%%%% Subsection 1.2 %%%%%%%%%%%%%%%%%

\subsection{Gekhtman-Kasman finite determinantal formula}

Gekhtman and Kasman~\cite{GK1, GK2} found a very simple finite dimensional determinantal expression for a class of KP $\tau$-functions in which the entries have exponential dependence on the flow parameters. These are determined by  a triplet 
of matrices $(A,B,C)$ in which  $A$ and $C$ are  $n \times N$ with $n<N$ and have maximal rank while $B$ is 
a square  $N \times N$ matrix. The finite determinantal formula
\be
\tau^f_{(A,B,C)}({\bf t} )= \det\left(A e^{\sum_{i=1}^\infty t_i B^i} C^T\right)  
\label{tau_GK}
\ee
is easily shown to satisfy the Hirota bilinear relations of the KP hierarchy, 
provided the simple rank-$1$ condition 
\be
\label{rank1}
\rm{rank}(AB (A^\perp)^T) \leq 1
\ee
is satisfied, where $A^\perp$ is any maximal rank $(N-n) \times N$ matrix whose rows are orthogonal to those of $A$. 
That is, they span the 
\be
k:= N-n
\ee
dimensional orthogonal annihilator of the space spanned by the rows of $A$.
For the $\tau$-function not to vanish at the initial time ${\bf t} = \0b$, we must also require that $A C^T$ be nonsingular.

It will be useful to reformulate the rank-$1$ condition in a slightly different way. It is easy to see that~(\ref{rank1}) holds 
if and only if there exists an $n\times n$ matrix $D$ and two vectors $\vf \in \Cn^n$, $\vg \in \Cn^N$ such that the equation
\be
AB-D^TA = \vf \vg^T
\label{rank1_D}
\ee
is satisfied, i.e., that every row of $AB$ can be expressed as a linear combination of the rows of $A$ and the additional fixed vector $\vg^T$. 

The rank in~(\ref{rank1})  is $1$ provided $\vf\vg^T$ is nonzero and $\vg$ does not belong to the row space of $A$. Otherwise $AB=\tilde{D} A$ for some matrix $\tilde{D}$ and therefore
\be
\tau^f_{(A,B,C)}({\bf t} )= \det\left(A e^{\sum_{i=1}^\infty t_i B^i} C^T\right)  =e^{\sum_{i=1}^\infty t_i \mathrm{tr}(\tilde{D})^i}\det(AC^{T}),
\ee
which is gauge equivalent to a constant.

    A slightly more general class of finite determinantal KP $\tau$-functions of exponential type may be constructed as follows. For three positive integers $l, n, N$ with $l \le n$, $l \le N$, we may again choose the matrices $D\in \Mat{n}{n}$, 
  $B \in \Mat{N}{N}$ and a matrix $A \in \Mat{n}{N}$ satisfying the rank-1 condition (\ref{rank1_D})  for some pair of vectors  ${\vf} \in  \Cn^n$,  ${\vg}\in \Cn^N$. Then for any pair of rank-$l$ matrices 
  $F\in \Mat{l }{n}$,  $C\in \Mat{l}{N}$,  the following $l \times l$ determinant
\be
\tau^f_{(A,B,C,D,F)}({\bf t}):= \det(F e^{-\sum_{i=1}^\infty t_i (D^T)^i} A  e^{\sum_{j=1}^\infty t_jB^j}C^T)
\label{tau_gen_GK}
\ee
 is a KP $\tau$-function. The Gekhtman-Kasman formula  (\ref{tau_GK}), corresponds to the special case where $l=n$ and $F$ is an invertible matrix, within the linear exponential factor gauge term $e^{-\sum_{i=1}^\infty\tr(D^T)^i} $. A  simple direct proof that $\tau^f_{(A,B,C,D,F)}({\bf t})$ satisfies the Hirota bilinear relations if the rank-1 condition eq.~(\ref{rank1_D}) is satisfied is given in the Appendix.
 
   In the next subsection, some well-known examples expressible in the form (\ref{tau_GK}) will be recalled. These include: all polynomial $\tau$-functions, giving rise to rational solutions of the KP hierarchy;  all nondegenerate multisoliton solutions, which generally are of exponential  type; and  all  degenerations  of the latter, in particular those that give rise to solutions that are rational in the $t_1=x$ flow variable, with the locus of poles satisfying Calogero-Moser dynamics (cf.~\cite{Kr1, AMM, W}). In the notation of Segal and Wilson~\cite{SW}, the rational solutions appearing in these examples belong to the sub-Grassmannian $\Gr_0$, while the multisoliton solutions and their degenerations belong to the sub-Grassmannian $\Gr_1$. Their place within the general setting is  indicated in Section 1.4. 
   
   Section 2 gives a review of the fermionic approach to $\tau$-functions.  The general case of finite dimensional reductions leading to solutions of  exponential or quasipolynomial type will be derived in detail  in  Sec. 3,  both within the Grassmannian and the fermionic operator formalism.
Sec. 4 gives a solution of the ``inverse problem''; i.e.,  a reconstruction of the element $W(B,C,D)$
corresponding to any set $(A,B,C,D)$ satisfying the rank-1 condition. These are viewed as a specialization of a more general
 class of finite dimensional $\tau$-functions of exponential type, belonging to the big cell. The Pl\"ucker coordinates are explicitly determined for this general class, thereby determining the expansion of the $\tau$-function in a basis of Schur functions.

%%%%%%%%%%%%%%%  SubSection 1. 3 %%%%%%%%%%%%%%%%%%%

\subsection{Examples}

Henceforth, $ \Id{n}$ denotes the $n \times n$ identity matrix and $\Lambda_n$ the upper triangular shift matrix of size $n\times n$:
\be
\Lambda_n := 
   \pmatrix{ 0 & 1 & 0  & \cdots  & 0\cr
    0 & 0  & 1  & \cdots & 0 \cr
     \vdots  &   \vdots &  \vdots  & \ddots & \vdots \cr
       0 &  0  & 0 & \cdots & 1 \cr
         0 &  0   & 0 & \cdots & 0 }  \in \Mat{ n }{n}.
\ee
\begin{example}{\bf Rational solutions.}
\label{ex:rational}
\end{example}

 In formula~(\ref{tau_GK}), choose the following expressions for the matrices $A$ and $B$
\be
  A = \pmatrix{ \Id{n} & \0b }  \in \Mat{n }{N},  \qquad B =  \Lambda_N  \in \Mat{ N  }{N}, 
\ee
where $\0b$ denotes the $n \times k$ matrix whose entries are all $0$'s.  A basis for the orthogonal annihilator of the $n$-dimensional space spanned by the rows of $A$ is given by the columns of the $N \times k$ matrix
      \be
      \left(A^\perp\right)^T := \pmatrix{ \0b \cr \Id{k}}.
    \ee
We have
    \be
     AB \left(A^\perp\right)^T=
   \pmatrix{ 0 & 0  &\cdots & 0 \cr
    0 & 0  &  \cdots & 0  \cr
      \vdots &  \vdots  & \ddots  & \vdots  \cr
             1 &  0  &   \cdots & 0},
 \ee
 so the rank-$1$ condition~(\ref{rank1}) is satisfied. 
 In the version (\ref{rank1_D}) of the rank-$1$ condition, we have 
 \be
 D^T = \Lambda_n   \in \Mat{n }{n}, \quad  {\vfe}_a = \delta_{a,n}, \quad {\vge}_b = \delta_{b,n+1}, \quad a =1, \dots n,\ b = 1, \dots N.
 \ee
 The matrix $C$ can be any  $n\times N$ matrix of maximal rank, which may be viewed as the 
 homogeneous coordinates of an element $[C] \in \Gr_n(\Cn^N)$  of the Grassmannian of $n$-dimensional subspaces of $\Cn^N$. 
   For any partition $\lambda$ whose Young diagram fits into that of the rectangular partition $(k)^n$ we let $C_\lambda$ denote the $n \times n$ minor whose $i$th column is the $(\lambda_i -i +n+1)$th column of $C$. The corresponding  Pl\"ucker coordinate $\pi_\lambda(C)$ of $[C] \in \Gr_n(\Cn^N)$ is then
   \be
\pi_\lambda(C) = \det(C_\lambda). 
\ee
It follows from the Cauchy-Binet identity that the expansion of the $\tau$-function~(\ref{tau_GK}) 
in a basis of Schur functions $S_\lambda({\bf t})$ is given by
\be
\tau^f_{(A, \Lambda_N,C)}({\bf t}) = \sum_{\lambda \ss (k)^n} \pi_\lambda(C) S_\lambda({\bf t}).
\ee
This is the general form of KP $\tau$-functions that have a polynomial  dependence on all the
KP  flow parameters, which give rise to   solutions of the hierarchy that are rational in all these
variables.

\begin{example}{\bf KP solitons.}
\label{ex:soliton}
\end{example}

Now choose $B$ to be the diagonal matrix 
\be
\label{diagB}
B=B( \beta)   := \diag\{\beta_i\}_{i=1}^{N} \in \Mat{N }{N}
\ee
with distinct eigenvalues $\{\beta_i\}_{i=1\cdots N}$, and  $A$ to be the truncated 
$n \times N$ Vandermonde matrix
\be
\label{VdM_nN}
A_V (\beta):= V_{n, N} (\beta) =  \pmatrix{  \beta_1^{n-1} &  \beta_2^{n-1}  & \cdots &  \beta_N^{n-1} \cr
      \beta_1^{n-2}  &   \beta_2^{n-2}   & \cdots     & \beta_N^{n-2}  \cr
       \vdots  &   \vdots &  \ddots  & \vdots \cr
        \beta_1 &   \beta_2 & \cdots   & \beta_N\cr
        1 &  1   &  \cdots & 1}
\ee
Let
\be
p(z) := \det(z\Id{N}-B) = \prod_{j=1}^N (z -\beta_j)
\ee
 be the characteristic polynomial of $B$.
 It  follows from the Cauchy residue theorem applied to
\be
{1\over 2 \pi i} \oint_\infty {z^j dz \over p(z)}  = 0, \quad  {\rm for}\  j < N-1
      \label{Cauchy_beta_j}
\ee
 that the orthogonal complement of the subspace spanned by the rows of $A_V(\beta)$ is spanned by the rows of the $k \times N$ matrix
 \be
 A_V(\beta)^\perp=  \pmatrix{  \beta_1^{k-1} &  \beta_2^{k-1}  & \cdots &  \beta_N^{k-1} \cr
      \beta_1^{k-2}  &   \beta_2^{k-2}   & \cdots     & \beta_N^{k-2}  \cr
       \vdots  &   \vdots &  \ddots  & \vdots \cr
        \beta_1 &   \beta_2 & \cdots   & \beta_N\cr
        1 &  1   &  \cdots & 1} 
         \pmatrix{ {1\over p'(\beta_1)} & 0 & 0  & \cdots  & 0\cr
    0 & {1\over p'(\beta_2)}  & 0  & \cdots & 0 \cr
      0 &   0  & 0   & \cdots  & 0 \cr
       \vdots  &   \vdots &  \vdots & \ddots & \vdots \cr
         0 &  0   & 0  & \cdots &{1\over  p'(\beta_N)} }.
 \ee

The rank-$1$ condition
\be
A_V(\beta) B (A_V(\beta)^\perp)^T  = 
\pmatrix {1 & 0  & \cdots  & 0\cr
      0 &   0  &  \cdots  & 0 \cr
       \vdots  &   \vdots &  \ddots & \vdots \cr
         0 &  0     & \cdots &0}
  \label{Abeta}
         \ee
follows from the Cauchy residue theorem applied to
\be
{1\over 2 \pi i} \oint_\infty {z^{N-1} dz \over p(z)}  = 1,
\ee
together with~(\ref{Cauchy_beta_j}).
The version (\ref{rank1_D}) of the rank-$1$ condition is then satisfied, with
\be
D    = \Lambda_n,  \quad 
 \vf := \pmatrix{ 1 \cr 0 \cr \vdots \cr 0} \in \Cn^n,  \quad 
 \vg:=  \pmatrix{ \beta_1^n \cr \beta_2^n \cr \vdots \cr \beta_N^n} \in \Cn^N.
 \label{diagD}
\ee

The resulting expression
\bea
\tau^f_{(A_V(\beta), B(\beta), C)}({\bf t}) &\& = \det \left( A_V(\beta) e^{\sum_{i=1}^\infty t_i B^i(\beta)}C^T\right) 
\label{tauAbetaBbetaC}\\
&\&= \sum_{\lambda \ss (k)^n}\pi_\lambda(C) e^{T_\lambda(\beta, {\bf t})},
\label{exponential_sum}
\eea
where
\bea
T_\lambda(\beta, {\bf t}) &\&:= \sum_{i=1}^\infty t_i \sum_{j=1}^n \beta_{\ell_j}^i \\
\ell_j &\&  := \lambda_j - j + n + 1, 
\eea
is  the $\tau$-function for the general  rank $n$, $N$-soliton  solution of the KP hierarchy. 
The second equality (\ref{exponential_sum}) follows  from the Cauchy-Binet theorem
applied to the product of the $n \times N$ and $N\times n$ matrices appearing in the determinant. 
The $\tau$-function is real for real  flow parameters ${\bf t} =(t_1, t_2 \dots)$  if the $\beta_i$'s  and the  matrix $C \in \Mat{n }{N}$ are real.
It is nonvanishing, giving rise to nonsingular solutions,  provided the $\beta_i$'s are strictly decreasing and $C$  has only nonnegative Pl\"ucker coordinates, i.e., provided the space spanned by the rows of $C$ belongs to the nonnegative Grassmannian  $\Gr^+_n(\Rn^N)$~\cite{Ko, KoW1, KoW2}.

Another variant of the determinantal form of the above solution may be obtained by choosing $A$ as the Cauchy matrix:
\be
  \label{Abetadelta}
A= A^0_C(\beta, \delta)   := 
\pmatrix {{1\over \beta_1 - \delta_1}  & {1\over \beta_2 - \delta_1} &  \cdots  & {1\over \beta_N - \delta_1}\cr
    {1\over \beta_1 - \delta_2} &{1\over \beta_2 - \delta_2}&  \cdots & {1\over \beta_N - \delta_2} \cr
            \vdots  &   \vdots &  \ddots  & \vdots  \cr 
    {1\over \beta_1 - \delta_n} &  {1\over \beta_2 - \delta_n}   & \cdots &{1\over \beta_N - \delta_n}}.
 \ee
 In this case, the orthogonal annihilator is spanned by the rows of
  \be
  (A^0_C (\beta,\delta))^\perp
  =
  \pmatrix {{r(\beta_1)\over (\beta_1 - \delta_{n+1}) p'(\beta_1)}  & {r(\beta_2)\over (\beta_2 - \delta_{n+1}) p'(\beta_2)}&  \cdots  &{r(\beta_N)\over (\beta_N - \delta_{n+1}) p'(\beta_N)}\cr
    {r(\beta_1)\over (\beta_1 - \delta_{n+2}) p'(\beta_1)} &{r(\beta_2)\over (\beta_2 - \delta_{n+2}) p'(\beta_2)}&  \cdots &{r(\beta_N)\over (\beta_N - \delta_{n+2}) p'(\beta_N)} \cr
            \vdots  &   \vdots &  \ddots  & \vdots  \cr 
    {r(\beta_1)\over (\beta_1 - \delta_{N}) p'(\beta_1)} & {r(\beta_2)\over (\beta_2 - \delta_N )  p'(\beta_2)}& \cdots &{r(\beta_N)\over (\beta_N - \delta_N) p'(\beta_N)}}.
  \label{Abetadeltaperp}
  \ee
  where $\{\delta_{n+1}, \dots \delta_N\}$ is any further set of distinct constants,
  unequal to the previous $\delta_i$'s or  $\beta_i$'s and
  \be
  r(z) := \prod_{i=1}^N (z-\delta_i).
  \ee
This follows from Cauchy's theorem applied to
\be
{1\over 2 \pi i} \oint_\infty {r(z) dz \over {(z-\delta_j) (z-\delta_k)p(z)}}  = 0,
 \quad  {\rm for}\  1\le j \le n < k \le N.
      \label{Cauchy_delta_jk}
\ee
It also follows from Cauchy's theorem applied to
\be
{1\over 2 \pi i} \oint_\infty { z\, r(z) dz \over {(z-\delta_j) (z-\delta_k)p(z)}}  = 1
      \label{Cauchy_delta_jk}
      \ee
      that
\be
A^0_C(\beta,\delta) B((A^0_C(\beta,\delta))^\perp)^T  = 
\pmatrix {1 & 1 & 1  & \cdots  & 1\cr
    1 &1& 1  & \cdots & 1 \cr
      1 &   1  & 1   & \cdots  & 1 \cr
       \vdots  &   \vdots &  \vdots & \ddots & \vdots \cr
         1 &  1   & 1  & \cdots &1}
  \label{Adelta_beta}
         \ee
and hence has rank $1$.

The version (\ref{rank1_D}) of the rank-$1$ condition 
\be
A^0_C(\beta,\delta)B(\beta) - D^TA^0_C(\beta,\delta) = \vf \vg^T
\ee
is satisfied with
\bea
\label{diagD}
D = D (\delta)  &\&  := \diag\{\delta_i\}_{i=1}^{n} \in \Mat{n }{n} 
         \eea
         and
         \be
 \vf := \pmatrix{ 1 \cr \vdots \cr 1} \in \Cn^n,  \quad \vg :=  \pmatrix{ 1 \cr \vdots \cr 1} \in \Cn^N.
\ee
  In fact, the corresponding $\tau$-function
  \be
\tau^f_{(A_C^0(\beta,\delta), B(\beta), C)}({\bf t})= \det \left( A^0_C(\beta,\delta) 
e^{\sum_{i=1}^\infty t_i (B(\beta))^i} C^T\right)
\label{tauAdelta_betaBbetaC}
\ee
is of the same class as~(\ref{tauAbetaBbetaC}), and differs from it only by a slight modification of the choice of the matrix $C$. It
therefore simply represents another parametrization of the multisoliton solutions, whatever the choice of the constants $\{\delta_i\}_{i=1, \dots n}$.

To see this, define the  $n\times n$ matrix
\be
K_{ab} (\delta) = {(\delta_b)^{n-a} \over r'(\delta_b)}, \quad 1 \le a, b \le n.
\ee
By evaluating the integral
\be
(\beta_j)^a = {1\over 2\pi i}\oint_{|z|=1}  {z^{n-a} \over z-\beta_j} dz
\ee
using the Lagrange interpolation formula
\be
z^{n-a} = \sum_{b=1}^n {r(z) \over z-\delta_j} {(\delta_b)^{n-a}\over r'(\delta_b)}, 
\ee
we obtain the matrix product identity
\be 
A_V(\beta) = K(\delta)A^0_C(\beta,\delta) r(B(\beta)),
\label{KAR}
\ee

It follows from (\ref{KAR}) that the $\tau$-function of eq.~(\ref{tauAbetaBbetaC}) can be equivalently written as
\be
\tau^f_{(A_V(\beta), B(\beta), C)}({\bf t}) =\kappa(\delta) \tau^f_{(A_C^0(\beta,\delta), B, r(B(\beta))^{T} C)}({\bf t})
\ee
where
\be
\kappa(\delta) :=  \det(K(\delta)), 
\ee
since $r(B(\beta))$ commutes with $B(\beta)$.  Thus $\tau^f_{(A_V(\beta), B(\beta), C)}({\bf t})$
is just a multiple of  $\tau^f_{(A_C^0(\beta,\delta), B, C)}({\bf t})$ with
$C$ replaced by $r(B)^{T} C$.
Since the choice of $C$ is arbitrary, the two sets of $\tau$-functions~(\ref{tauAbetaBbetaC}) and~(\ref{tauAdelta_betaBbetaC}) coincide. To assure the reality and positivity condition however, it is necessary that the $\delta_i$'s be real, and that all the entries of the diagonal matrix $r(B)$ be of the same sign. This will be satisfied if the $\delta_i$'s are chosen to be less than all the $\beta_i$'s:
\be
\delta_i < \beta_j , \quad  i= 1, \dots n, \ j=1, \dots N.
\ee
\break 
\begin{example}{\bf Generic case:  degeneration of KP solitons.}\label{ex:KPdegenerations} 
\end{example}
 \nobreak

More generally,  both $B\in \Mat{N }{N}$ and $D \in \Mat{n }{n}$  may have any Jordan structure. 
Without loss of generality, we may choose them to be in  standard upper triangular Jordan normal form,  
with distinct  eigenvalues $\{\beta_j\}_{j=1, \dots , M}$ for $B$  
 \be
\label{B_jordan}
 B=
 \pmatrix{ J_{N_1}(\beta_1) & 0 & 0 & 0 \cr 0 & J_{N_2}(\beta_2) & 0 & 0 \cr 0 & 0 & \ddots & 0 \cr 0 & 0 & 0 & J_{N_M}(\beta_M)}
\ee
where
\be 
J_{N_j} (\beta_j) = \beta_j\, \Id{N_j} + \Lambda_{N_j}, \quad  j=1, \dots, M 
\ee 
denotes a Jordan block  of dimension  $\{N_j\}$,  and  eigenvalue $\beta_j$ and $D$ similarly is of the form
\be
\label{D_jordan}
 D =
 \pmatrix{ J_{n_1}(\delta_1) & 0 & 0 & 0 \cr 0 & J_{n_2}(\delta_2) & 0 & 0 \cr 0 & 0 & \ddots & 0 \cr 0 & 0 & 0 & J_{n_m}(\delta_m)}
\ee
with Jordan blocks of dimension  $\{n_j\}$ and distinct eigenvalues $\{\delta_i\}_{i=1, \dots , m}$, also chosen to be distinct from the  $\beta_j$'s. 
The equation
 \be
 AB - D^T A = \vf\vg^T
\label{rank1D}
  \ee
 then has a unique solution for any given pair of nonvanishing vectors $\vf\in \Cn^n$, $\vg\in \Cn^N$.

We may always multiply on the right  by an element of the stabilizer $G_B \ss GL(N)$ of $B$ under conjugation or
on the left by an element of the stabilizer $G_D \ss GL(n)$ of $D^T$ and obtain a new
solution that gives on equivalent class of $\tau$-functions.
   For such general $B$ and $D$,  the solution $A(B,D)$ to the rank-$1$ equation (\ref{rank1D}) for a suitable choice
  of ${\vf}$ and ${\vg}$, is given in  eq.~(\ref{ABD})) of  in  Sec. 3.2 and  Proposition \ref{prop_AB_rank1}, with $r(z)$ and $p(z)$ replaced by the characteristic polynomials $r_D(z)$ and $r_B(z)$ of the matrices $D$ and $B$ respectively. 
Denoting by $A(B) := A(B, \Lambda_n)$  the special case when $D$ is chosen as the shift matrix $\Lambda_n$, it follows, as in the above special case, that  the $\tau$-function 
  \be
  \tau^f_{(A(B,D), B, C)}({\bf t))} = \det( A(B,D) e^{\sum_{i=1}^\infty t_i B^i} C^T)
  \ee
 determined by the triple $(A(B,D), B, C)$, as given by Theorem \ref{theorem_tau_BCD}, coincides with $\tau^f_{(A(B), B, C)}({\bf t})$ within a multiplicative constant. Therefore, the choice of $D$  in the form of the  rank-$1$ condition~(\ref{rank1D}) does not affect the resulting class of solutions.

The next example is a special case of  nondiagonal $B$,  in which $N=2n$, and 
$B$ consists of $n$ distinct $2\times 2$ Jordan blocks, with a special choice of $C$, 
which gives rise to pole dynamics of the Calogero-Moser type.

\begin{example}{ Calogero-Moser pole dynamics.~(\cite{Kr1, AMM, W})}
\label{ex:CM}
\end{example}

  Choose $B$ to be a $2n \times 2n$ matrix of the form
\be
B = B_Z:= \pmatrix{ Z  & \Id{n} \cr
\0b & Z}
\ee
where $Z$ is the diagonal $n\times n$ matrix
\be
Z=\diag\{\beta_i\}_{i=1}^{n}
\ee
with distinct eigenvalues $\{\beta_i\}_{i=1\cdots N}$.
For $A$, choose the modified truncated Vandermonde matrix
\be
A_{V'(}\beta):= \pmatrix { V_{n, n} (\beta) &  V'_{n, n} (\beta)}
\ee
where $V_{n,n}(\beta)$ is defined as in (\ref{VdM_nN}) and
\be
V'_{n, n} (\beta) :=  \pmatrix{ (n-1) \beta_1^{n-2} &  (n-1)\beta_2^{n-2}  & \cdots &  (n-1)\beta_n^{n-2} \cr
 (n-2)  \beta_1^{n-3}  &   (n-2)  \beta_2^{n-3}   & \cdots     &  (n-2) \beta_n^{n-3}  \cr
       \vdots  &   \vdots &  \ddots  & \vdots \cr
       1 &   1 & \cdots   &1 \cr
        0 &  0   &  \cdots & 0},
\ee
  and take $C$ to be of the special form
  \be
  C^{T} = C^{T}_\Xi :=\pmatrix{ \Id{n}\cr \Xi},
  \ee
  where
 \be
  \Xi = \diag\{\xi_i\}_{i=1}^{n}.
 \ee
 The rank-$1$ condition is easily verified by applying the Cauchy theorem to
 \be
{1\over 2 \pi i} \oint_\infty {z^j dz \over p^2(z)} = \delta_{j,2n}
 \ee
where
\be
\det\left( z\,\Id{n} - B_Z\right) = p^2(z) := \prod_{i=1}^n (z-\beta_i)^2.
\ee
 
The resulting $\tau$-function is of the form
\be
\tau^f_{(A_{V'}(\beta), B_Z, C_\Xi)}({\bf t} )= e^{\sum_{i=1}^\infty t_i Z^i} \det (V_{n,n}(\beta)) \det \left( X_0 + \sum_{i=1}^\infty i t_i Z^{i-1} \Xi \right)
\ee
where
\be
X_0 = \Id{n} + V^{-1}_{n, n} (\beta) V'_{n, n} (\beta) \Xi, 
\ee
which is gauge equivalent to the $\tau$-function for rational solutions of the KP hierarchy in which the pole dynamics are determined by the Calogero-Moser $n$-particle system~\cite{Kr1, AMM, W}. More general solutions in this class may be obtained by allowing the matrix $Z$ to have general Jordan normal form, and $\Xi$  to be an element of its centralizer.

%%%%%%%%%%%%%%%  SubSection 1.4  %%%%%%%%%%%%%%%%%%%

\subsection{Finite dimensional reductions of Grassmannians}

The reduction to finite dimensional  systems may be viewed as a two-step process: first the identification of a fixed subspace $W_2 \ss \HH$, invariant under the flows,  that contains $W$ as a finite codimensional subspace. Second,  a quotient by another  fixed finite 
codimensional subspace $W_1 \ss W_2$, also invariant under the flows, that is contained in $W$. In the case where these subspaces are chosen as
\be
W_1 = r(z) \HH_+, \quad W_2 =  {r(z) \over p(z)} \HH_+
\ee
for a pair of polynomials $r(z)$,  $p(z)$ of degrees $n$ and $N$,  respectively, with the roots of both inside the unit circle, we obtain (within gauge equivalence) precisely the Grassmannian $\Gr_1$ of~\cite{SW}.
The corresponding pair of matrices $B$ and $D$ are those whose eigenvalues coincide with the roots of $p(z)$ and $r(z)$, respectively,  with Jordan blocks of dimension equal to the degree of these roots. This determines, up to conjugation,  a unique pair of regular elements, 
$B\in \grg\grl(N)$, $D\in \grg\grl(n)$  whose characteristic polynomials are $p(z)$ and $r(z)$, respectively. Within gauge equivalence, there is no loss of generality in assuming that the roots of $r(z)$ and those of $p(z)$ are mutually distinct. 

The finite dimensional reduction may be viewed as a subquotient. Projecting $W \ra W/{W_1}$ gives an element of the finite dimensional Grassmannian $\Gr_{n}(W_2/W_1)$ of $n$-dimensional subspaces of $W_2/W_1$, and $W_2/W_1$ can be identified with $\Cn^N$ through the choice of a suitable basis. The resulting flows can  be expressed in terms of the Pl\"ucker coordinates of $W/W_1 \ss W_2/W_1$, and the corresponding KP $\tau$-function becomes a finite determinant having linear exponential or quasi polynomial dependence on the flow variables.  The generator of the reduced flow is  a matrix  $B$ that may have any nondegenerate Jordan normal form, which is uniquely determined by the choice of basis for $W_2/W_1$.

In particular, $B$ could be nilpotent,  consisting of a single Jordan block with zero eigenvalue;  i.e.,  the $N \times N$ ``shift'' matrix  $\Lambda_N$, whose characteristic polynomial is the monomial $p(z)= z^N$. This naturally gives rise to the polynomial $\tau$-functions of Example \ref{ex:rational} above.  Alternatively, choosing $p(z)$ as the monic polynomial with distinct roots  $\{\beta_i\}_{i=1, \dots N}$ results in  flows generated by the finite nondegenerate diagonal matrix with these eigenvalues, as in Example \ref{ex:soliton}. 
The various other cases can be obtained by allowing multiple zeros in $p(z)$, which give rise to reduced flow generators $B$ having all possible Jordan normal forms with distinct eigenvalues.  Special cases of such  degenerations of exponential or trigonometric  multisoliton solutions may be used to embed certain finite dimensional integrable systems, such as the Calogero-Moser system of Example \ref{ex:CM}, as the dynamics of poles of rational solutions of the KP hierarchy. 

  The purpose of this paper is to provide a geometrical construction of such finite dimensional $\tau$-functions through the process of reduction from the infinite case.  We use subquotients to define families of embeddings of finite Grassmannians into infinite ones and deduce thereby the triplets  $(A,B,C)$. The matrix $A$ is determined by the choice of the fixed subspace $W_1$,  and the basis for a complement $W_1^c\ss W_2$ of $W_1$ in $W_2$. The latter also
 determines the generating matrix $B$ of the reduced flows. The element $W$, viewed as an extension of $W_1$ by a subspace of $W_1^c$ determines the matrix $C$, and conversely.  The projection  $W \ra W/W_1 \ss W_2/W_1$  allows us to identify $C$ as the homogeneous coordinates of an initial point in the finite Grassmannian $\Gr_n(W_2/W_1)$ which is identified, through the choice of  basis for $W_1^c$, as a subspace of $\Cn^N$. 
\br
There are other instances of $\tau$-functions expressible as finite dimensional determinants, in which there is no known interpretation in terms of finite dimensional Grassmannians. For instance, the partition function in random matrix models, in which the underlying conjugation invariant  measure is subject to linear exponential deformations, is known to be a KP $\tau$-function that admits a finite dimensional determinantal representation in terms of the Hankel matrix formed from the moments.  However, this does not seem to fit within the finite dimensional reduction framework discussed here, since the dependence upon the flow parameters  is not exponential or quasipolynomial.

Other cases, such as the solutions of the KP-hierarchy expressible in terms of Riemann $\theta$-functions on the Jacobi varieties of an algebraic curve, also involve a reduction to a finite number of degrees of freedom~\cite{Kr2, Du}.  However, the resulting $\tau$-function is not  known to be expressible as a finite dimensional determinant.
\er

   %%%%%%%%%%%%%%% Section 2. %%%%%%%%%%%%%%%%%%%
\section{KP $\tau$-functions}

   %%%%%%%%%%%%%%% Section 2. 1%%%%%%%%%%%%%%%%%%%

\subsection{Grassmannians and fermionic Fock space}

Following Segal and Wilson~\cite{SW}, the model for the Hilbert space $\HH$ we use is the space $L^2(S^1)$ of square integrable functions on the unit circle $S^1=\{z \in \Cn, |z|=1\} $ in the complex $z$-plane. This splits into the direct sum
\be
\HH = \HH_+ \oplus \HH_-
\ee
of subspaces $\HH_+$ and $\HH_-$, consisting, respectively, of  functions admitting a holomorphic continuation
to the interior and exterior of the unit circle, with the latter vanishing at $\infty$. These may be viewed as 
completions of the span of the positive and negative monomials in $z$ 
\be
\HH_+  =\overline{ \span \{z^i\}_{i \in \Nn}},  \quad \HH_-  = 
\overline{\span \{z^{-i}\}_{i \in \Nn^+}}.
\ee
   For consistency with other conventions, it is convenient to label the monomial basis as
   \be
   e_i := z^{-i-1}, \quad i \in \Zn.
    \label{orthobasis}
   \ee
   Then  $\HH_+ $ and $\HH_- $ are mutually orthogonal with respect to
the complex inner product $(\ , \ )$  in which these are orthonormal
   \be
   (e_i, e_j) = \delta_{ij}.
   \ee

The elements of the Grassmannian $\Gr_{\HH_+}(\HH)$ are subspaces $W\ss \HH$ that are comparable with $\HH_+$,  in the sense that  orthogonal projection to $\HH_+$
\be 
\pi^{\perp}_+: W \ra \HH_+
\ee
along $\HH_-$ is a Fredholm operator, while projection to $\HH_-$ along $\HH_+$ 
\be 
\pi^{\perp}_-: W \ra \HH_-
\ee
is compact. The Fredholm index $N$ of the projection map $\pi^{\perp}_+: W \ra \HH_+$ is called the ``virtual dimension'' of $W$. The subspace
\be
\HH_+^N := z^{-N}   \HH_+ \ss \HH,
\ee
in particular, has virtual dimension $N$. The connected components of $\Gr_{\HH_+}(\HH)$ consist of those $W\in \Gr_{\HH_+}(\HH)$ with virtual dimension $N \in \Zn$. These  may be viewed as the orbit $\Gr_{\HH_+^N}(\HH)$ of $\HH_+^N$ under the identity component  $GL_0(\HH)$ of the restricted infinite dimensional Lie group $GL_{res}(\HH)$ of invertible linear transformations of $\HH$ having a well-defined determinant and preserving the properties defining the elements of  $\Gr_{\HH_+}(\HH)$. (See~\cite{SW} for  more detailed definitions.)

%%%%%%%%%%%%%%%  SubSection 2.2 %%%%%%%%%%%%%%%%%%%

\subsection{Fermionic Fock space and the Pl\"ucker embedding}

The fermionic Fock space $\FF$  is defined as the semi-infinite exterior space
    \be
   \FF = \bigwedge \HH
   \ee
 spanned by an orthonormal basis   $\{\vert \lambda; N\rangle\}$ consisting of semi-infinite wedge products 
   \be
   |\lambda ; N\rangle := e_{l_1} \wedge e_{l_2} \wedge  \cdots,
   \ee
 where $\{l_1, l_2, \dots \}$ is a strictly decreasing sequence of integers  $l_1> l_2 > \cdots$ , eventually stabilizing on  a consecutive sequence of decreasing integers. The partition 
    \be
   \lambda :=\{ \lambda_1 \ge \lambda_2 \ge \lambda_{\ell(\lambda)} >0\}
      \label{partition}
   \ee
    of length $\ell(\lambda)$, is related to the sequence by
   \be
   l_i := \lambda_i -i + N
   \ee
   (with the convention that $\lambda_i =0$ for $i> \ell(\lambda)$).
  The integer  $N$ is  the largest one below the ${1\over 2}$-integer point $\nu \in \Zn + {1\over 2}$ on the real line such that, 
   if all integer sites $\{l_i\}$ are viewed as ``occupied'' and all others as unoccupied, there are 
   as many unoccupied sites to the left of $\nu$ as there are occupied sites to the right. 
  The fermionic Fock space $\FF$ thus admits a decomposition
   \be
   \FF =\bigoplus _{N\in \Zn} \ \FF_N
   \ee
  as an orthogonal direct sum of the subspaces $\FF_N$ spanned by basis elements of charge $N$.
      The basis element
   \be
   |0; N\rangle = e_{N-1} \wedge e_{N-2} \wedge \cdots := \vert N \rangle \in \FF_N
   \ee
   is referred to as the charge $N$ vacuum state, and denoted simply as $| N\rangle$.
    (The reason for the seemingly reversed sign convention in (\ref{orthobasis}) is that,
under the Pl\"ucker map, the element $\HH_+\in \Gr_{\HH_+}(\HH)$ should correspond to the vacuum
element $|0\rangle$, which is the ``Dirac sea'', in which all the {\it negative} integer lattice sites
are occupied.)

  As in finite dimensions, the Grassmannian $\Gr_{\HH_+}(\HH)$, and each of its connected components  $\Gr_{\HH^N_+}(\HH)$, may be viewed as infinite dimensional analogs of algebraic varieties,   since they can be embedded into the projectivization $\Pr(\FF)$  by the Pl\"ucker map
       \bea
\grP\grl: \Gr_{\HH_+}(\HH) &\& \ra \Pr(\FF) \cr
 \grP\grl: \span \{w_1, w_2, \dots \}&\& \mapsto [w_1 \wedge w_2 \wedge \cdots ],
 \label{Pluckermap}
 \eea
  where $[ \cdots ]$ denotes the projective class, and the image $\grP\grl \left( \Gr_{\HH_+}(\HH)\right)\ss \Pr(\FF)$ is
  the intersection of an infinite number of quadrics, defined by the Pl\"ucker relations.
 It follows from~(\ref{Pluckermap}) that the image $\grP\grl \left( \Gr_{\HH_+}(\HH)\right)\ss \Pr(\FF)$
  consists of all decomposable elements in $\Pr(\FF)$,  while the image of the virtual dimension $N$ component $Gr_{\HH^N_+}(\HH)$
is  in $\Pr(\FF_N)$.
 In particular, the  image   $\grP\grl(\HH_+)$ of  the element $\HH_+ \in \Gr_{\HH_+}(\HH_+)$  
 is (the projectivization of) the vacuum  element $|0\rangle$. 
 \be
 \grP\grl: \HH_+ \mapsto [|0\rangle]
 \ee
 
  From the definition of the Pl\"ucker map and the scalar product on $\FF$, it follows that  the Pl\"ucker coordinates 
        \be
        \pi_{\lambda,N}(W) := \langle \lambda;N \vert \grP\grl(W)\rangle
        \ee
are determinants of the semi-infinite matrices that appear as maximal minors of the matrix of homogeneous coordinates  of $W \in Gr_{\HH_+}(\HH)$  relative to the given orthonormal basis.
In what follows, it will be sufficient to consider only elements $W\in\Gr_{\HH_+}(\HH)$ that have virtual dimension $0$ and hence, unless otherwise needed, the index $N$ labelling the Pl\"ucker coordinate will be understood to be $0$, the basis states $|\lambda; 0\rangle$ denoted  simply as $|\lambda \rangle$ and the Pl\"ucker coordinates  as
 \be
        \pi_{\lambda}(W) := \langle \lambda \vert \grP\grl(W)\rangle
        \ee
The determinantal formula~(\ref{tauWdef}) defining the $\tau$-function may be interpreted as the Pl\"ucker coordinate  of the element $W({\bf t})$  corresponding to the trivial partition, 
\be
\tau_W({\bf t}) = \pi_{0} (W({\bf t})).
\ee

Relative to the  basis $\{e_i\}_{i\in\Zn}$, we have the standard cellular decomposition, in which the ``big cell'' consists of all elements $W\in\Gr_{\HH_+}(\HH)$ that can be represented as the graph of a linear map $\AA: \HH_+ \ra \HH_-$. A basis for such an element may be chosen to consist of elements of the form
\be
w_i := e_{-i-1} + \sum_{j=0}^\infty \AA_{ij}e_j, \quad i \in \Nn
\label{affine_coords}
\ee
where  the elements $\{\AA_{ij}\}_{i, j \in \Nn}$ of the semi-infinite matrix $\AA$  are standard affine coordinates on the big cell. It follows from the definition of the Pl\"ucker coordinates
that these coincide, within a sign, with the Pl\"ucker coordinates corresponding to hook partitions $(i+1, (1)^j)$ which, in Frobenius notation are denoted $(i | j)$
\be
\AA_{ij}= (-1)^b \pi_{(i|j)} (W)
\ee
More generally, denoting a partition $\lambda$ in Frobenius notation as $(a_1, \cdots , a_r | b_1, \cdots , b_r)$, where $a_i$ is the number of boxes to the right of the $(i,i)$ diagonal element of the Young diagram and $b_i$ the number of elements beneath it, 
the Pl\"ucker coordinate $\pi_{(a_1, a_2, \cdots a_k | b_1, b_2, \cdots b_k)}$ may be expressed, on the  coordinate neighborhood of the big cell, in terms of those for the hook partitions through a generalized Giambelli formula~\cite{HE}:
\be
\pi_{(a_1, a_2, \cdots a_k | b_1, b_2, \cdots b_k)} = (-1)^{\sum_{i=1}^k b_i} \det(\AA_{a_i, b_j})
\label{gen_giambelli}
\ee
        
 The image  $\grP\grl(W({\bf t}))$  of the $\Gamma_+$ orbit $W({\bf t})$ may be simply expressed  in terms of fermionic creation and annihilation operators $\{\psi_i, \psi^\dag_i\}_{i\in \Zn}$  defined, respectively,  as exterior products with the basis elements $\{e_i\}$, and interior products with the dual basis element $\{\tilde{e}^i\}$.
   \be
   \psi_i := e_i \wedge, \qquad \psi^\dag_i :=\tilde{e}^i\lrcorner, \qquad  i \in \Zn.
   \label{psi_i_psidag_i}
   \ee
      These satisfy the usual anticommmutation relations
\be
[\psi_i, \psi_j]_+ = [\psi^\dag_i, \psi^\dag_j]_+ =0, \qquad  [\psi_i, \psi^\dag_j]_+ = \delta_{ij}.
\label{anticomm}
\ee
				and span the subspace of linear elements of the  Clifford algebra associated to the  group of orthogonal transformations $O(\HH+\HH^*, Q)$ preserving the natural quadratic form 
\be
Q(X, \nu)) = 2\nu(X) , \quad X \in \HH, \quad \nu \in \HH^*
\ee
on the sum $\HH + \HH^*$ of the underlying Hilbert space and its analytic dual. We also have the fermionic field operator  $\psi(z)$, and its dual $ \psi^\dag(z)$, 
\be
 \psi(z) := \sum_{i\in \Zn} \psi_i z^i, \quad  \psi^\dag(z) := \sum_{i\in \Zn} \psi^\dag_i z^{-i-1}, 
\ee
which may be viewed as generating functions for the $\psi_i$'s and $\psi_i^\dag$'s.

The subgroup $GL(\HH) \ss O(\HH+\HH^*, Q)$ of general linear transformations $GL(\HH) $, and its abelian subgroup $\Gamma_+ \ss GL(\HH)$, 
generating the commuting KP flows  act naturally on the exterior space through the fermionic representation. 
\be
g:= e^\xi \mapsto \hat{g} := e^{\sum_{i,j \in \Zn}\xi _{ij}\psi_i \psi^\dag_j}
\ee
where $\xi_{ij}$ are the matrix components of the Lie algebra element $\AA\in \End({\HH})$ in the $\{e_i\}$ basis. 
In this notation the fermionic representation of the elements $\gamma_+({\bf t}) \in \Gamma_+$ defining the KP flows is
\be
\hat{\gamma}_+({\bf t}) = e^{\sum_{i\in \Zn} t_i J_i} 
\ee
where 
\be 
J_i := \sum_{k\in \Zn} \psi_k \psi^\dag_{k+i}, \quad i \in \Nn^+.
\ee
are the generators of the ``shift'' flows in the fermionic representation (which are Fourier components of
 the {\it current} operator).
 
It follows that the KP $\tau$-function $\tau_W({\bf t})$ may equivalently be represented as the vacuum state expectation value of a product of such group elements
\be
   \tau_W({\bf t}, N)  =\langle N |\hat{ \gamma}_+({\bf t}) \hat{g} |N \rangle,
   \label{VEV_tau}
\ee
where $g \in GL_0(\HH)$ is any element that takes $\HH_+^N$ into $W$
    \be
    W=g(\HH_+),
    \ee
    and  $N$ is the Fredholm index of the projection map $\pi^\perp_+:W \ra \HH_+$. 
Eq.~(\ref{VEV_tau}) may be understood as defining  the $| N \rangle$ component of $\grP\grl(W({\bf t}))$ (which is nonzero only if $W$ has virtual dimension $N$); i.e., the Pl\"ucker coordinate $\pi_{0,N}(W({\bf t}))$ of the moving point $W({\bf t})$ under the KP flows, and is thus given, up to projectivization, by the semi-infinite determinant~(\ref{tauWdef}).

More generally, $\hat{g}$ need not be a $GL(\HH)$ group element; it may be any element of the Clifford algebra satisfying the bilinear relation
\be
\left[\sum_{i\in \Zn} \psi_i \otimes \psi_i^\dag, \hat{g} \otimes \hat{g}\right] = 0
\label{bilinear}
\ee
acting upon $\FF \otimes \FF$. Eq.~(\ref{bilinear}) is equivalent to the Pl\"ucker relations and guarantees that $\hat{g}|0\rangle$ is a decomposable element,  as  in~(\ref{Pluckermap}).
In particular, (\ref{bilinear}) is satisfied  by any product of  pure creation or annihilation operators of the form
\be
w_a := \sum_{i \in \Zn} w_{ai} \psi_i, \qquad  v^\dag_a = \sum_{i \in \Zn} v_{ai} \psi^\dag_i.
\ee
More generally, we have the following useful result.

\begin{lemma}
 For any number of creation and annihilation operators $\{w_a, v_a^\dag\}_{a=1 \dots n}$, 
 if an element $\hat{g}$ satisfies the bilinear identity~(\ref{bilinear}), so does the product
\be
\left(\prod_{a=1}^n w_a\right) \left(\prod_{b=1}^n v^\dag_b\right) \hat{g}, 
\label{bilin_prod}
\ee
and hence
\be
\tau_{({\bf w}, {\bf v}, g)}:= \langle N \vert \hat{\gamma}_+({\bf t})\left(\prod_{a=1}^n w_a\right) \left(\ \prod_{b=1}^n v^\dag_b\right)  \hat{g} \vert N\rangle
\ee
is a KP $\tau$-function.
\end{lemma}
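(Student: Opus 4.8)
The plan is to reduce the whole statement to one algebraic fact: left multiplication by a single fermion --- any creation operator $w_a$ or annihilation operator $v_b^\dag$ --- preserves the bilinear identity~(\ref{bilinear}). Granting this, the lemma follows by iterating it $2n$ times: the product $\left(\prod_{a=1}^n w_a\right)\left(\prod_{b=1}^n v_b^\dag\right)\hat g$ is obtained from $\hat g$ by $2n$ such left multiplications and has the same charge as $\hat g$ (the $n$ raising and $n$ lowering factors cancel), so it still satisfies~(\ref{bilinear}); and then the $\tau$-function claim is immediate from the discussion preceding the statement, which shows that for any Clifford algebra element $\hat h$ satisfying~(\ref{bilinear}) the vacuum expectation value $\langle N|\hat\gamma_+({\bf t})\,\hat h\,|N\rangle$ is a KP $\tau$-function --- it equals, by~(\ref{VEV_tau}) and the equivalence of~(\ref{bilinear}) with the Pl\"ucker relations, the relevant Pl\"ucker coordinate of a decomposable, hence Grassmannian, point carried along the KP flow (identically zero if $\hat h|N\rangle=0$).

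For the single-fermion step I would write $S := \sum_{i\in\Zn}\psi_i\otimes\psi_i^\dag$, so that~(\ref{bilinear}) reads $S\,(\hat g\otimes\hat g) = (\hat g\otimes\hat g)\,S$ inside $\mathrm{Cl}\otimes\mathrm{Cl}$. For a creation operator $w=\sum_i w_i\psi_i$, the anticommutation relations~(\ref{anticomm}) yield the three identities
\[
[\,w\otimes 1,\,S\,]_+ = \sum_i [w,\psi_i]_+\otimes\psi_i^\dag = 0, \qquad
[\,1\otimes w,\,S\,]_+ = \sum_i \psi_i\otimes[w,\psi_i^\dag]_+ = w\otimes 1, \qquad
w^2 = {1\over 2}[w,w]_+ = 0,
\]
and combining them one gets
\[
S\,(w\otimes w) = -(w\otimes 1)\,S\,(1\otimes w)
= -(w\otimes 1)\bigl((w\otimes 1) - (1\otimes w)\,S\bigr)
= (w\otimes w)\,S .
\]
Hence $S\bigl((w\hat g)\otimes(w\hat g)\bigr) = (w\otimes w)\,S\,(\hat g\otimes\hat g) = (w\otimes w)(\hat g\otimes\hat g)\,S = \bigl((w\hat g)\otimes(w\hat g)\bigr)\,S$, i.e.\ $w\hat g$ again satisfies~(\ref{bilinear}). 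Exchanging the roles of $\psi$ and $\psi^\dag$ gives, for an annihilation operator $v^\dag=\sum_i v_i\psi_i^\dag$, the identities $[\,1\otimes v^\dag,\,S\,]_+ = 0$, $[\,v^\dag\otimes 1,\,S\,]_+ = 1\otimes v^\dag$ and $(v^\dag)^2=0$, hence $S\,(v^\dag\otimes v^\dag) = (v^\dag\otimes v^\dag)\,S$ and $v^\dag\hat g$ satisfies~(\ref{bilinear}) as well.

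Applying these two steps in turn to $w_1,\dots,w_n$ and to $v_1^\dag,\dots,v_n^\dag$ then shows that $\left(\prod_{a=1}^n w_a\right)\left(\prod_{b=1}^n v_b^\dag\right)\hat g$ satisfies~(\ref{bilinear}), and the $\tau$-function conclusion follows as in the first paragraph. I do not expect a genuine obstacle: the whole content is the short commutator computation above, everything else being the dictionary --- already recalled in the text --- between~(\ref{bilinear}), decomposability of $\hat g|N\rangle$, the Pl\"ucker relations and the Hirota bilinear form of the KP hierarchy. The one point needing care is the tensor-product bookkeeping: the displayed identities are plain associative-algebra identities in $\mathrm{Cl}\otimes\mathrm{Cl}$ with no Koszul signs, and the parities of the factors $w\hat g$, $v^\dag\hat g$ are irrelevant because $S$ is even, so it is worth fixing that convention explicitly at the outset. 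As a sanity check one may note the geometric counterpart: wedging a decomposable semi-infinite wedge by a vector and contracting it by a covector both again yield decomposable elements.
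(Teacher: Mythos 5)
Your proposal is correct and takes essentially the same route as the paper: both reduce the claim to the single-fermion statement that $\bigl[\sum_{i\in\Zn}\psi_i\otimes\psi_i^\dag,\,w\otimes w\bigr]=0$ (and its analogue for $v^\dag$), proved from the anticommutation relations together with $w^2=(v^\dag)^2=0$, and then conclude by multiplicativity of the bilinear identity. The only difference is presentational: you verify the commutator as an abstract identity in the tensor square of the Clifford algebra via the anticommutators $[w\otimes 1,S]_+$ and $[1\otimes w,S]_+$, whereas the paper checks the same identity by acting on elements $\mu\otimes\nu\in\FF\otimes\FF$.
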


\begin{proof}
It follows immediately from the definitions that, if any two operators 
satisfy the bilinear identity, so does their product. Therefore, it is sufficient to prove it  holds for 
any creation operator $w_a$ or any annihilation operator $v_a^\dag$. Now, let $\mu, \nu \in \FF$ be 
a pair of elements  and apply the product  $\left(\sum_{i\in \Zn}\psi_i\otimes \psi_i^\dag\right) (w \otimes w)$
to  the decomposable element $\mu  \otimes \nu  \in \FF \otimes \FF$.
\bea
\sum_{i\in \Zn}(\psi_i\otimes \psi_i^\dag) (w_a \otimes w_a )\mu \otimes \nu 
&\& = \sum_{i\in \Zn}\psi_iw_a \mu \otimes \psi_i^\dag w_a \nu  \cr
&\& = \sum_{i\in \Zn} w_a \psi_i \mu \otimes w_a \psi_i^\dag  \nu   -  \sum_{i\in \Zn}\psi_i w_a \mu \otimes w_{ai} \nu \cr
&\& = \sum_{i\in \Zn} w_a \psi_i \mu \otimes w_a \psi_i^\dag  \nu   -  (w_a )^2\mu \otimes \nu \cr
&\& = \sum_{i\in \Zn} w_a \psi_i \mu \otimes w_a \psi_i^\dag  \nu,
\eea
where the anticommutation relations~(\ref{anticomm}) have been used, and the fact that $w_a^2=0$. Therefore
the bilinear relation~(\ref{bilinear})
 \be
\left[\sum_{i\in \Zn} \psi_i \otimes \psi_i^\dag,  w_a \otimes w_a \right] = 0
\ee
is satisfied by $w_a$. A similar calculation shows it holds for $v_b^\dag$, and hence for all products of the form~(\ref{bilin_prod}).
\end{proof}

%%%%%%%%%%%%%%%  Section 3 %%%%%%%%%%%%%%%%%%%

 \section{Reducing  infinite to finite Grassmannians }
 
 %%%%%%%%%%%%%%%  SubSection 3. 1 %%%%%%%%%%%%%%%%%%%

\subsection{ Grassmannian subquotients $W_1 \ss W  \ss W_2 \, \ra W_2/W_1 $}
 
 We now detail the subquotient reduction  described above. The  first step consists
 of choosing a pair of subspaces
 \be
 W_1 \ss W_2 \ss \HH, \quad W_1, \  W_2 \in \Gr_{\HH_+}(\HH)
 \ee
 of virtual dimension $(-n)$ and $k$ respectively, with $n+k =N$, both invariant under the action of the 
 abelian group  $\Gamma_+$ of KP flows
 \be
 \gamma_+({\bf t}) W_1 \ss W_1,  \quad  \gamma_+({\bf t}) W_2 \ss W_2, \quad \forall \gamma_+({\bf t}) \in \Gamma_+ 
 \ee
  so that
  \be
{\rm dim}\  W_2/W_1= n+k  = N .
 \ee
The element $W \in \Gr_{\HH_+}(\HH)$   is chosen to belong to the sub-Grassmannian
of virtual dimension $0$  subspaces that fit between them
 \be
 W_1 \ss W \ss W_2.
 \label{W1_W_W2}
\ee
Thus 
\be
 \dim\left(W/W_1\right) = n, \quad \dim\left(W_2/W\right) = k.
\ee

We now make an identification of the quotient $W_2/W_1$ with  $\Cn^N$ by choosing a subspace $W_1^{\cb} \ss W_2$
that is complementary to $W_1 \ss W_2$, choosing  a basis $\{b_1, b_2, \dots , b_N\}$
for it, and identifying this with the standard basis $\{f_1, f_2, \dots f_N\}$ for $\Cn^N$
\be
(f_i)_j =\delta_{ij}, \quad 1 \le i, j \le N.
\ee
Through the quotient map
\be
W_2 \ra W_2/W_1 \equiv \Cn^N,
\ee
any element $W\in \Gr_{\HH_+^0}(\HH)$ containing $W_1$ and contained in $W_2$ 
can be associated  with a unique maximal rank $n \times N$ matrix $C$ such that
\be
W = W_1 \oplus \span\left\{\sum_{i=1}^N C_{ai} b_i\right\}, \quad a= 1, \dots n.
\ee

The projection
\be
W \ra W/W_1  \ss W_2/W_1 \equiv \Cn^N
\ee
thus defines an element $[C]$ of the Grassmannian $\Gr_n(\Cn^N)$, spanned by the rows of $C$.
We denote this finite dimensional subquotient
\be
\Gr_n(W_2/W_1) \sim \Gr_n(\Cn^N)
\ee
   Since the  flow group $\Gamma_+$  leaves   both $W_1$ and $W_2$ invariant,
this induces an action of $\Gamma_+$ on $\Gr_n({W_2/W_1})$ such that,
through the identification $W_2/W_1 \sim\Cn^N$, the shift map $\Lambda: \HH \ra \HH$
 may be represented by an $N \times N$ matrix
\be
B: \Cn^N \ra \Cn^N
\ee
whose form depends on the choice of this basis, but whose Jordan canonical form depends only 
on the choice of pairs $(W_1,  W_2)$.
The  $\Gamma_+$ action induced on $\Gr_n(\Cn^N)$ will then be  represented by
\be
\gamma_+({\bf t}): C^T \mapsto e^{\sum_{i=1}^\infty t_i B^i}C^T := C^T({\bf t}).
\ee

To determine the associated $\tau$-function $\tau_W({\bf t})$, we must evaluate the 
 Pl\"ucker coordinate $\pi_0(W({\bf t}))$, which is the determinant of the projection operator
\be
W({\bf t}) \ra \HH_+.
\ee
 If $W_1$ is chosen to be a subspace of $\HH_+$, which in turn is contained  in $W_2$
\be 
W_1 \ss \HH_+ \ss W_2,
\ee
we may view the Grassmannian $\Gr_n(W_2/W_1)$ as an orbit
of the element $\HH_+/W_1 \in \Gr_n(W_2/W_1)$. The Pl\"ucker coordinate 
$\pi_0(C({\bf t}))$ relative to the given basis then coincides with $\pi_0(W({\bf t}))$, and
we may proceed in the same way as on the infinite Grassmannian $\Gr_{\HH_+^0}(\HH)$.
If, however, the inclusion condition
\be
\HH_+ \ss W_2
\ee
is not satisfied, a further transformation is needed to identify the reduced
Grassmannian $\Gr_n(W_2/W_1)$ as the orbit of some standard element
under $GL(W_2/W_1)$. This transformation determines the matrix $A$ in the
Gekhtman-Kasman formula~(\cite{GK1, GK2}).

  We now consider the case when $W_1, W_2$ are defined to be
  \be
  W_1 := r(z)\HH_+, \quad W_2 := {r(z)/ p(z) }\HH_+,
  \label{W1W2_rp}
  \ee
  where $r(z)$ and $p(z)$ are monic polynomials of degrees $n$ and $N$ respectively,
  with roots and multiplicities $\{\delta_i, n_i\}_{i=1, \dots, m}$, $\{ \beta_i, N_i\}_{i=1, \dots M}$
  \bea
  r(z) &\&:= \prod_{i=1}^m(z-\delta_i)^{n_i}, \quad p(z):= \prod_{j=1}^M(z-\beta_j)^{N_j},
  \cr
\sum_{i=1}^m n_i &\&= n,  \hskip 60 pt \sum_{j=1}^M N_j = N,
  \eea
 with the roots $\{\delta_i\}$ and  $\{\beta_j\}$ of  both $r(z)$ and $p(z)$ chosen to lie within the unit circle.
  (If they do not, we may just redefine the circle $S^1$ in  $\HH= L^2(S^1)$  as having a sufficiently large radius  that all roots of  are in the interior.) 
  $W_1$ is thus  the subspace of $\HH_+$ consisting of elements that vanish at the roots  
  $\{\delta_i\}_{i=1, \dots d}$ of $r(z)$ to the same order as their 
  multiplicities $\{n_i\}_{i=1, \dots d}$ in $r(z)$, while $W_2$ is the direct sum of $W_1$ 
  with the span of the rational basis elements
  \be
  b_{(j\nu)}(z) := {r(z)\over (z-\beta_j)^{\nu}}, \quad 1\le j \le M,  \  1\le \nu\le N_j.
  \label{bia_basis}
  \ee
  Note that if we use this basis to identify the quotient space $W_2/W_1$ with 
  $\Cn^N$, the matrix $B$ representing multiplication by $z$ is precisely the
  Jordan normal form matrix defined in  (\ref{B_jordan}).
    
  To complete the explicit matrix representation of the flows and Pl\"ucker coordinates,
  we must  identify the standard basis for $\Cn^n$ with a suitably chosen basis for
   $\HH_+/W_1$. We could of course choose this as the monomials of degree less
   than $n$, modulo $W_1$. But a more convenient choice   consists of 
 \be
d_{(i\mu)} (z) := { z^{\mu-1} \over (1-z\delta_i)^{\mu}}, \quad 1\le i \le m,  \  1\le \mu\le n_i .
 \label{ortho_basis}
  \ee 
These  are linearly independent elements of $\HH_+$ since the roots $\delta_i$ of $r(z)$ are distinct and lie within the unit circle.
Moreover the  $d_{(i\mu)} (z)$'s are orthogonal to  $W_1 = r(z) \HH_+$  with
respect to  the complex inner product  $( \ , \ )$ on $\HH$ in which the monomials are orthonormal,
which  may be expressed by the contour integral:
\be
(f, g) : = {1\over 2\pi i} \oint_{|z|=1} f(z^{-1}) g(z) {dz\over z},
\ee
since, by the Cauchy theorem,  
\be
(d_{(i\mu)}(z), r(z) z^a) = 0,  \quad \forall \  a\in \Nn.
\ee
They therefore form a basis for the orthogonal complement  $W_1^{\perp} \ss \HH_+$.
  The pairs $(i,\mu)$ may more concisely be labelled
  \be
  I := (i, \mu), \quad 1\le i \le m, \ 1 \le \mu \le n_i
  \label{I_i_mu}
  \ee
  ordered consecutively as
   \be
   \label{eq:multiind_order}
  (1,1), \dots (1,n_1), \dots (i, 1), \dots (i, n_i), \dots (m, 1), \dots (m, n_m).
  \ee

 We  assume henceforth that the roots $\{\delta_i\}_{i=1, \dots m}$ and $\{\beta_j\}_{j=1, \dots , M}$ of the polynomials $r(z)$ and $p(z)$  are {\it all} distinct  and interpret these  as  characteristic polynomials of the pair of matrices 
$B \in \Mat{N }{N}$,  
$D\in \Mat{n }{n}$ defined in  (\ref{B_jordan}) and (\ref{D_jordan}), 
\bea
\label{r_Bz}
p(z) &\&= r_B(z) := \det(z\,\Id{N} - B) = \prod_{j=1}^M(z-\beta_j)^{N_j} ,\\
 r(z) &\& = r_D(z):= \det(z\,\Id{n} - D)=\prod_{i=1}^m(z-\delta_i)^{n_i}. 
 \label{r_Dz}
\eea

  Denoting the pairs of  indices $\{(j,\nu)\}$ by
  \be 
  J := (j,\nu), \quad 1\le j \le M,  \  1\le \nu \le N_j,
  \ee
ordered again consecutively as
   \be
   \label{eq:multiind_order}
  (1,1), \dots (1,N_1), \dots (j, 1), \dots (j ,N_j), \dots (M, 1), \dots (M, N_M).
  \ee 
  The basis elements  for $W_2 / W_1$, labelled accordingly, will be denoted $ b_J$, and the elements of the $n\times N$ matrix $C \in \Mat{n }{N}$ as $C_{a,J}$ 
 with $a =1, \dots ,n$ or, when a refinement is needed, as $C_{I, J}$, with $I$ defined as  in
 (\ref{I_i_mu}). The sub-Grassmannian $\Gr_n(W_2/W_1)$ then consists of all $W$'s of the form
  \be
  W(B, C, D) := W_1 \oplus \span \left(\sum_{J} C_{I, J} b_J\right)_{1\le a \le n}.
  \label{WbI}
  \ee

  \br {\bf Gauge equivalence.} Gauge transformations~(\ref{gaugetransform}) that preserve the 
class of subspaces $W_1$, $W_2$ of type~(\ref{W1W2_rp})  consist of multiplication
  by  rational functions that take value $1$ at $z = \infty$; i.e, the ratio of two monic polynomials
  $q(z)$, $\tilde{q}(z)$ of the same degree
  \be
\gamma_-( {\bf s} ) = {q(z) \over \tilde{q}(z)}, \quad \deg(q) = \deg(\tilde{q}).
\ee
  We can therefore use gauge transformations to replace $r(z)$ by any rational function whose singular part 
  at $z=\infty$ is a polynomial of degree $n$.
    \er
  \br
  The union of all sub-Grassmannians $\Gr_n(W_2/W_1)$ over all choices of $(n, N)$, $N>n$,
  and all polynomials $r(z)$, $p(z)$  is essentially the virtual dimension $0$ component of the Grassmannian $\Gr_1$ defined in~\cite{SW}. More precisely, $\Gr_1$ consists of those $W$'s corresponding to the choices
 \be
  W_1 := r(z)\HH_+, \quad W_2 := {1\over q(z)}\HH_+,
  \label{W1W2_SW}
  \ee
for a polynomial $q(z)$ of degree $k$. But these are easily seen to be gauge equivalent to the choice~(\ref{W1W2_rp}). In fact, within gauge transformations, there is no loss of generality in choosing $r(z)$ as the monomial $z^n$; i.e., choosing the subspaces $W_1$, $W_2$ as
  \be
  W_1 = z^n \HH_+, \quad W_2 = {z^n \over p(z)}\HH_+.
  \label{znW1}
  \ee

  Note also that, from the viewpoint of the KP hierarchy, nothing new is
  added by considering $W$'s with virtual dimension different from $0$, since
  there is always an equivalent $\tau$-function in the zero charge sector. The charge sector only
  becomes relevant when considering lattices of $\tau$-functions having
   the same group element $\hat{g}$ in~(\ref{VEV_tau}), with the lattice site 
 given by the fermionic charge (or virtual dimension), thereby defining elements
  of  an infinite flag manifold~\cite{DJKM1, DJKM2}.
   \er
  
   %%%%%%%%%%%%%%%  SubSection 3. 2 %%%%%%%%%%%%%%%%%%%

\subsection{KP $\tau$-functions as finite determinants}

Define the $n \times N$ matrix $A(B,D)\in \Mat{n }{N} $ by the following  formula:
\bea
\label{ABD}
A_{(i\mu), (j\nu)}(B,D)  &\&:= {1\over 2\pi i} \oint_{|z|=1}{ r_D(z) \over (z-\beta_j)^{\nu} (z-\delta_i)^{\mu}} dz,\\
&\& \cr
\quad 1 \le i \le m, &\&  \ 1\le \mu \le n_i, \ 1\le j \le M, \ 1\le \nu \le N_j. 
\nonumber
\eea
and the further $n \times N$ matrix  $A^0(B, D)\in \Mat{n }{N} $ by the following:
\be
A^0_{(i\mu), (j\nu)}(B, D) := { \mu+\nu -2 \choose \nu-1 } { (-1)^ {\nu+1}\over (\beta_j - \delta _i)^{\mu + \nu -1}}. 
\label{A0BD}
\ee
Evaluating the integral using the Cauchy formula, it follows that these are related by 
right multiplication by the matrix $r_D(B)$:
\begin{lemma}
\be
A(B,D) = A^0(B,D) r_{D}(B).
\label{A0RBD}
\ee
\end{lemma}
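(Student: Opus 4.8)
The plan is to collapse the contour integral in~(\ref{ABD}) to a single residue, expand that residue with the Leibniz rule, and recognize the two factors in each resulting term as an entry of $A^0(B,D)$ and an entry of $r_D(B)$; the product structure then falls out.

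First I would fix admissible multi-indices $I=(i,\mu)$ and $J=(j,\nu)$ and observe that the integrand in~(\ref{ABD}) has no pole at $z=\delta_i$: since $\mu\le n_i$ and $(z-\delta_i)^{n_i}$ divides $r_D(z)$, the function $g(z):=r_D(z)/(z-\delta_i)^{\mu}$ is a polynomial, and since the $\delta$'s and the $\beta$'s are pairwise distinct, $g(\beta_j)\ne 0$. Hence, inside $|z|=1$, the integrand $g(z)/(z-\beta_j)^{\nu}$ has only a pole of order $\nu$ at $z=\beta_j$, and the residue theorem gives
\[
  A_{(i\mu),(j\nu)}(B,D)=\frac{1}{(\nu-1)!}\left.\frac{d^{\,\nu-1}}{dz^{\,\nu-1}}\left(\frac{r_D(z)}{(z-\delta_i)^{\mu}}\right)\right|_{z=\beta_j}.
\]

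Next I would apply the Leibniz rule to the product $r_D(z)\cdot(z-\delta_i)^{-\mu}$ and regroup the $(\nu-1)$-st derivative as
\[
  A_{(i\mu),(j\nu)}(B,D)=\sum_{k=0}^{\nu-1}\frac{r_D^{(k)}(\beta_j)}{k!}\cdot\frac{1}{(\nu-1-k)!}\left.\frac{d^{\,\nu-1-k}}{dz^{\,\nu-1-k}}\left((z-\delta_i)^{-\mu}\right)\right|_{z=\beta_j}.
\]
A one-line differentiation of the negative power, together with $\frac{(-1)^{s}}{s!}\,\mu(\mu+1)\cdots(\mu+s-1)=(-1)^{s}{\mu+s-1 \choose s}$, identifies the second factor (with $s=\nu-1-k$) with exactly $A^0_{(i\mu),(j,\nu-k)}(B,D)$ as defined in~(\ref{A0BD}); this is the step where the sign $(-1)^{\nu+1}$ and the shifted binomial coefficient must be tracked carefully. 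On the other hand, because $B$ is in the Jordan form~(\ref{B_jordan}), $r_D(B)$ is block-diagonal with $j$-th block $r_D(J_{N_j}(\beta_j))$, an upper-triangular Toeplitz matrix whose $(\nu',\nu'')$ entry for $\nu''\ge\nu'$ equals $r_D^{(\nu''-\nu')}(\beta_j)/(\nu''-\nu')!$; in particular $r_D^{(k)}(\beta_j)/k!=\bigl(r_D(B)\bigr)_{(j,\nu-k),(j,\nu)}$, and this entry vanishes for index pairs lying in distinct Jordan blocks.

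Finally, substituting $\nu'=\nu-k$ turns the sum into $\sum_{\nu'}A^0_{(i\mu),(j\nu')}\bigl(r_D(B)\bigr)_{(j\nu'),(j\nu)}$, and since $r_D(B)$ has no entries between distinct Jordan blocks this equals the full matrix-product sum $\sum_{(j',\nu')}A^0_{(i\mu),(j'\nu')}\bigl(r_D(B)\bigr)_{(j'\nu'),(j\nu)}=\bigl(A^0(B,D)\,r_D(B)\bigr)_{(i\mu),(j\nu)}$, which is~(\ref{A0RBD}). The only real obstacle is bookkeeping: making the removable pole at $\delta_i$ explicit from $\mu\le n_i$ and the distinctness hypotheses, and aligning the multi-index ranges so that the single sum over $k$ coming from Leibniz is, term by term, the matrix-product sum; the three ingredients used --- Leibniz, the derivative of $(z-\delta_i)^{-\mu}$, and the Jordan--Toeplitz shape of $r_D(B)$ --- are each elementary.
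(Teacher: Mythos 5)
Your proposal is correct and follows essentially the same route as the paper's proof: collapse the contour integral in~(\ref{ABD}) to the residue of order $\nu$ at $z=\beta_j$, expand by the Leibniz rule, and recognize the two factors as an entry $A^0_{(i\mu),(j\nu')}(B,D)$ and the entry $r_D^{(\nu-\nu')}(\beta_j)/(\nu-\nu')!$ of the block upper-triangular Toeplitz matrix $r_D(B)$. The paper's argument is identical, only terser --- it does not spell out, as you do, that $\mu\le n_i$ makes the would-be pole at $\delta_i$ removable, nor the explicit Jordan--Toeplitz form of $r_D(B)$.
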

\begin{proof} 
From the Cauchy residue formula and  Leibniz' rule we
have
\bea
A_{(i\mu), (j\nu)}(B,D) 
  &\& = \left. {1\over ( \nu-1)!} {d^{\nu-1}\over d z^{\nu-1} } \right|_{z=\beta_j}\left( r_D(z) {1\over (z-\delta_i)^{\mu}}\right)   \cr
 &\& =\sum_{\xi=1}^{\nu} {r_D^{(\nu-\xi)}(\beta_j)\over (\nu-\xi)!} {\mu+\xi-2 \choose \xi-1} {(-1)^{\xi-1} \over (\beta_j - \delta_i)^{\mu+\xi-1}} \cr
&\&=\sum_{k=1}^M \sum_{\xi=1}^\nu A^0_{(i\mu), (k\xi)} (B,D) (r_{D}(B))_{(k\xi),(j\nu)},
  \eea
 since 
\be
(r_{D}(B))_{(k\xi),(j\nu)}= \delta_{jk} {r_D^{(\nu-\xi)}(\beta_j) \over (\nu-\xi)!} \quad {\rm for} \ \nu \ge \xi.
\ee
\end{proof}

Denote by 
\be
A(B):= A(B, \Lambda_n), 
\ee
the particular case where the matrix $D$ is chosen as $\Lambda_n$.
The matrix elements of $A(B)$ are easily computed to be
\bea
A_{a, (j\nu)}(B) &\&= {1\over 2\pi i}\oint_{|z|=1} {z^{n-a} \over (z-\beta_j)^{\nu}} dz = { n-a \choose \nu-1} \beta_j^{n-\nu-a+1},
\label{A_B_me}
\\
 1\le a \le n, &\& \quad 1\le j \le M, \quad 1\le \nu \le N_j 
 \nonumber
\eea
We also define the $n \times n$ matrix $K(D) \in \Mat{n }{n}$ with elements
\bea
K_{a, (i\mu)}(D)&\&:= {1\over 2\pi i} \oint_{z= \delta_i}  {z^{n-a} (z- \delta_i)^{\mu-1}\over r_D(z)}dz, 
\\
 1\le a \le n, &\& \quad 1\le i \le m, \quad 1\le \mu \le n_i ,
\eea
where the  integral is over any counterclockwise contour containing only the pole $z=\delta_i$.
The following gives the relation between these matrices
\begin{lemma}
\be
A(B) = K(D) A(B,D).
\label{KABD}  
\ee
\end{lemma}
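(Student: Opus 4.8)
The plan is to collapse the matrix identity into a single partial-fraction expansion in the complex variable $z$, in the same spirit as the diagonal computation of Example~\ref{ex:soliton}. Fix $a\in\{1,\dots,n\}$. Then $z^{n-a}$ is a polynomial of degree at most $n-1$, strictly less than $\deg r_D = n$, so $z^{n-a}/r_D(z)$ is a proper rational function whose poles are exactly the distinct roots $\delta_i$ of $r_D$, with multiplicities $n_i$. Its partial-fraction decomposition therefore has no polynomial part:
\be
{z^{n-a}\over r_D(z)} \;=\; \sum_{i=1}^m\sum_{\mu=1}^{n_i} {c_{a,(i\mu)}\over (z-\delta_i)^\mu},
\ee
the indices $(i\mu)$ ranging over precisely the set that labels the basis elements $d_{(i\mu)}$.

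The next step is to identify the coefficients $c_{a,(i\mu)}$. Multiplying the expansion by $(z-\delta_i)^{\mu-1}$ and integrating over a small counterclockwise contour enclosing only $\delta_i$ annihilates every term except the one with the matching $\mu$, yielding
\be
c_{a,(i\mu)} \;=\; {1\over 2\pi i}\oint_{z=\delta_i}{z^{n-a}\,(z-\delta_i)^{\mu-1}\over r_D(z)}\,dz \;=\; K_{a,(i\mu)}(D),
\ee
which is exactly the definition of the matrix $K(D)$. Now multiply the partial-fraction identity through by $r_D(z)/(z-\beta_j)^\nu$ for a fixed multi-index $(j\nu)$:
\be
{z^{n-a}\over (z-\beta_j)^\nu} \;=\; \sum_{i,\mu} K_{a,(i\mu)}(D)\,{r_D(z)\over (z-\beta_j)^\nu(z-\delta_i)^\mu}.
\ee
All poles appearing on either side — the $\beta_j$'s and the $\delta_i$'s — lie inside the unit circle by assumption, so integration term by term over $|z|=1$ is legitimate. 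By~(\ref{A_B_me}) the left-hand side integrates to $A_{a,(j\nu)}(B)$, while by the defining formula~(\ref{ABD}) the right-hand side integrates to $\sum_{i,\mu}K_{a,(i\mu)}(D)\,A_{(i\mu),(j\nu)}(B,D) = \big(K(D)\,A(B,D)\big)_{a,(j\nu)}$. Since $a$ and $(j\nu)$ were arbitrary, this is the asserted equality $A(B)=K(D)\,A(B,D)$.

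The argument is essentially bookkeeping, so I do not anticipate any genuine obstacle; the only point demanding care is keeping the multi-index orderings and the several contour conventions consistent, so that the residue extracting $c_{a,(i\mu)}$ matches the stated definition of $K(D)$ while the $|z|=1$ integrals match those defining $A(B)$ and $A(B,D)$. As an alternative one could feed the relation~(\ref{A0RBD}) together with its $D=\Lambda_n$ specialization into the identity~(\ref{KAR}) already established in Example~\ref{ex:soliton} and pass to the general Jordan case, but the direct partial-fraction route above is shorter and self-contained.
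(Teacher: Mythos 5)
Your argument is correct and is essentially the paper's own proof: the partial-fraction expansion of $z^{n-a}/r_D(z)$ with residue-extracted coefficients is exactly the Lagrange-interpolation identity $z^{n-a}=\sum_{i,\mu}K_{a,(i\mu)}(D)\,r_D(z)/(z-\delta_i)^{\mu}$ that the paper substitutes into the contour integral defining $A_{a,(j\nu)}(B)$. Your only addition is the explicit verification that the partial-fraction coefficients coincide with the entries of $K(D)$, which the paper leaves implicit.
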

\begin{proof} Use  Lagrange interpolation  to express
\be
z^{n-c} = \sum_{i,\mu} K_{c, (i\mu)} (D){ r_D(z) \over  (z -\delta_i)^{\mu} }
\ee
and substitute into (\ref{A_B_me}).
\end{proof}

Let $\eg{B} \in \Cn^N$, $\ef{D} \in \Cn^n$ and $\bf k (D)\in \Cn^n$ be the vectors:
 \bea
 \eg{B}_{(j\nu)} &\&=\delta_{\nu,1},   \quad \quad 1\le j \le M, \quad 1\le \nu \le N_j,  \\
 \ef{D}_{(i\mu)} &\&=\delta_{\mu,1},   \quad \quad 1\le i \le m, \quad 1\le \mu \le n_j,  \\
 {\bf k}_{(i\mu)}(D)  &\&=  {1\over 2\pi i} \oint _{z=\delta_i} {z^n (z-\delta_i)^{\mu - 1}\over r_D(z)} dz.
  \eea
    We then have the following  identities:
  \begin{lemma}
  \bea
    \label{Lambda_K}
  \Lambda^T_n K(D) &\&= K(D) D^T - \ef{\Lambda_n} {\bf k}^T (D)\\
  K(D) \ef{D} &\&= \ef{\Lambda_n}
  \label{KeD}
  \eea
  \end{lemma}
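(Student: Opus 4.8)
The plan is to derive both identities directly from the contour-integral definitions of $K(D)$ and $\mathbf{k}(D)$, matching entries on the two sides. Throughout, the columns of $K(D)$ and the entries of $\mathbf{k}(D)$ and $\ef{D}$ are labelled by the multi-index $(i\mu)$, $1\le i\le m$, $1\le\mu\le n_i$, ordered as in~(\ref{eq:multiind_order}), and I use the harmless convention $K_{a,(i,n_i+1)}(D):=0$.

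First I would establish~(\ref{KeD}). Its $a$-th component is $(K(D)\ef{D})_a=\sum_{i=1}^m K_{a,(i1)}(D)=\sum_{i=1}^m\frac{1}{2\pi i}\oint_{z=\delta_i}\frac{z^{\,n-a}}{r_D(z)}\,dz$, i.e.\ the sum of the residues of $z^{\,n-a}/r_D(z)$ over its finite poles, which all lie among the roots $\delta_1,\dots,\delta_m$ of the monic polynomial $r_D$ of degree $n$. By the residue theorem this sum equals $-\mathrm{Res}_{z=\infty}\!\big(z^{\,n-a}/r_D(z)\big)$, and since $0\le n-a\le n-1$ the integrand behaves like $z^{-a}\big(1+O(z^{-1})\big)$ as $z\to\infty$; hence its residue at infinity is $-1$ for $a=1$ and $0$ for $a\ge2$, so $(K(D)\ef{D})_a=\delta_{a,1}=(\ef{\Lambda_n})_a$. (Alternatively one can compare the coefficients of $z^{\,n-1}$ on the two sides of the Lagrange interpolation identity $z^{\,n-c}=\sum_{i,\mu}K_{c,(i\mu)}(D)\,r_D(z)/(z-\delta_i)^{\mu}$ used in the preceding proof.)

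For~(\ref{Lambda_K}) I would compare the $(a,(i\mu))$ entries. Since $\Lambda_n^{T}$ carries $1$'s on the subdiagonal, $(\Lambda_n^{T}K(D))_{a,(i\mu)}=K_{a-1,(i\mu)}(D)$ for $a\ge2$ and $0$ for $a=1$. With $D$ in the Jordan form~(\ref{D_jordan}), a short calculation of the matrix product gives $(K(D)D^{T})_{a,(i\mu)}=\delta_i K_{a,(i\mu)}(D)+K_{a,(i,\mu+1)}(D)$, the convention above being consistent with the integral formula because the integrand $z^{\,n-a}(z-\delta_i)^{n_i}/r_D(z)$ of $K_{a,(i,n_i+1)}(D)$ is regular at $z=\delta_i$; also $(\ef{\Lambda_n}\,\mathbf{k}^{T}(D))_{a,(i\mu)}=\delta_{a,1}\,\mathbf{k}_{(i\mu)}(D)$. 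The step doing the work is then the one-line identity obtained by writing $z=\delta_i+(z-\delta_i)$ in the integrand of $K_{a,(i\mu)}(D)$:
\[
\delta_i K_{a,(i\mu)}(D)+K_{a,(i,\mu+1)}(D)=\frac{1}{2\pi i}\oint_{z=\delta_i}\frac{z^{\,n-a+1}(z-\delta_i)^{\mu-1}}{r_D(z)}\,dz .
\]
For $a\ge2$ the right-hand side equals $K_{a-1,(i\mu)}(D)$, which matches $(\Lambda_n^{T}K(D))_{a,(i\mu)}$ together with $(\ef{\Lambda_n}\mathbf{k}^{T}(D))_{a,(i\mu)}=0$; for $a=1$ it equals $\mathbf{k}_{(i\mu)}(D)$, which rearranges into $0=\delta_i K_{1,(i\mu)}(D)+K_{1,(i,\mu+1)}(D)-\mathbf{k}_{(i\mu)}(D)$, i.e.\ precisely the $(1,(i\mu))$ entry of~(\ref{Lambda_K}). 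Assembling the entries gives the identity.

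I expect no genuine obstacle here: the two assertions each collapse to a single residue manipulation once the entries are written out. The only thing that needs attention is the multi-index bookkeeping forced by the ordering~(\ref{eq:multiind_order}) and the block-boundary convention $K_{a,(i,n_i+1)}(D):=0$, which is automatically correct since the relevant residue vanishes.
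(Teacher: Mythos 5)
Your proof is correct and follows essentially the same route as the paper, which simply invokes the definitions of $K(D)$, ${\bf k}(D)$ and the contour integral ${1\over 2\pi i}\oint_{|z|=1} z^{n-c}\,dz/r_D(z) = \delta_{c1}$; your residue-at-infinity argument for (\ref{KeD}) is exactly that identity, and your entrywise $z=\delta_i+(z-\delta_i)$ manipulation for (\ref{Lambda_K}) is the worked-out form of ``follows from the definitions,'' with the block-boundary convention correctly justified by the vanishing residue.
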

\begin{proof} These follow from the definitions and the contour integral
\be
\oint_{|z|=1} {z^{n-c} \over r_D(z)} dz = \delta_{c1},  \quad 1\le c \le n.
\ee
\end{proof}

The following shows that the matrices $A^0(B,D)$ and $A(B,D)$ both satisfy the rank-$1$ condition
(\ref{rank1}) for the same pair of matrices $B$ and $D$, but with different RHS.
\begin{proposition}
\label{prop_AB_rank1}
\bea
\label{AB_rank1}
A(B) B - \Lambda_n^T A(B) &\&= \ef{\Lambda_n} \eg{B}^{T} B^n  \\
\label{A0BD_rank1}
A^0(B,D) B - D^T A^0(B,D) &\& = \ef{D} \eg{B}^{T}\\
\label{ABD_rank1}
A(B,D) B - D^T A(B,D) &\& = \ef{D} \eg{B}^{T}r_{D}(B)
\eea
\end{proposition}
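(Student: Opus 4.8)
The plan is to prove the three identities by direct contour-integral manipulations, using the Cauchy residue formula to evaluate each matrix entry and then recognizing the resulting expressions as products or linear combinations of the relevant matrices. The three statements are closely related, so I would organize the argument to derive one of them and deduce the others.

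First I would establish~(\ref{A0BD_rank1}), which is the cleanest since $A^0(B,D)$ has the explicit binomial form~(\ref{A0BD}). The left-hand side has $(i\mu),(j\nu)$ entry equal to a linear combination of entries of $A^0(B,D)$: multiplication on the right by $B$ acts on the $J=(j\nu)$ index (shifting $\nu$ within a Jordan block and adding a $\beta_j$ term via $B = \beta_j \Id{} + \Lambda$ block-wise), while multiplication on the left by $D^T$ acts on the $I=(i\mu)$ index similarly. I would substitute the closed form, use the Pascal/binomial identity $\binom{\mu+\nu-1}{\nu-1} = \binom{\mu+\nu-2}{\nu-1} + \binom{\mu+\nu-2}{\nu-2}$ together with the telescoping that occurs in $\mu$ and $\nu$, and check that all terms cancel except for a single contribution surviving when $\mu = \nu = 1$, which is precisely $\ef{D}_{(i\mu)}\,\eg{B}^T_{(j\nu)}$. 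An alternative — and probably cleaner — route is to represent $A^0_{(i\mu),(j\nu)}(B,D)$ itself as a contour integral of the form $\frac{1}{2\pi i}\oint \frac{dz}{(z-\beta_j)^\nu (z-\delta_i)^\mu}$ (this follows from evaluating the residue at $z=\beta_j$, say), and then note that the combination $A^0 B - D^T A^0$ corresponds to the integrand $\frac{z - z}{(z-\beta_j)^\nu(z-\delta_i)^\mu}$ up to the ``boundary'' term coming from the block structure of $B$ and $D$; that boundary term is the rank-one piece.

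Next, for~(\ref{ABD_rank1}), I would simply apply Lemma~(\ref{A0RBD}), i.e. $A(B,D) = A^0(B,D)\,r_D(B)$. Since $r_D(B)$ is a polynomial in $B$, it commutes with $B$, so
\[
A(B,D)B - D^T A(B,D) = \bigl(A^0(B,D)B - D^T A^0(B,D)\bigr) r_D(B) = \ef{D}\,\eg{B}^T\,r_D(B),
\]
using~(\ref{A0BD_rank1}) in the middle step; this is exactly~(\ref{ABD_rank1}). Then~(\ref{AB_rank1}) is the specialization $D = \Lambda_n$ of~(\ref{ABD_rank1}), once I check that with $D=\Lambda_n$ one has $D^T = \Lambda_n^T$, $r_D(z) = r_{\Lambda_n}(z) = z^n$, and $\ef{D} = \ef{\Lambda_n}$, so that $r_D(B) = B^n$ and the right-hand side becomes $\ef{\Lambda_n}\,\eg{B}^T B^n$ as claimed. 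One could also verify~(\ref{AB_rank1}) independently and more directly from the explicit entries~(\ref{A_B_me}), $A_{a,(j\nu)}(B) = \binom{n-a}{\nu-1}\beta_j^{\,n-\nu-a+1}$, using the contour integral $\frac{1}{2\pi i}\oint \frac{z^{n-a}}{(z-\beta_j)^\nu}dz$ and the observation that $z \cdot z^{n-a} = z^{n-(a-1)}$ shifts the row index $a \mapsto a-1$, with the overflow at $a=1$ producing the $z^n$ term, i.e. $\eg{B}^T B^n$ after identifying the residue.

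The main obstacle I anticipate is purely bookkeeping: correctly tracking how right-multiplication by $B$ and left-multiplication by $D^T$ act on the multi-indices $J=(j\nu)$ and $I=(i\mu)$ when $B$ and $D$ are in Jordan form with several blocks, and making sure the ``edge'' terms at $\nu=N_j$ (top of a block) and at $\mu=1$ versus $\mu=n_i$ are handled consistently so that only the single rank-one term $\ef{D}\eg{B}^T$ (or its $r_D(B)$-twisted version) survives. Once~(\ref{A0BD_rank1}) is pinned down cleanly, the passage to~(\ref{ABD_rank1}) and~(\ref{AB_rank1}) via Lemma~(\ref{A0RBD}) is essentially immediate. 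I would therefore concentrate the write-up on~(\ref{A0BD_rank1}), preferably via the integral representation of $A^0$, and dispatch the other two in a sentence each.
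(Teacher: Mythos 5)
Your proposal is correct and completable, but its logical flow is the reverse of the paper's primary argument. The paper first establishes (\ref{AB_rank1}) directly from the contour-integral definition of $A(B)$, and then obtains (\ref{A0BD_rank1}) and (\ref{ABD_rank1}) by substituting the auxiliary identities (\ref{A0RBD}), (\ref{KABD}), (\ref{Lambda_K}) and (\ref{KeD}) into (\ref{AB_rank1}), i.e.\ it routes the argument through the interpolation matrix $K(D)$ and the vector ${\bf k}(D)$; only as an aside does it record a direct blockwise verification of (\ref{A0BD_rank1}) via the single Jordan-block identity. You instead take (\ref{A0BD_rank1}) as the base case (your residue-representation argument is sound: writing $A^0_{(i\mu),(j\nu)}$ as $\frac{1}{2\pi i}\oint_{z=\beta_j}\frac{dz}{(z-\beta_j)^\nu(z-\delta_i)^\mu}$, right multiplication by $B$ and left multiplication by $D^T$ both turn the integrand into $z/\bigl((z-\beta_j)^\nu(z-\delta_i)^\mu\bigr)$, except that at $\mu=1$ the left multiplication misses the residue of $1/(z-\beta_j)^\nu$, which is exactly the surviving $\delta_{\mu 1}\delta_{\nu 1}$ term; this is equivalent to the paper's ``alternative'' block computation), then get (\ref{ABD_rank1}) by multiplying on the right by $r_D(B)$ and invoking Lemma (\ref{A0RBD}) together with $[B,r_D(B)]=0$, and finally read off (\ref{AB_rank1}) as the specialization $D=\Lambda_n$, which the paper itself endorses. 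Your route needs only Lemma (\ref{A0RBD}) and avoids the $K(D)$, ${\bf k}(D)$ machinery entirely, so it is more self-contained; the paper's route buys the explicit link between $A(B)$ and $A(B,D)$ via $K(D)$, which it needs elsewhere anyway. The only point to note in the write-up is that $A^0(B,\Lambda_n)$ requires $0\notin\mathrm{spec}(B)$ (since $\delta=0$ must avoid the $\beta_j$'s), but this is immaterial because both sides of (\ref{AB_rank1}) are polynomial in the $\beta_j$'s, and in any case your independent direct check of (\ref{AB_rank1}) from (\ref{A_B_me}) covers it.
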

\begin{proof} 
Eq.~(\ref{AB_rank1}) follows from the definitions (and is also the
special case of (\ref{ABD_rank1}) when $D=\Lambda_n$).
  Eqs.~(\ref{A0BD_rank1}) and (\ref{ABD_rank1}) follow from 
  substituting (\ref{A0RBD}), (\ref{KABD}), (\ref{Lambda_K}) and (\ref{KeD}) into (\ref{AB_rank1}) and using
  the fact that $B$ commutes with $r_D(B)$. 
  
   Eq.~(\ref{A0BD_rank1}) can also be proved directly  as follows. 
   Since both $B$ and $D$ are in standard Jordan normal form, 
   we may subdivide $A^0(B,D)$ into
  $m \times M$ blocks of  sizes $n_i \times N_j$
  \be
  A^0(B,D) = \left( \AA_{ij}\right), \quad 1\le i \le m, \quad 1\le j \le M, 
\ee
  where each $n_i \times N_j$ block  is of the form
  \be
  \AA_{\mu\nu} (\beta, \delta) = \pmatrix{ \mu+\nu -2 \cr \nu-1} {(-1)^{\nu} \over (\beta -\delta)^{\mu+\nu -1}}
  \ee
  with
  \be
   \delta = \delta_i, \quad \beta = \beta_j,  \quad 1\le i \le m , \quad 1\le j \le M, 
  \quad 1\le \mu \le n_i, \quad 1\le \nu \le N_j. 
  \ee
  Denoting  the elementary Jordan blocks with  matrix elements
  \be
  J_{\mu\nu}(\beta)= \beta \delta_{\mu\nu} + \delta_{\mu+1, \nu}, \quad J_{\xi,\kappa}(\delta) = \delta \delta_{\xi,\kappa} + \delta_{\xi+1, \kappa} ,
  \ee
as $J(\beta)$ and $J(\delta)$ with the appropriate range of indices $\mu,\nu,\xi,\kappa$ given by the dimension of the Jordan blocks,  it is easily verified that
  \be
   \left( \AA (\beta, \delta) J_B - J^T_D(\delta)\AA(\beta, \delta)\right)_{\mu \nu} = \delta_{\mu 1} \delta_{\nu 1}
  \ee
  Applying this to each $n_i \times N_j$ block is equivalent to   eq.~(\ref{A0BD_rank1}).
  
 \end{proof}
\br In terms of the resolvents $(z\Id{n}-D^{T})^{-1}$ and $(z\Id{N}-B)^{-1}$, (\ref{ABD}) can be written equivalently as
\be
\label{eq:ABDbiresolvent}
A(B,D) = \frac{1}{2\pi i}\oint_{|z|=1}r_D(z)(z\Id{n}-D^{T})^{-1}\ef{D}\eg{B}^{T}(z\Id{N}-B)^{-1}dz.
\ee
Also, note that
\be
K(D)r_D(z)(z\Id{n}-D^{T})^{-1}\ef{D}=r_{\Lambda_n}(z)(z\Id{n}-\Lambda_n^{T})^{-1}\ef{\Lambda_n},
\ee
which explains (\ref{KABD}) through (\ref{eq:ABDbiresolvent}).
\er

The  $\tau$-function $\tau_{W(B, C,D)}({\bf t})$ corresponding to the element 
$W(B, C, D) \in \Gr_{\HH_+^0}(\HH)$  defined in~(\ref{WbI}), with $W_1$, $W_2$ defined in~(\ref{W1W2_rp}), 
is given by the following:

\begin{theorem}
\label{theorem_tau_BCD}
\bea
\tau_{W(B, C, D)}({\bf t}) &\&=  \det(A(B, D) e^{\sum_{I=1}^\infty t_i B^i} C^T)
\label{tau_BCD}
 \\
&\& = \det(A^0(B, D) e^{\sum_{I=1}^\infty t_i B^i} r_{D}(B)  C^T)
\label{tau_A0BDR}
\\
&\& =\kappa(D)^{-1}   \det(A(B) e^{\sum_{I=1}^\infty t_i B^i}  C^T)
\label{tau_AB},
\eea
where
\be
\kappa(D) := \det(K(D)).
\ee
\end{theorem}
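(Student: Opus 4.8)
The plan is to establish the first equality~(\ref{tau_BCD}) from the geometric definition of $\tau_{W(B,C,D)}$ as a Pl\"ucker coordinate, and then obtain~(\ref{tau_A0BDR}) and~(\ref{tau_AB}) as immediate algebraic consequences using the lemmas already proved. The heart of the matter is the first identity, and the natural route is to compute the orthogonal projection of $\gamma_+({\bf t})W(B,C,D)$ onto $\HH_+$ in a well-chosen basis.

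First I would fix bases. A basis for $W(B,C,D)$ consists of a basis $\{r(z)z^a\}_{a\ge 0}$ for $W_1 = r(z)\HH_+$ together with the $n$ vectors $w_a({\bf t}) := \sum_J (e^{\sum_i t_i B^i}C^T)_{J,a}\, b_J(z)$, where $b_J(z) = r(z)/(z-\beta_j)^\nu$ are the rational basis elements of~(\ref{bia_basis}); the $\Gamma_+$ action on the finite quotient is exactly $C^T \mapsto e^{\sum_i t_i B^i}C^T$ because $B$ represents multiplication by $z$ on $W_2/W_1$. The key observation is that $W_1 = r(z)\HH_+ \subset \HH_+$ projects isomorphically onto the finite-codimension subspace $r(z)\HH_+$ of $\HH_+$, so the determinant of $\pi^\perp_+: W({\bf t})\to\HH_+$ reduces, relative to admissible bases, to the determinant of the induced map on the $n$-dimensional complement — i.e. the determinant of the $n\times n$ matrix whose $(a,c)$ entry is the coefficient of $d_{(i\mu)}(z)$ (the basis~(\ref{ortho_basis}) of $W_1^\perp\subset\HH_+$) in the $\HH_+$-part of $w_c({\bf t})$, indexed by $(a,c)$ via the ordering~(\ref{eq:multiind_order}). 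This is where the matrix $A(B,D)$ enters: pairing $b_J(z)$ against the dual of $d_{(i\mu)}(z)$ under the contour-integral inner product produces exactly the integral $\frac{1}{2\pi i}\oint_{|z|=1} r_D(z)(z-\beta_j)^{-\nu}(z-\delta_i)^{-\mu}\,dz = A_{(i\mu),(j\nu)}(B,D)$ from~(\ref{ABD}). Hence $\pi_0(W({\bf t})) = \det\bigl(A(B,D)\,e^{\sum_i t_i B^i}C^T\bigr)$, which is~(\ref{tau_BCD}).

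Then~(\ref{tau_A0BDR}) follows instantly by substituting the factorization $A(B,D) = A^0(B,D)\,r_D(B)$ of Lemma~(\ref{A0RBD}) and using that $r_D(B)$ commutes with $e^{\sum_i t_i B^i}$, so the factor $r_D(B)$ can be moved past the exponential; and~(\ref{tau_AB}) follows from Lemma~(\ref{KABD}), $A(B) = K(D)A(B,D)$, by pulling out $\det K(D) = \kappa(D)$ from the left. The main obstacle I anticipate is the bookkeeping in the first step: one must check carefully that the infinite-determinant $\det(\pi^\perp_+: W({\bf t})\to\HH_+)$ is computed relative to an \emph{admissible} basis (in the Segal–Wilson sense) so that it genuinely equals the finite $n\times n$ determinant with no spurious infinite factor, and that the identification $W_2/W_1 \cong \Cn^N$ via~(\ref{bia_basis}) together with the complement $W_1^\perp\subset\HH_+$ spanned by~(\ref{ortho_basis}) is set up consistently with the orderings. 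Once the admissibility and the dual-pairing computation are pinned down, everything else is routine algebra on the already-established lemmas.
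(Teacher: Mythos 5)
Your proposal is correct and follows essentially the same route as the paper: compute the determinant of $\pi^\perp_+$ in the adapted bases (the $b_J$'s modulo $W_1=r(z)\HH_+$ paired against the $d_{(i\mu)}$'s of $W_1^\perp$), use the block-triangular structure coming from $W_1\subset\HH_+$ to reduce to the finite $n\times n$ block whose entries are the contour integrals defining $A(B,D)$, and then obtain (\ref{tau_A0BDR}) and (\ref{tau_AB}) from the lemmas $A(B,D)=A^0(B,D)r_D(B)$ and $A(B)=K(D)A(B,D)$. The only cosmetic difference is that the paper builds up through the special cases (diagonal $B$ with $D=\Lambda_n$, then general Jordan $B$) before the general case, whereas you treat the general case directly.
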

\begin{proof} 
 We begin with the case $D= \Lambda_n$, $r_D(z)=z^n$ and  $B $ of
 the diagonal form~(\ref{diagB}); that is, all Jordan blocks of $B$ have dimension $N_j =1$. 
  For this case, a basis for the space $W(B,C, \Lambda_n)$ may be taken as
\be
\{q_a,  z^{n+i}\}_{a=1, \dots , n}, \quad 
 i\in \Nn, 
\ee
where
\be
q_a :=  \sum_{j=1}^N C_{a, j}{z^n\over z-\beta_j}
\ee
Since the subspace $z^n\HH_+ \ss W$ is invariant under the $\Gamma_+$ action,
we may choose 
\be
\{{q_a({\bf t}), z^{n+i} \}_{a=1, \dots , n},  \quad  i\in \Nn}
\ee
as basis  for the orbit space
\be
W({\bf t})  := \gamma_+({\bf t}) W,
\ee 
where
\bea
q_a({\bf t}) &\&:=  \sum_{j=1}^N {C_{a, j} e^{\sum_{i=1}^\infty t_i \beta_j ^i} z^n\over z-\beta_j} \cr
&\&\phantom{:} = \sum_{j=1}^N {C_{a, j} ({\bf t})z^n\over z-\beta_j}.
\eea
Here $\{C_{a, j} ({\bf t})\}$ are the matrix elements of
\be
\label{C_t}
C({\bf t}) := C e^{\sum_{i=1}^\infty t_i B^i}
\ee
The $\tau$-function $\tau_{W(B, C, \Lambda_n)}({\bf t})$ is the determinant of the orthogonal projection to $\HH_+$; that is, of the  infinite matrix  $\MM$ of scalar products of the basis elements  $\{z^i\}_{i\in \Nn}$ of $\HH_+$ with those of $W(B, C, \Lambda_n)$.
We have, for $1\le i, a \le n$,
\be
\MM_{i,a} = \left (z^{i-1}, q_a({\bf t})\right)= 
 {1\over 2\pi i}\oint_{|z|=1} {dz}  \sum_{j=1}^N C_{a, j}({\bf t}) {z^{n-i}\over z -\beta_j} =   \sum_{j=1}^N C_{a, j}({\bf t})\beta_j^{n-i}
\ee
and
\be
\MM_{n+ i,j} = \MM_{j,n+i}  = \delta_{n+i, j}, \quad i, j \in \Nn^+.
 \ee
Since $\MM$ is block triangular,  only its first $N \times N$ block contributes to its determinant, and 
 therefore
 \be
 \tau_{W(B, C, \Lambda_n)}({\bf t}) = \det (\MM )= \det \left(A(B) e^{\sum_{i=1}^\infty t_iB^i} C^T\right),
 \ee
 where $A(B) = V_{n, N} (\beta)$ is the truncated Vandermonde matrix defined in~(\ref{VdM_nN}).

The case where $B$ has general Jordan blocks of dimensions $\{N_j\}_{j=1, \dots M}$ and $D= \Lambda_n$ is proved in
exactly the same way. The basis elements $q_a ({\bf t})$ are replaced by
\be
q_a({\bf t}) := \sum_{j=1}^M\sum_{\nu=1}^{N_j}
C_{a, (j, \nu)}({\bf t}){ z^n \over (z-\beta_j)^\nu}, 
\ee
where  $C_{a, (j, \nu)}({\bf t})$ are the matrix elements of
\be
C({\bf t}) =C  e^{\sum_{i=1}^\infty t_i (B^T)^i }
\ee
and $B$ is the $N\times N$ matrix of  general nondegenerate Jordan form defined in
eq.~(\ref{B_jordan}). The residue calculation is carried out in the same way, with the higher order poles 
giving derivatives of the columns of the truncated Vandermonde matrix~(\ref{VdM_nN})
 to the same order as the pole. The resulting form for the matrix $A$ is the generalized
 truncated Vandermonde matrix  $V'_{n, N}$ with elements
 \be
A_{a, (j, \nu)} ={(n - a +1 )! \over (n- a - \nu+1)!}  (\beta_j)^{n - a - \nu +1 },
\ee
which is the matrix $A(B,D)$ in eq~(\ref{ABD}) for the special case  $m=1, \ \delta_1=0$.

To prove the general case, where both $B$ and $D$ have general Jordan block structure  (\ref{B_jordan})
and (\ref{D_jordan}), we proceed in the same way, but replace
the monomials $\{z^i\}_{i=0, \dots n-1}$ spanning the orthogonal complement to $z^n\HH_+$
in $\HH_+$ by the elements $\{d_{i\mu}(z)\}$  defined in (\ref{ortho_basis})
that span the orthogonal complement to $W_1=r(z) \HH_+$ and the basis elements spanning $W/W_1$ 
and $W({\bf t})/W_1$  by
\be
 q_{i \mu} := \sum_{j=1}^M\sum_{\nu =1}^{N_j}  C_{i \mu, j \nu} { r_D(z)  \over (z - \beta_j)^\nu} ,
  \quad 1\le i \le m , \  1\le \mu \le n_i .
\ee
and
\be
 q_{i \mu}({\bf t}) := \sum_{j=1}^M\sum_{\nu =1}^{N_j}  C_{i \mu, j \nu}({\bf t}){ r_D(z)  \over (z - \beta_j)^\nu} ,
  \quad 1\le i \le m , \  1\le \mu \le n_i .
\ee
Computation of the determinant of the orthogonal projection  leads to the evaluation
of the contour integrals
\be
{1\over 2\pi i} \oint dz {r_D(z) \over (z-\delta_i)^\mu (z-\beta_j)^\nu}   = A_{_{(i\mu), (j\nu)} }(B,D),
\ee
resulting in the general form (\ref{tau_BCD}) and (\ref{tau_AB}).  Equalities (\ref{tau_A0BDR}) 
and (\ref{tau_AB}) then follow from (\ref{A0RBD}) and (\ref{KABD}). 
\end{proof}
\br
The expressions (\ref{tau_BCD})-(\ref{tau_AB}) can be rewritten equivalently as
\be
\tau_{W(B,C,D)} = \det(P(D))^{-1}\det(\tilde{A}e^{\sum_{i=1}^\infty t_i B^i}\tilde{C}^T)
\ee
where
\be
\tilde{C}^T =(R(B)^T)^{-1} C^T, \quad \tilde{A}= P(D)^T A(B,D) R(B)
\label{Ktilde_Rtilde_def}
\ee
and $P(D) \in GL(n)$ and $R(B) \in GL(N)$ are arbitrary elements of the stabilizers of 
$D$ and $B$, respectively, under conjugation.
The rank-1 condition satisfied by the resulting matrix $\tilde{A}$ is then
\be
\tilde{A}B - D^T \tilde{A} = {\vf} {\vg}^T
\label{Atilde_rank1}
\ee
where
\be
{\vf} = P(D)^T \ef{D}, \quad {\vg} = R(B) \eg{B}
\label{fgKRdef}
\ee
are now arbitrary vectors determined by a suitable choice of $P(D)$, $R(B)$, satisfying
 the generic conditions
\be
\label{eq:generic_vec}
f_{i, 1} \ne 0, \ g_{j, 1} \ne 0,  \quad 1\le  i=1 \ \le m, \ 1 \le j \le  M.
\ee
Those cases with degeneracies violating (\ref{eq:generic_vec}) will be dealt with in a subsequent publication.
\er

    The solutions of this type are ``generic'', in the sense that they form an open dense set defined
by the requirements that that eigenvalues $(\delta_1, \dots, \delta_m)$ and $(\beta_1, \dots , \beta_M)$ 
be distinct and the conditions (\ref{eq:generic_vec}) be satisfied. In the sequel \cite{BFH}. a complete classification
of all solutions of the rank-1 condition (\ref{rank1_D}), will given, for $B$ and $D$ having arbitrary Jordan forms,
within equivalence under the action of their stabilizers.
\br
 Eq.~(\ref{KABD}) shows that the choice of  the matrix $D$ is in fact immaterial,
since it only affects the $\tau$-function by a constant multiplicative factor. This is a little surprising,
since a change in $D$ is equivalent to a change in $r(z)$, and hence is actually a gauge transformation 
that should give rise to a multiplicative linear exponential factor. The absence of this factor is
due to the fact that an ``admissible section'' must be chosen in the dual determinantal line 
bundle (cf. \cite{SW}) when considering the lift of the $\Gamma_{\pm}$ actions from the Grassmannian. 
 From this it follows that, whereas the two abelian subgroups $\Gamma_+$ and $\Gamma_-$, 
when  acting upon the Hilbert space $\HH$ mutually commute, the actions  induced on the holomorphic
sections of the  dual determinantal line bundle over the Grassmannian, only commute within a scalar multiplicative factor;
i.e., there is a central extension. 

 To obtain this in the finite Grassmannian approach, we recall that the notion of the determinant of the 
 projection map $W({\bf t}) \ra \HH_+$ is only well defined if a basis is chosen 
for both $W({\bf t})$ and $\HH_+$ allowing us to view this as an endomorphism. In the proof of (\ref{tau_BCD}),
we have chosen the basis $\{ r_D(z)\HH_+ z^j, q_{i  \mu}\}_{j\in \Nn, \ 1\le i \le m , \  1\le \mu \le n_i}$ for
$W(B,C,D)$ and $\{ r_D(z)\HH_+ z^j, d_{i  \mu}\}_{j\in \Nn, \ 1\le i \le m , \  1\le \mu \le n_i}$ for $
\HH_+$. 
To obtain the missing gauge factor $e^{-\sum_{i=1}^\infty t_i \tr(D)^i}$, as in eq. (\ref{tau_BCD_gauge}), Theorem \ref{theorem_fermionic_tau_BCD}  below,  we may choose instead the time dependent basis  $\{ r_D(z)\HH_+ z^j, \gamma_+({\bf t})d_{i  \mu}(z)\}_{j\in \Nn, \ 1\le i \le m , \  1\le \mu \le n_i}$
for $\HH_+$.
The corresponding  calculation is made in the next subsection using the  fermionic representation, in which this gauge factor appears automatically, since the fermionic  Fock space is precisely the space of   admissible holomorphic  sections of the dual determinantal line bundle.

\er

%%%%%%%%%%%%%%%  subsection  3.3  %%%%%%%%%%%%%%%%%%%

\subsection{Fermionic representation}

We now give  an alternative derivation of  formula~(\ref{tau_BCD}) using fermionic operators. 
Define the following Fermi creation operators
 \bea
 \label{w_I_C}
 w_{(i, \mu)} (B, C, D) &\&:= \sum_{j=1}^M\sum_{\nu=1}^{N_j} C_{(i, \mu),(j,  \nu)} \Psi_D^{\nu} (\beta_j),\\
i  =1, \dots , m, &\& \quad \mu = 1, \dots , n_i,
 \nonumber
 \eea
 where 
  \be
 \Psi_D^{\nu} (\beta_j) := {1\over 2 \pi i } \oint_{z= \beta_j}{ r_D(z) \psi(z)\over\left(z-\beta_j\right)^\nu} dz
 ={1\over (\nu-1)!}{\partial^\nu \over \partial \beta_j^\nu}(\psi(\beta_j) r_{D}(\beta_j)), 
 \ee
 with the integral taken over a small contour containing only the pole at $z=\beta_i$. 
  Similarly, define the annihilation operators
   \be
 \left(\Psi^{\mu } (\delta_i)\right)^\dag := {1\over 2 \pi i } \oint_{z= \delta_i}{ \psi^\dag(z) \over\left(z-\delta_j\right)^\mu}  dz
 = {1\over (\mu-1)!} {\partial^{\mu-1} \over \partial \delta_j^{\mu-1}}\psi(\delta_j) .
 \ee
\begin{lemma}
\label{plucker_W_W1_W2}
The images of the subspaces $W_1$, $W_2$ and $W$ under the Pl\"ucker map~(\ref{Pluckermap}),  are
\bea
\label{Pl_W1}
\grP\grl(W_1) &\&=   \prod_{i=1}^m \prod_{\mu=1}^{n_i } \left(\Psi^{\mu}(\delta_i)\right)^{\dagger} |0 \rangle, 
\\
\label{Pl_W2}
\grP\grl(W_2) &\&=    \prod_{j=1}^M \prod_{\nu=1}^{N_j } \Psi_D^{\nu} (\beta_j) 
 \prod_{i=1}^m \prod_{\mu=1}^{n_i } \left(\Psi^{\mu}(\delta_i)\right)^{\dagger}|0 \rangle,
\\
\label{Pl_W}
\grP\grl(W) &\&= \prod_{i=1}^m\prod_{\nu=1}^{n_j} w_{(j, \nu)} (B,C, D)   
 \prod_{i=1}^m \prod_{\mu=1}^{n_i } \left(\Psi^{\mu}(\delta_i)\right)^{\dagger} |0 \rangle .
\eea
\end{lemma}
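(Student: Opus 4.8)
The plan is to read off the three identities from the standard dictionary between the Pl\"ucker embedding and products of fermionic creation and annihilation operators. The facts used are: $\grP\grl(\HH_+)=|0\rangle$; contracting $\grP\grl(V)$ with the annihilation operator representing a linear functional $\xi$ that is nontrivial on $V$ gives a nonzero multiple of $\grP\grl(V\cap\ker\xi)$; and wedging $\grP\grl(V)$ with a creation operator $\hat v=\sum_{i\in\Zn}v_i\psi_i$ gives $\grP\grl(V\oplus\span\{v\})$, only the class of $v$ in $\HH/V$ being relevant. I would therefore reduce the lemma to three matchings: (a) on $\HH_+$, the operator $\left(\Psi^{\mu}(\delta_i)\right)^{\dagger}$ represents the functional $f\mapsto f^{(\mu-1)}(\delta_i)/(\mu-1)!$; (b) the common kernel in $\HH_+$ of these functionals, for $1\le\mu\le n_i$ and $1\le i\le m$, is $W_1=r_D(z)\HH_+$; (c) for each $(j,\nu)$, the creation operator $\Psi_D^{\nu}(\beta_j)$ represents, modulo $W_1$, the vector $b_{(j\nu)}(z)=r_D(z)(z-\beta_j)^{-\nu}$. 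Granting these, (\ref{Pl_W1}) is (b) applied to (a); (\ref{Pl_W2}) follows by wedging the $\Psi_D^{\nu}(\beta_j)$ onto $\grP\grl(W_1)$ and using (c) together with $W_2=W_1\oplus\span\{b_{(j\nu)}\}$; and (\ref{Pl_W}) follows from (\ref{Pl_W2}) by replacing the $\Psi_D^{\nu}(\beta_j)$ with the combinations $w_{(i\mu)}(B,C,D)=\sum_{j,\nu}C_{(i\mu),(j\nu)}\Psi_D^{\nu}(\beta_j)$, using linearity of creation operators in their vectors and $W=W_1\oplus\span\{\sum_J C_{I,J}b_J\}$.

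Steps (a) and (b) are short. From $\psi^{\dagger}(z)=\sum_{l\in\Zn}\psi^{\dagger}_{l}z^{-l-1}$ and the anticommutation relations~(\ref{anticomm}) one gets $[\psi^{\dagger}(z),\hat f\,]_{+}=f(z)$ for every $f\in\HH_+$ (the value of $f$), whence
\[
\left[\left(\Psi^{\mu}(\delta_i)\right)^{\dagger},\hat f\,\right]_{+}=\frac{1}{2\pi i}\oint_{z=\delta_i}\frac{f(z)}{(z-\delta_i)^{\mu}}\,dz=\frac{f^{(\mu-1)}(\delta_i)}{(\mu-1)!},
\]
which is (a). Since $W_1=r_D(z)\HH_+$ is exactly the set of $f\in\HH_+$ vanishing to order $\ge n_i$ at each $\delta_i$, and the $n$ functionals in (a) form a linearly independent family (Hermite interpolation at the distinct nodes $\delta_1,\dots,\delta_m$), (b) follows; contracting $|0\rangle=\grP\grl(\HH_+)$ with the $\left(\Psi^{\mu}(\delta_i)\right)^{\dagger}$ in the prescribed order then produces a nonzero decomposable vector spanning $W_1$, i.e.\ (\ref{Pl_W1}). (The $\HH_-$-directed parts of these operators contribute nothing, as every subspace occurring along the way still lies inside $\HH_+$.)

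For (c), write $\Psi_D^{\nu}(\beta_j)=\sum_{i\in\Zn}c_i\psi_i$ with $c_i=\big[\partial_z^{\nu-1}\big(r_D(z)z^i\big)\big]_{z=\beta_j}/(\nu-1)!$; rearranging the double sum via $r_D(z)z^i=\sum_a (r_D)_a z^{a+i}$ shows that the symbol $\sum_i c_i z^{-i-1}$ of $\Psi_D^{\nu}(\beta_j)$ equals $r_D(z)\,\tau(z)$, where $\tau(z)=\sum_{k\in\Zn}\gamma_k z^k$ and $\gamma_k=\big[\partial_w^{\nu-1}w^{-k-1}\big]_{w=\beta_j}/(\nu-1)!$. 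The strictly-negative-degree part of $\tau$ sums to $\sum_{l\ge 0}{l+\nu-1 \choose \nu-1}\beta_j^{\,l}z^{-\nu-l}=(z-\beta_j)^{-\nu}$, a genuine element of $\HH_-$, so the symbol of $\Psi_D^{\nu}(\beta_j)$ is $b_{(j\nu)}(z)+\sum_{k\ge 0}\gamma_k\big(r_D(z)z^k\big)$. The second term is a formal combination of the basis vectors $\{r_D(z)z^k\}_{k\ge 0}$ of $W_1$, so its creation operator annihilates $\grP\grl(V)$ for every $V$ with $W_1\ss V$; hence on every such $\grP\grl(V)$ the operator $\Psi_D^{\nu}(\beta_j)$ acts exactly as the creation operator of $b_{(j\nu)}$, which is (c), and then (\ref{Pl_W2}) and (\ref{Pl_W}) follow as above.

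The main obstacle is that neither $\Psi_D^{\nu}(\beta_j)$ nor $\left(\Psi^{\mu}(\delta_i)\right)^{\dagger}$ comes from an honest vector (resp.\ covector) of $\HH=L^2(S^1)$: they are values of the fermionic fields at the points $\beta_j,\delta_i$, which lie strictly inside the unit circle, and in particular the tail $\sum_{k\ge 0}\gamma_k z^k$ in the symbol of $\Psi_D^{\nu}(\beta_j)$ diverges on $|z|=1$. One cannot, therefore, apply the dictionary with the naive vectors, and the real content is the verification in (b) and (c) that, modulo the fixed $\Gamma_+$-invariant subspaces $W_1$ and $W_2$, the action of these operators is nonetheless the naive one --- the divergent tail of $\Psi_D^{\nu}(\beta_j)$ being absorbed term by term into $W_1$ because each $r_D(z)z^k$ already belongs to $W_1$. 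Once this is in place, only the bookkeeping of orderings and signs in the semi-infinite wedge products remains, which promotes the projective identities to the equalities as stated and is routine.
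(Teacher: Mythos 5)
Your argument is correct and follows essentially the same route as the paper's proof: the $\left(\Psi^{\mu}(\delta_i)\right)^{\dagger}$ are identified with the derivative-evaluation functionals whose joint kernel in $\HH_+$ is $W_1=r_D(z)\HH_+$, and the $\Psi^{\nu}_D(\beta_j)$ are identified, modulo operators that act trivially on any subspace containing $W_1$, with exterior multiplication by the $b_{(j\nu)}$, after which linearity in $C$ gives $\grP\grl(W)$. Your explicit symbol computation in step (c), showing that the divergent tail of $\Psi^{\nu}_D(\beta_j)$ is a formal combination of the basis $\{r_D(z)z^k\}$ of $W_1$, makes precise what the paper asserts in a single sentence.
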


\begin{proof} 
We begin with~(\ref{Pl_W1}). Viewing $\{\psi_j^\dag \sim e^j\}$ as elements of the dual space $\HH^*$, acting on $\FF$ through the Clifford representation by inner products, 
as in eq.~(\ref{psi_i_psidag_i}), a basis for the annihilator 
of $W_1 \ss \HH_+$ consists of   $\{(\Psi^{\mu}(\delta_i))^\dag\}_{i=1, \dots d;\ \mu=1, \dots n_i}$. The latter follows from the fact that $(\Psi^{\mu )}(\delta))^\dag$,  as a linear form on $\HH$,
acting upon an element $f(z) \in \HH_+$, under the identification $z^i \sim e_{-i-1}$, evaluates to its derivative at $z=\delta$:
\be
(\Psi^\mu (\delta))^\dag( f(z)) = {1\over (\mu -1)!}f^{(\mu -1)}(\delta).
\ee
Eq.~(\ref{Pl_W1}) follows, since the image $\grP\grl(U)$, of any subspace $U \ss \HH_+$
under the Pl\"ucker map $\grP\grl: U \ra \FF$  is the joint kernel of the
elements of its annihilator within the Clifford representation which, for $U=W_1$, is given by the r.h.s.\ of eq.~(\ref{Pl_W1}).

To prove~(\ref{Pl_W2}), we note that the subspace $W_2$ is obtained by extending $W_1$ by the basis elements $\{q_{j \nu}\}_{1\le j\le M , \ 1 \le \nu \le N_j}$. The Pl\"ucker image of $W_2$ is therefore obtained by applying the wedge product of the elements $\{q_{j \nu}\}$ to the Pl\"ucker image (\ref{Pl_W1}) of $W_1$. But this is equivalent to applying
the product of the operators $\Psi_D^{\nu} (\beta_j)$, since each of these may be expressed as the exterior product with 
$q_{j\, \nu-1}$ plus a linear combination of lower order terms $\{q_{j, \nu -i}\}$, $i=2, \dots, \nu$.
Its Pl\"ucker image is therefore given by~(\ref{Pl_W2}).

Finally, to obtain (\ref{Pl_W}) as the Pl\"ucker image of $W$, we replace the product of creation operators 
$\prod_{\nu=1}^{N_j } \Psi_D^{\nu} (\beta_j) $ in (\ref{Pl_W2}) by the product
$ \prod_{i=1}^m \prod_{\mu=1}^{n_i } \left(\Psi^{\mu}(\delta_i)\right)^{\dagger}$ corresponding
to the basis elements  $\{q_{i \mu}\}_{1 \le i \le m, \  1 \le \mu \le n_i}$ that complete the basis for $W$.
\end{proof}

\medskip
From this lemma and the equivariance of the Pl\"ucker map,  it follows that the $\tau$-function of  
Theorem~\ref{theorem_tau_BCD},   
eq.~(\ref{tau_BCD}), may equivalently be expressed in fermionic form as:
\begin{theorem}
\label{theorem_fermionic_tau_BCD}
 \bea
  \langle 0 | \hat{\gamma}_+ ({\bf t}) \prod_{j=1}^m\prod_{\nu=1}^{N_j} w_{(j, \nu)} (B,C)  
  \prod_{i=1}^m \prod_{ \mu =1}^{n_i}   \left(\Psi^{\mu }(\delta_i)\right)^{\dagger}  | 0 \rangle  
  &\&=   e^{-\sum_{i=1}^\infty t_i \tr(D)^i} \tau_{W(B, C, D)} ({\bf t}) 
   \label{fermionic_tau_BCD} \cr
=  e^{-\sum_{i=1}^\infty t_i \tr(D)^i}\det\left(A(B, D) e^{\sum_{i=1}^\infty t_i B^i} C^T\right)  &\&.
  \label{tau_BCD_gauge}
\eea
\end{theorem}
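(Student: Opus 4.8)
The plan is to start from the explicit fermionic description of $\grP\grl(W(B,C,D))$ furnished by Lemma~\ref{plucker_W_W1_W2} and to commute $\hat\gamma_+({\bf t})$ through the operator string to the vacuum, where it produces a scalar (the gauge factor) times the Pl\"ucker image of the flowed subspace; the $|0\rangle$-component of the latter is then the finite determinant supplied by Theorem~\ref{theorem_tau_BCD}. Concretely, by eq.~(\ref{Pl_W}) the operator product on the left of~(\ref{tau_BCD_gauge}) applied to $|0\rangle$ equals $\grP\grl(W(B,C,D))$, so the left-hand side is the $|0\rangle$-component of $\hat\gamma_+({\bf t})\,\grP\grl(W(B,C,D))$.

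The key step is to push $\hat\gamma_+({\bf t})$ past the product, using $\hat\gamma_+({\bf t})|0\rangle=|0\rangle$ (each current generator $J_i$, $i\ge1$, annihilates the vacuum) together with the standard conjugation rules
\[
\hat\gamma_+({\bf t})\,\psi(z)\,\hat\gamma_+({\bf t})^{-1}=e^{\zeta(z,{\bf t})}\psi(z),\qquad
\hat\gamma_+({\bf t})\,\psi^\dagger(z)\,\hat\gamma_+({\bf t})^{-1}=e^{-\zeta(z,{\bf t})}\psi^\dagger(z),\qquad \zeta(z,{\bf t}):=\textstyle\sum_{i\ge1}t_iz^i ,
\]
which follow from $[J_i,\psi(z)]=z^i\psi(z)$, $[J_i,\psi^\dagger(z)]=-z^i\psi^\dagger(z)$ and~(\ref{anticomm}). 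Conjugating each annihilation factor $\left(\Psi^{\mu}(\delta_i)\right)^{\dagger}=\frac1{2\pi i}\oint_{z=\delta_i}(z-\delta_i)^{-\mu}\psi^\dagger(z)\,dz$ inserts $e^{-\zeta(z,{\bf t})}$; writing $e^{-\zeta(z,{\bf t})}=e^{-\zeta(\delta_i,{\bf t})}\bigl(1+O(z-\delta_i)\bigr)$ and Taylor-expanding shows it becomes $e^{-\zeta(\delta_i,{\bf t})}$ times a unit-diagonal (unipotent) recombination of $\left(\Psi^{1}(\delta_i)\right)^{\dagger},\dots,\left(\Psi^{n_i}(\delta_i)\right)^{\dagger}$; the unipotent part drops out of the wedge, so the entire annihilation block gets multiplied by $\prod_i e^{-n_i\zeta(\delta_i,{\bf t})}=e^{-\sum_i t_i\tr(D)^i}$. (This is the fermionic shadow of the $\Gamma_+$-invariance of $W_1=r_D(z)\HH_+$: the flow merely rescales its fixed Pl\"ucker vector, and that rescaling is precisely the gauge factor.) Conjugating each creation factor $\Psi_D^{\nu}(\beta_j)=\frac1{2\pi i}\oint_{z=\beta_j}r_D(z)(z-\beta_j)^{-\nu}\psi(z)\,dz$ inserts $e^{\zeta(z,{\bf t})}$; Taylor-expanding at $z=\beta_j$ reproduces exactly the matrix elements of $e^{\zeta(J_{N_j}(\beta_j),{\bf t})}$ acting on the index $\nu$, so $\hat\gamma_+({\bf t})\,w_{(i,\mu)}(B,C,D)\,\hat\gamma_+({\bf t})^{-1}=w_{(i,\mu)}(B,C({\bf t}),D)$ with $C({\bf t})=Ce^{\sum_i t_iB^i}$ as in~(\ref{C_t}).

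Reassembling, and invoking Lemma~\ref{plucker_W_W1_W2} once more for $W(B,C({\bf t}),D)$, one obtains $\hat\gamma_+({\bf t})\,\grP\grl(W(B,C,D))=e^{-\sum_i t_i\tr(D)^i}\,\grP\grl(W(B,C({\bf t}),D))$. Extracting the $|0\rangle$-component and applying Theorem~\ref{theorem_tau_BCD}, eq.~(\ref{tau_BCD}), to $W(B,C({\bf t}),D)$ — whose $\tau$-function is $\det(A(B,D)C({\bf t})^T)=\det(A(B,D)e^{\sum_i t_iB^i}C^T)=\tau_{W(B,C,D)}({\bf t})$ — gives~(\ref{tau_BCD_gauge}). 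Alternatively the last step can be done by hand: only creation--annihilation contractions survive, so Wick's theorem turns the residual vacuum expectation into a determinant whose entries are $\langle0|\Psi_D^{\nu}(\beta_j)\left(\Psi^{\mu}(\delta_i)\right)^{\dagger}|0\rangle$, and evaluating the double contour integral with kernel $\langle0|\psi(z)\psi^\dagger(w)|0\rangle=(z-w)^{-1}$ (residue at $w=\delta_i$, then at $z=\beta_j$, noting $r_D$ has a zero of order $\ge\mu$ at $\delta_i$) returns precisely $A_{(i\mu),(j\nu)}(B,D)$ of~(\ref{ABD}), so the determinant is $\det(A(B,D)C({\bf t})^T)$.

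The step I expect to be the main obstacle is pinning down the gauge factor exactly in the conjugation of the $W_1$-annihilators: one must verify that the $\mu$-recombination at each $\delta_i$ is genuinely unit-diagonal unipotent, so that nothing survives beyond the scalar $e^{-n_i\zeta(\delta_i,{\bf t})}$ and these multiply to $e^{-\sum_i t_i\tr(D)^i}$. A lesser bookkeeping nuisance is the overall fermionic sign in the Wick evaluation (and the harmless $B\leftrightarrow B^T$ transpose ambiguity already present in the proof of Theorem~\ref{theorem_tau_BCD}), which is fixed once and for all by matching both sides at ${\bf t}=\0b$.
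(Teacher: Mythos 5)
Your proof is correct and follows essentially the same route as the paper's: commute $\hat{\gamma}_+({\bf t})$ through the operator string using vacuum invariance, so that the annihilation block produces the gauge factor $e^{-\sum_k t_k\tr(D^k)}$ while the creation block replaces $C$ by $C({\bf t})$, and then evaluate the remaining vacuum expectation by Wick's theorem (equivalently, via Theorem \ref{theorem_tau_BCD} at ${\bf t}=\0b$ applied to $C({\bf t})$), the contractions $\langle 0|\Psi_D^{\nu}(\beta_j)\left(\Psi^{\mu}(\delta_i)\right)^{\dagger}|0\rangle = A_{(i\mu),(j\nu)}(B,D)$ giving the determinant $\det\left(A(B,D)C^T({\bf t})\right)$. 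Your explicit unit-diagonal (unipotent) recombination argument for the gauge factor is just a more detailed rendering of the step the paper states tersely, so the two arguments coincide in substance.
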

\begin{proof}  
The proof is based on  Wick's identity, starting from the fermionic expression~(\ref{fermionic_tau_BCD}).
It follows from Lemma~\ref{plucker_W_W1_W2} that $\tau_{W(B, C, D)}({\bf t}) $ is given by
\be
\label{tau_WBCD_fermi}
\tau_{W(B,C,D)}({\bf t})  = \langle 0 |\hat{ \gamma}_+({\bf t})
\grP\grl(W)  \rangle =\langle 0 | \hat{ \gamma}_+({\bf t}) \prod_{i=1}^m \prod_{\mu=1}^{n_i }  
\left(\Psi^{\mu}(\delta_i)\right)^{\dagger} \prod_{j=1}^m\prod_{\nu=1}^{n_j} w_{(j, \nu)} (B,C)   |0 \rangle .
\ee
We first note that,  since the group $\Gamma_+$ stabilizes the vacuum,  left multiplication by the element  
$\hat{\gamma}_+({\bf t})$  in (\ref{tau_WBCD_fermi}) is equivalent to conjugation of all the Fermi creation and annihilation operators:
\bea
(\Psi^{\mu}(\delta_i))^\dag  \ra  \hat{\gamma}_+({\bf t}) (\Psi^{\mu}(\delta_i))^\dag
 \hat{\gamma}_+^{-1}({\bf t}) 
 &\&= { 1\over (\mu-1)!} {\partial^{\mu-1}\over \partial \delta_i ^{\mu -1}}
 \left( e^{-\sum_{k=1}^\infty t_k \delta_i^k} \ \psi(\delta_i)^\dag \right), 
  \cr
 \Psi^\nu_D(\beta_j)  \ra   \hat{\gamma}_+({\bf t}) \Psi^\nu(\beta_i)
 \hat{\gamma}_+^{-1}({\bf t}) 
 &\&= { 1\over (\nu-1)!}{\partial^{\nu-1}\over \partial \beta_j ^{\nu -1}}
 \left( e^{\sum_{k=1}^\infty t_k \beta_j^k} \ \psi(\beta_j)r_D(\beta_j) \right). 
 \eea
The net effect is to multiply the vacuum expectation value of the terms
without the $\hat{\gamma}_+({\bf t}) $ factor by an overall linear exponential factor
$e^{-\sum_{k=1}^\infty {t_k}\tr(D^k)}$ and replace the matrix $C$ in the expression~(\ref{w_I_C}) for $w_{(i, \mu)}(B, C, D)$ by $C({\bf t})$, as in eq.~(\ref{C_t}).
\be
\tau_{W(B,C, D)}({\bf t})  =e^{-\sum_{k=1}^\infty {t_k}\tr(D^k)}
\langle 0 |  \prod_{i=1}^m \prod_{\mu=1}^{n_i }  
\left(\Psi^\mu(\delta_i)\right)^{\dagger}  \prod_{j=1}^m\prod_{\nu=1}^{n_j} w_{(j, \nu)} \left(B,C({\bf t}) \right)  |0 \rangle .
\ee
By Wick's identity, the vacuum matrix element can be written as an $n \times n$ determinant
\bea
\langle 0 |  \prod_{i=1}^m \prod_{\mu=1}^{n_i }  
\left(\Psi^\mu(\delta_i)\right)^{\dagger}  \prod_{j=1}^m\prod_{\nu=1}^{n_j} w_{(j, \nu)} \left(B,C({\bf t}) \right)  |0 \rangle 
&\& = \det \langle 0 |  \left(\Psi^\mu(\delta_i)\right)^{\dagger}  w_{(j, \nu)} \left(B,C({\bf t}) \right)  |0 \rangle \cr
 &\&= \det(A(B,D) C^T({\bf t})),
\eea
since
\bea
\langle 0 |  \Psi^\nu(\beta_j)(\Psi^\mu(\delta_i))^\dag |0\rangle 
&\&= {1\over 2\pi i} { 1\over (\mu-1)!} {\partial^{\mu-1}\over \partial \delta_i^{\mu -1}}
\oint_{z=\beta_j}{\langle 0 | \psi(z) \psi^\dag(\delta_i) | 0 \rangle r_D(z)\over (z-\beta_j)^\nu} dz \cr
&\& = {1\over 2\pi i} { 1\over (\mu-1)!} {\partial^{\mu-1}\over \partial \delta_i^{\mu -1}}
\oint_{z=\beta_j}{ r_D(z)\over (z-\beta_j)^\nu (z-\delta_i)} dz \cr
&\& ={1\over 2\pi i}
\oint_{z=\beta_j}{ r_D(z)\over (z-\beta_j)^\nu (z-\delta_i)^\mu} dz \cr
&\& = A_{(i\mu), (j\nu)}(B, D).
\eea
\end{proof}

    A slightly more general class of $\tau$-functions having a finite dimensional exponential determinantal form may be constructed as follows. For three positive integers $l, n, N$ with $l\le n$, $l \le N$, but no restriction relating $n$ and $N$, we again choose a pair  $D\in \Mat{n}{n}$, $B \in \Mat{N}{N}$ of  square matrices, and a rectangular matrix $A\in \Mat{n }{N}$ satisfying the rank-1 condition (\ref{rank1_D})  for a pair of vectors  ${\vf} \in  \Cn^n$,  ${\vg}\in \Cn^N$. 
Then choosing any pair $F\in \Mat{l }{n}$,  $C\in \Mat{l}{N}$ such that $FAC^T$ is invertible, the following $l \times l$ determinantal formula
\be
\tau^f_{(A,B,C,D,F)}({\bf t}):= \det(F e^{-\sum_{i=1}^\infty t_i (D^T)^i} A  e^{\sum_{j=1}^\infty t_jB^j}C^T)
\label{tau_ABCDF}
\ee
is shown to define a KP $\tau$-function in the next section. The appendix gives a simple direct verification that $\tau^f_{A,B,C,D,F}({\bf t})$ satisfies the Hirota bilinear relations.
 
  Assuming the eigenvalues of $B$ and $D$ to be mutually distinct, $\tau^f_{A,B,C,D,F}({\bf t})$ may also be expressed 
  in fermionic operator form as
 \be
 \tau^f_{(\tilde{A},B,\tilde{C},D,\tilde{F})}({\bf t}) = \langle 0 | \hat{\gamma}_+ ({\bf t}) \prod_{a=1}^lw_{a} (B,C)  
  \prod_{b=1}^l  v_{b}^{\dagger}(F, D)  | 0 \rangle,  \ee
 where  $\tilde{A}$,  $\tilde{C}$ are defined as in eq.~(\ref{Ktilde_Rtilde_def}), with $\tilde{A}$ satisfying the rank-1 condition
 (\ref{Atilde_rank1}) for ${\vf}$, ${\vg}$ defined in eq.~(\ref{fgKRdef}), $\tilde{F}$  defined by
 \be
 \tilde{F} = F (\tilde{K}^T(D))^{-1},
 \ee
 and the fermionic creation and annihilation operators $w_a$, $v_a^\dag$ are defined by
  \bea
 \label{w_b_C}
 w_a (B, C, D) &\&:= \sum_{j=1}^M\sum_{\nu=1}^{N_j} C_{a, (j,  \nu)} \Psi_D^{\nu} (\beta_j),  \\ 
v^\dag_{b} (F, D) &\& := \sum_{i=1}^m\sum_{\mu=1}^{n_i} F_{b,(i,  \mu)} (\Psi^{\mu} (\delta_i))^\dag,  \quad   a=1, \dots, l.
 \eea
 This may be shown using Wick's identity,  exactly as in the proof of Theorem \ref{theorem_fermionic_tau_BCD}.
Choosing $l=n$  and $\det F =1$ , the  case of eq.~(\ref{tau_BCD_gauge}) is recovered.

%%%%%%%%%%%%%%% Section 4  %%%%%%%%%%%%%%%%%%%

\section{Affine coordinates and Schur function expansions}

The aim of this section is to find the affine coordinates of the subspace $W\in Gr_{\HH_+}(\HH)$ corresponding to a given generalized Gekhtman-Kasman $\tau$-function $\tau^f_{(A,B,C,D,F)}(\tb)$ as defined in (\ref{tau_gen_GK}). This allows us to obtain the Pl\"ucker coordinates of $W$ through the Giambelli formula (\ref{gen_giambelli})
and hence the Schur function expansion of $\tau^f_{(A,B,C,D,F)}(\tb)$. In particular, this gives the Schur function expansion for $\tau$-functions of the original Gekhtman--Kasman form $\tau^f_{(A,B,C)}(\tb)$, up to an explicit gauge factor.

%%%%%%%%%%%%%%% Subsection 4.1  %%%%%%%%%%%%%%%%%%%

\subsection{Subspaces with geometric affine coordinates}

Let $n$ and $N$ be positive integers (with no assumption on their relative size) and consider a quintuple 
$(\vf, \vg, B, D, M)$ consisting of a pair of vectors
\be
({\vf}, {\vg}) \in \Cn^n \times \Cn^N,
\ee
a pair of square matrices
\be 
B\in \Mat{N }{N}, \quad D \in \Mat{n }{n}
\ee
whose eigenvalues are inside the unit disk in the complex plane, and a rectangular matrix
\be
M \in \Mat{N }{n}.
\ee

Consider the subspace $W(\vf, \vg, B, D, M)$ in the big cell of the Segal-Wilson Grassmannian defined by choosing
 the affine coordinates to have the special form
\be
\AA_{ij} = \vg^{T}B^jM (D^{T})^i\, \vf \qquad i,j=0,1,\dots.
\ee
This will be referred to as \emph{geometric affine coordinates}, since the dependence of $A_{ij}$ on $D$ and $B$ 
are given through the matrix-valued ``geometric sequences'' $\{(D^{T})^i\}_{i\in \Nn}$ and $\{B^j\}_{j\in \Nn}$. 
That is,
\be
W(\vf, \vg, B, D, M)=\mbox{span}\left\{z^i+\sum_{j=0}^{\infty}\left(\vg^{T}B^j M(D^{T})^i \vf\right) z^{-j-1}\right\}_{ i\in \Nn}.
\ee
Since the spectrum of $B$ is contained in the unit disk, the basis elements can be rewritten in terms of the resolvent of $B$ as
\be
z^i + \vg^{T}(z\,\Id{N}-B)^{-1}M(D^{T})^i \vf, \quad i=0,1,\dots
\ee
Therefore $W(\vf, \vg, B, D, M)$ is the graph of the operator
\bea
T\ \colon\ \HH_{+}&\&\to \HH_{-}\cr
\phi&\& \mapsto  \frac{1}{2\pi i}\oint_{|\zeta|=1}\left[\vg^{T}(z\,\Id{N}-B)^{-1}M(\zeta\, \Id{n}-D^{T})^{-1} \vf\right] \phi(\zeta)d\zeta.
\eea

\begin{theorem} 
We have the inclusions
\label{thm:gr_rat}
\be
\label{eq:poly_chain}
\mu_D(z)\HH_{+} \subset W(\vf, \vg, B, D, M) \subset \frac{1}{\mu_B(z)}\HH_{+},
\ee
where $\mu_D$ and $\mu_B$ are the minimal polynomials of the matrices $D$ and $B$, respectively.
\end{theorem}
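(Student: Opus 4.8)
The plan is to verify the two inclusions in \eqref{eq:poly_chain} separately, using the explicit form of the basis elements of $W(\vf, \vg, B, D, M)$ together with the Cayley–Hamilton theorem for the minimal polynomials. For the left inclusion $\mu_D(z)\HH_{+} \subset W(\vf, \vg, B, D, M)$, I would argue that every element of the form $\mu_D(z) z^k$, $k \in \Nn$, already lies in $W$. Indeed, a general element of $\mu_D(z)\HH_+$ is a (possibly infinite) linear combination $\sum_k c_k \mu_D(z) z^k$, so it suffices to treat each $\mu_D(z) z^k$. Writing $\mu_D(z) = \sum_{i} m_i z^i$ and forming the corresponding combination $\sum_i m_i \big(z^{i+k} + \sum_{j\ge 0}(\vg^{T}B^j M (D^{T})^{i+k}\vf) z^{-j-1}\big)$ of the given basis vectors, the polynomial part reassembles to $\mu_D(z) z^k$, while the tail coefficient of $z^{-j-1}$ becomes $\vg^{T}B^j M \big(\sum_i m_i (D^{T})^{i}\big)(D^{T})^{k}\vf = \vg^{T}B^j M\, \mu_D(D^{T})\,(D^{T})^{k}\vf = 0$, since $\mu_D(D)=0$ and hence $\mu_D(D^{T})=0$. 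Thus $\mu_D(z) z^k \in W$ for all $k$, giving the first inclusion.

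For the right inclusion $W(\vf, \vg, B, D, M) \subset \frac{1}{\mu_B(z)}\HH_{+}$, I would show that each basis element $w_i(z) := z^i + \vg^{T}(z\,\Id{N}-B)^{-1}M(D^{T})^i \vf$ becomes a polynomial after multiplication by $\mu_B(z)$. The key point is that $\mu_B(z)(z\,\Id{N}-B)^{-1}$ is a matrix-valued polynomial in $z$: if $\mu_B(z) = \prod(z-\beta_j)^{m_j}$ is the minimal polynomial, then $\mu_B(z)(z\,\Id{N}-B)^{-1} = Q_B(z)$ for a polynomial $Q_B$ of degree $\deg\mu_B - 1$ with matrix coefficients, because $(z-\lambda)^{-1}$ restricted to each generalized eigenspace is cancelled by the corresponding factor of $\mu_B$. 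Consequently $\mu_B(z) w_i(z) = \mu_B(z) z^i + \vg^{T} Q_B(z) M (D^{T})^i \vf$ is a polynomial in $z$, i.e.\ lies in $\HH_+$. Since this holds for every basis element and $\HH_+$ is closed, $\mu_B(z) W \subset \HH_+$, which is the desired second inclusion.

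The main obstacle — really the only subtlety — will be handling convergence and the closure operation cleanly: the subspace $W$ is the \emph{closure} of the span of the $w_i$, and $\mu_D(z)\HH_+$ likewise contains infinite linear combinations, so I must be careful that the finite-combination identities above pass to the Hilbert-space closures. For the right inclusion this is automatic since $f \mapsto \mu_B(z) f$ is bounded and $\HH_+$ is closed, so $\mu_B(z)\overline{W} \subset \overline{\HH_+} = \HH_+$. For the left inclusion I would note that the partial sums of $\sum_k c_k \mu_D(z) z^k$ are finite combinations of elements already shown to lie in $W$, and their limit lies in $\overline{W} = W$; alternatively, one can simply observe that $\mu_D(z)\HH_+ = W_1$ is exactly the subspace $r_D(z)\HH_+$ of \eqref{W1W2_rp} up to identifying $\mu_D = r_D$ in the regular case, and invoke the containment already established there. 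Once both inclusions are in place the theorem follows, and the chain \eqref{eq:poly_chain} exhibits $W(\vf, \vg, B, D, M)$ as a finite-dimensional subquotient of the type analyzed in Section 3, with $r(z) = \mu_D(z)$ and $p(z) = \mu_B(z)$.
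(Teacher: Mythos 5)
Your proof is correct and follows essentially the same route as the paper's: the left inclusion rests on $\mu_D(D^{T})=0$ killing the $\HH_-$ tail (the paper phrases this as the graph operator $T$ annihilating $\mu_D(z)\HH_+$, using density of polynomials and continuity), and the right inclusion rests on $\mu_B(z)$ clearing the denominator of the resolvent $(z\,\Id{N}-B)^{-1}$, which is exactly the paper's observation that the common denominator of the resolvent entries is $\mu_B(z)$. The only difference is cosmetic: you work with the explicit spanning vectors and pass to closures by boundedness of multiplication, while the paper works directly with the operator $T$ whose graph is $W(\vf,\vg,B,D,M)$.
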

\begin{proof} For any $\phi(z) \in \HH_{+}$ the function $[T\phi](z)$ is rational with  possible  poles only at the eigenvalues of $B$. The common denominator of the entries of the resolvent matrix $(z\,\Id{N}-B)^{-1}$ is equal to the minimal polynomial $\mu_B(z)$ and therefore

\be
[T\phi](z) \in \frac{1}{\mu_B(z)}\HH_{+}\ \quad \mbox{ for all } \phi(z) \in \HH_{+}.
\ee
On the other hand, if $p(z)$ is a polynomial divisible by the minimal polynomial $\mu_D(z)$, we have
\be
\mu_D(z)|p(z) \quad \Rightarrow \quad [Tp](z) = \vg^{T}(z\,\Id{N}-B)^{-1}M p(D^{T}) \vf=0.
\ee
By continuity,
\be
T(\mu_D(z)\phi(z)) = 0 \quad \mbox{ for all } \phi(z) \in \HH_{+},
\ee
since the polynomials form a dense subset in $\HH_{+}$. This clearly implies that
\be
\mu_D(z)\HH_{+} \subset W(\vf, \vg, B, D, M).
\ee
\end{proof}
The chain of inclusions  (\ref{eq:poly_chain}) evidently also implies the weaker one
\be
r_D(z)\HH_{+} \subset W(\vf, \vg, B, D, M) \subset \frac{1}{r_B(z)}\HH_{+},
\ee
where $r_D(z)$ and $r_B(z)$ stand for the characteristic polynomials of the matrices $D$ and $B$, respectively.
\begin{theorem} The $\tau$-function of $W(\vf, \vg, B, D, M)$ is given by
\bea
\label{eq:tau_geometric}
\nonumber
 &\&\tau_{W(\vf, \vg, B, D, M)(\tb)}\\
\nonumber
 &\&= \det\left(\Id{n} + e^{-\sum{t_i (D^{T})^i}}
\left(\frac{1}{2\pi i}\oint_{|z|=1} (z\,\Id{n}-D^{T})^{-1}\vf\vg^{T}(z\,\Id{N}-B)^{-1}e^{\sum t_i z^i}dz\right)M\right)_{n \times n}\\
\\
\nonumber
&\&= \det\left(\Id{N} + Me^{-\sum{t_i(D^{T})^i}}
\left(\frac{1}{2\pi i}\oint_{|z|=1} (z\,\Id{n}-D^{T})^{-1}\vf\vg^{T}(z\,\Id{N}-B)^{-1}e^{\sum t_i z^i}dz\right)\right)_{N\times N}.\\
\eea
\end{theorem}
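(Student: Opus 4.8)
The plan is to evaluate the determinant defining $\tau_{W(\vf,\vg,B,D,M)}(\tb)$ directly, by combining the standard formula for the $\tau$-function on the big cell of $\Gr_{\HH_{+}}(\HH)$ with the finite-rank factorization of the graphing operator $T$, and then collapsing the resulting Fredholm determinant to a finite one by means of the cyclic identity $\det(\Id{}+PQ)=\det(\Id{}+QP)$. First I would record the big-cell formula: if $W$ lies in the big cell and is the graph of $\AA\colon\HH_{+}\to\HH_{-}$ with matrix $\AA_{ij}$ as in~(\ref{affine_coords}), then, taking the admissible basis $\{z^{i}+\AA z^{i}\}_{i\in\Nn}$ of $W$ and the $\Gamma_{+}$-transported admissible basis $\{\gamma_{+}(\tb)z^{i}\}_{i\in\Nn}$ of $\HH_{+}$ (legitimate because the canonical lift of $\gamma_{+}(\tb)$ to the determinantal line bundle acts on $\HH_{+}$ with determinant one, cf.~\cite{SW}), one gets $\pi^{\perp}_{+}\gamma_{+}(\tb)(z^{i}+\AA z^{i})=\sum_{j}\bigl(\delta_{ij}+(\gamma_{+}(-\tb)\pi^{\perp}_{+}\gamma_{+}(\tb)\AA)_{ij}\bigr)\gamma_{+}(\tb)z^{j}$, hence
\[
\tau_{W}(\tb)=\det\bigl(\Id{\HH_{+}}+\gamma_{+}(-\tb)\,\pi^{\perp}_{+}\,\gamma_{+}(\tb)\,\AA\bigr).
\]
For $W=W(\vf,\vg,B,D,M)$ the relevant $\AA$ is precisely the operator $T$ of the excerpt, and $T$ factors through the finite-dimensional spaces as $T=\mathcal{G}\circ M\circ\mathcal{D}$, with $\mathcal{D}\colon\HH_{+}\to\Cn^{n}$, $\phi\mapsto\frac{1}{2\pi i}\oint_{|z|=1}(z\,\Id{n}-D^{T})^{-1}\vf\,\phi(z)\,dz$ (so $\mathcal{D}z^{i}=(D^{T})^{i}\vf$), $M\in\Mat{N}{n}$, and $\mathcal{G}\colon\Cn^{N}\to\HH_{-}$, $v\mapsto\vg^{T}(z\,\Id{N}-B)^{-1}v=\sum_{j\ge0}(\vg^{T}B^{j}v)\,z^{-j-1}$; in particular $\gamma_{+}(-\tb)\pi^{\perp}_{+}\gamma_{+}(\tb)T$ has finite rank, so the determinant above is an honest one.

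Next I would push this finite-rank factor through the cyclic identity. Setting $\mathcal{X}:=\gamma_{+}(-\tb)\,\pi^{\perp}_{+}\,\gamma_{+}(\tb)\,\mathcal{G}\colon\Cn^{N}\to\HH_{+}$, we have $\gamma_{+}(-\tb)\pi^{\perp}_{+}\gamma_{+}(\tb)T=\mathcal{X}\,M\,\mathcal{D}$. Grouping this as $(\mathcal{X}M)\mathcal{D}$, with $\mathcal{X}M\colon\Cn^{n}\to\HH_{+}$ and $\mathcal{D}\colon\HH_{+}\to\Cn^{n}$, the cyclic identity gives $\tau_{W}(\tb)=\det(\Id{\HH_{+}}+(\mathcal{X}M)\mathcal{D})=\det(\Id{n}+\mathcal{D}\mathcal{X}M)$; grouping instead as $\mathcal{X}(M\mathcal{D})$ gives $\tau_{W}(\tb)=\det(\Id{N}+M\mathcal{D}\mathcal{X})$ (the two being related by $\det(\Id{n}+XY)=\det(\Id{N}+YX)$ with $X=\mathcal{D}\mathcal{X}$, $Y=M$). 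So everything reduces to computing the $n\times N$ matrix $\mathcal{D}\mathcal{X}$.

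For that I would expand, with $\xi(z):=\sum_{i=1}^{\infty}t_{i}z^{i}$, in elementary Schur polynomials $e^{\xi(z)}=\sum_{m\ge0}s_{m}(\tb)z^{m}$ and $e^{-\xi(z)}=\sum_{m\ge0}\hat s_{m}(\tb)z^{m}$. Then $\pi^{\perp}_{+}\gamma_{+}(\tb)z^{-j-1}=\sum_{l\ge0}s_{l+j+1}z^{l}$, and applying $\gamma_{+}(-\tb)$ and then $\mathcal{D}$, using $\sum_{q\ge0}\hat s_{q}x^{q}=e^{-\xi(x)}$ together with the commuting of $e^{-\xi(D^{T})}$ with powers of $D^{T}$, one finds
\[
\mathcal{D}\mathcal{X}\,v=\sum_{l,j\ge0}\Bigl(\sum_{p\ge l}\hat s_{p-l}(D^{T})^{p}\Bigr)\vf\,s_{l+j+1}\,\vg^{T}B^{j}v = e^{-\sum_{i=1}^{\infty}t_{i}(D^{T})^{i}}\Bigl(\sum_{l,j\ge0}s_{l+j+1}(D^{T})^{l}\vf\vg^{T}B^{j}\Bigr)v .
\]
On the other hand, since the spectra of $B$ and $D$ lie in the open unit disk, the Neumann series $(z\,\Id{n}-D^{T})^{-1}=\sum_{l\ge0}(D^{T})^{l}z^{-l-1}$ and $(z\,\Id{N}-B)^{-1}=\sum_{j\ge0}B^{j}z^{-j-1}$ converge uniformly on $|z|=1$, so multiplying by $\sum_{m\ge0}s_{m}z^{m}$ and integrating term by term picks out the coefficient of $z^{-1}$, i.e.\ the terms with $m=l+j+1$:
\[
\frac{1}{2\pi i}\oint_{|z|=1}(z\,\Id{n}-D^{T})^{-1}\vf\vg^{T}(z\,\Id{N}-B)^{-1}e^{\sum_{i=1}^{\infty}t_{i}z^{i}}\,dz=\sum_{l,j\ge0}s_{l+j+1}(D^{T})^{l}\vf\vg^{T}B^{j}.
\]
Writing $\mathcal{I}$ for this biresolvent integral, the last two displays give $\mathcal{D}\mathcal{X}=e^{-\sum_{i}t_{i}(D^{T})^{i}}\,\mathcal{I}$, and substituting this into $\det(\Id{n}+\mathcal{D}\mathcal{X}M)=\det(\Id{n}+e^{-\sum_{i}t_{i}(D^{T})^{i}}\mathcal{I}M)$ and into $\det(\Id{N}+M\mathcal{D}\mathcal{X})=\det(\Id{N}+Me^{-\sum_{i}t_{i}(D^{T})^{i}}\mathcal{I})$ yields exactly the two asserted formulas. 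As a check, at $\tb=\0b$ the integral $\mathcal{I}$ vanishes, so $\tau_{W}(\0b)=1$, consistent with $W$ lying in the big cell.

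The main obstacle I anticipate is not the algebra above — which is routine bookkeeping with elementary Schur polynomials and the cyclic determinant identity — but the careful justification of the first step: that $\det(\pi^{\perp}_{+}\colon W(\tb)\to\HH_{+})$, computed with the $\Gamma_{+}$-transported admissible bases, equals $\det(\Id{\HH_{+}}+\gamma_{+}(-\tb)\pi^{\perp}_{+}\gamma_{+}(\tb)T)$, i.e.\ that the central extension of the $\Gamma_{+}$ action contributes no stray scalar here. A clean way to settle this is to pass to the fermionic picture, $\tau_{W}(\tb)=\langle0|\hat\gamma_{+}(\tb)\,\grP\grl(W)\rangle$ with $\grP\grl(W)\in\FF$ the Pl\"ucker image of the graph of $T$, and apply Wick's theorem exactly as in the proof of Theorem~\ref{theorem_fermionic_tau_BCD}, which supplies the correct normalization automatically; alternatively one invokes the corresponding statement of~\cite{SW}, using that $W$ is of finite type by Theorem~\ref{thm:gr_rat}. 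A secondary point to monitor is the convergence of $e^{\sum_{i}t_{i}z^{i}}$ on $|z|=1$ and of $\sum_{i}t_{i}(D^{T})^{i}$, which holds on the relevant domain of $\tb$ (or formally in $\tb$) precisely because the spectra of $B$ and $D$ sit inside the unit disk.
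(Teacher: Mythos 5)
Your proposal is correct and follows essentially the same route as the paper: both rest on the Segal--Wilson big-cell determinant formula, the finite-rank factorization of the graph operator through $\Cn^n$ (and $\Cn^N$), the Weinstein--Aronszajn/cyclicity identity to collapse the Fredholm determinant, and a residue computation identifying the resulting finite matrix with the biresolvent integral (including the same vanishing of the integral at $\tb=\0b$). The only cosmetic difference is that you compute the finite matrix via elementary Schur polynomial expansions of the transported factor, whereas the paper evaluates $T(\tb)\phi$ directly through resolvent identities before inverting the order of the two factors.
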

\begin{proof} Consider the block decomposition
\be
\gamma_{+}({\tb}) = 
\pmatrix{
a({\tb}) &b({\tb})\cr
0 &d({\tb})
}
\ee
with respect to the splitting $\HH= \HH_{+}\oplus \HH_{-}$.
Following \cite{SW}, the $\tau$-function of $W(\vf, \vg, B, D, M)$ is given by
\be
\tau_{W(\vf, \vg, B, D, M)}(\tb)=\det\left(\Id{\mathcal H_+}+T({\tb})\right),
\ee
where
\be
T({\tb})\ \colon\ \HH_{+}\to \HH_{+}, \quad T({\tb}) := b({\tb})T a({\tb})^{-1}.
\ee
Assuming  $\phi \in \HH_{+}$,
\bea
\nonumber
\left[T({\tb})\phi\right](z) &\&= \vg^{T}\left[\frac{1}{2\pi i}\oint_{|\zeta|=1}e^{\sum t_i \zeta^i}(\zeta\,\Id{N}-B)^{-1}\frac{d\zeta}{\zeta-z} \right]M e^{-\sum t_i (D^{T})^{i}}\phi(D^{T})\vf\\
&\&= \vg^{T}\left(e^{\sum t_i z^i}\Id{N}-e^{\sum t_i B^{i}}\right)(z\,\Id{N}-B)^{-1}M e^{-\sum t_i (D^{T})^{i}}\phi(D^{T})\vf,
\eea
given that all the eigenvalues of $B$ are inside the unit circle. Therefore  the operator $T({\tb})$ may be written in factorized form as
\be
T({\tb}) = L_2({\tb})L_1({\tb})
\ee
where
\be
\begin{array}{rcl}
L_1({\tb})\ \colon \HH_{+} &\to& \Cn^{n}\\
\phi & \mapsto& e^{-\sum t_i (D^{T})^i}\phi(D^{T})\vf,
\end{array}
\ee
and
\be
\begin{array}{rcl}
L_2({\tb})\ \colon\ \Cn^{n} & \to& \HH_{+}\\
\vv & \mapsto&  \vg^{T}\left(e^{\sum t_i z^i}\Id{N}-e^{\sum_{i=1}
^{\infty} t_i B^{i}}\right)(z\,\Id{N}-B)^{-1}M \vv .
\end{array}
\ee
By inverting the order of the operators $L_1({\tb})$ and $L_2({\tb})$ we obtain the map
\bea
\nonumber
L_1({\tb})L_2({\tb})\ \colon \ \Cn^{n} &\to& \Cn^n\\
\nonumber
\vv &\mapsto& e^{-\sum t_i (D^{T})^i}\left[\frac{1}{2\pi i}\oint_{|z|=1}\!\! (z\,\Id{n}-D{^T})^{-1}\vf
\vg^{T}(z\,\Id{N}-B)^{-1}e^{\sum t_i z^i}dz\right] M\vv, \\
\eea
where the second term vanishes since the fact that  the eigenvalues of $D$ and $B$ are inside the unit circle implies that
\be
\frac{1}{2\pi i}\oint_{|z|=1} (z\,\Id{n}-D^{T})^{-1}\vf\vg^{T}(z\,\Id{l_1}-B)^{-1}dz = 0 .
\ee

Applying the Weinstein--Aronszajn identity, 
\bea
\nonumber
\tau_{W(\vf, \vg, B, D, M)}(t) &\&=  \det\left(\Id{\HH_{+}} + L_2({\tb})L_1({\tb})\right)\\
\nonumber
&\&=  \det\left(\Id{n} + L_1({\tb})L_2({\tb})\right)\\
\nonumber
&\&=  \det\left(\Id{n} + e^{-\sum t_i (D^{T})^i}\left(\frac{1}{2\pi i}\oint_{|z|=1}\!\! (z\,\Id{n}-D{^T})^{-1}\vf
\vg^{T}(z\,\Id{N}-B)^{-1}e^{\sum t_i z^i}dz\right) M\right)\\
\nonumber
&\&=  \det\left(\Id{N} + Me^{-\sum t_i (D^{T})^i}\left(\frac{1}{2\pi i}\oint_{|z|=1}\!\! (z\,\Id{n}-D{^T})^{-1}\vf
\vg^{T}(z\,\Id{N}-B)^{-1}e^{\sum t_i z^i}dz\right) \right)\\
\eea
\end{proof}

The generalized Giambelli formula (\ref{gen_giambelli}) implies that the Pl\"ucker coordinates associated to $W(\vf, \vg, B, D, M)$ are
\be
\pi_{(a_1,\dots, a_k|b_1,\dots, b_k)}(W(\vf, \vg, B, D, M)) = (-1)^{\sum_{i=1}^{k}b_i}\det\left(\vg^{T}B^{b_j}M(D{^T})^{a_i}\vf\right)_{i,j = 1}^k,
\ee
Therefore the $\tau$-function $\tau_{W(\vf, \vg, B, D, M)}(\tb)$ has following Schur function expansion.

\begin{corollary}
\label{schur_expansion1}
\be
\tau_{W(\vf, \vg, B, D, M)}(\tb) = \sum_{(\ab|\bb)}\det\left(\vg^{T}(-B)^{b_j}M(D^{T})^{a_i}\vf\right)_{i,j=1}^k S_{(\ab|\bb)}(\tb).
\ee
\end{corollary}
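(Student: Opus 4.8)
The plan is simply to feed the Pl\"ucker-coordinate formula just established for $W(\vf,\vg,B,D,M)$ into the general Schur-function expansion of a KP $\tau$-function. First I would invoke the fact, recalled in Section~1.1, that for any element of the virtual-dimension-zero component the $\tau$-function expands as $\tau_W(\tb)=\sum_\lambda \pi_\lambda(W)\,S_\lambda(\tb)$, the sum ranging over all integer partitions $\lambda$; since $W(\vf,\vg,B,D,M)$ is by construction the graph of the operator $T\colon\HH_+\to\HH_-$ built above, it lies in the big cell and has virtual dimension $0$, so this applies. Rewriting each partition in Frobenius notation $\lambda=(\ab|\bb)=(a_1,\dots,a_k\,|\,b_1,\dots,b_k)$ turns this into the sum over Frobenius pairs appearing in the statement.

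Next I would substitute the Pl\"ucker coordinates. Because the affine coordinates of $W(\vf,\vg,B,D,M)$ were chosen to be the geometric ones $\AA_{ij}=\vg^{T}B^{j}M(D^{T})^{i}\vf$, the generalized Giambelli formula (\ref{gen_giambelli}) gives
\[
\pi_{(\ab|\bb)}\bigl(W(\vf,\vg,B,D,M)\bigr)=(-1)^{\sum_{i=1}^{k}b_i}\det\!\left(\vg^{T}B^{b_j}M(D^{T})^{a_i}\vf\right)_{i,j=1}^{k},
\]
which is exactly the identity displayed immediately before the corollary.

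Finally I would absorb the overall sign into the matrix: pulling the scalar $(-1)^{b_j}$ out of the $j$-th column of the $k\times k$ determinant shows
\[
(-1)^{\sum_{i=1}^{k}b_i}\det\!\left(\vg^{T}B^{b_j}M(D^{T})^{a_i}\vf\right)_{i,j=1}^{k}=\det\!\left(\vg^{T}(-B)^{b_j}M(D^{T})^{a_i}\vf\right)_{i,j=1}^{k},
\]
and inserting this into the expansion produces the asserted formula for $\tau_{W(\vf,\vg,B,D,M)}(\tb)$.

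I do not anticipate any genuine obstacle: the content is entirely contained in the preceding Pl\"ucker computation together with the standard expansion $\tau_W=\sum_\lambda\pi_\lambda(W)S_\lambda$. The only points deserving a sentence of care are (i) checking that $W(\vf,\vg,B,D,M)$ really lies in the big cell, so that both the Giambelli formula and the Schur expansion are legitimate, which follows from its presentation as a graph $\HH_+\to\HH_-$ with the explicit affine coordinates above; and (ii) the column-by-column sign bookkeeping that converts $B^{b_j}$ into $(-B)^{b_j}$ — neither is more than a routine observation.
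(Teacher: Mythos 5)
Your proposal is correct and follows exactly the paper's route: the corollary is obtained by inserting the Pl\"ucker coordinates $\pi_{(\ab|\bb)}(W(\vf,\vg,B,D,M))=(-1)^{\sum_i b_i}\det\bigl(\vg^{T}B^{b_j}M(D^{T})^{a_i}\vf\bigr)$, given by the generalized Giambelli formula applied to the geometric affine coordinates, into the Schur function expansion $\tau_W=\sum_\lambda\pi_\lambda(W)S_\lambda$. Your explicit column-by-column absorption of the sign into $(-B)^{b_j}$ and the remark that the big-cell/graph presentation legitimizes both steps are exactly the (tacit) content of the paper's argument.
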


%%%%%%%%%%%%%%% Subsection 4.2  %%%%%%%%%%%%%%%%%%%

\subsection{Specialization to $\tau^f_{(A,B,C,D,F)}({\bf t})$}
We now specialize the  above  to the subspaces corresponding to 
 $\tau$-functions of the generalized Gekhtman-Kasman form $\tau^f_{(A,B,C,D,F)}({\bf t})$ 
as defined in  eq.~(\ref{tau_gen_GK}).

We first note that the rank-1 condition
\be
AB - D^{T}A  = \vf \vg^{T}
\label{eq:rank1_again}
\ee
is equivalent to the resolvent identity
\be
\label{eq:resolvents}
A(z\,\Id{N}-B)^{-1} = (z\,\Id{n}-D^{T})^{-1}A +(z\,\Id{n}-D^{T})^{-1}\vf\vg^{T}(z\,\Id{N}-B)^{-1}.
\ee
More generally,  equation (\ref{eq:rank1_again}) implies that for any function $\phi(z)$ holomorphic in the unit disk we have
\be
A\phi(B) = \phi(D^{T})A +\frac{1}{2\pi i}\oint_{|\zeta|=1}(\zeta\, \Id{n}-D^{T})^{-1}\vf\vg^{T}(\zeta\, \Id{N}-B)^{-1}\phi(\zeta)d\zeta.
\ee
In particular,
\be
\label{decoupled_int}
\frac{1}{2\pi i}\oint_{|\zeta|=1} (\zeta\, \Id{n}-D^{T})^{-1}\vf\vg^{T}(\zeta\, \Id{N}-B)^{-1}e^{\sum t_i \zeta^i}d\zeta= Ae^{\sum t_i B^i} - e^{\sum t_i( D^{T})^i}A,
\ee
for any $e^{\sum t_i z^{i}} \in \Gamma_{+}$.
\begin{theorem}
Let $A \in \Mat{n }{N}, B \in \Mat{N }{N}, D \in \Mat{n }{n}, \vf \in \Cn^{n}, \vg \in \Cn^{N}$ be
such that the rank-$1$ condition (\ref{eq:rank1_again})
holds and $F \in\Mat{l }{n}, C\in \Mat{l }{N}$ satisfy the non-singularity condition
\be
\det(FAC^{T})\not=0.
\ee
The  $\tau$-function $\tau^{f}_{(A,B,C,D,F)}({\tb}) $ is 
\be
\tau^{f}_{(A,B,C,D,F)}({\tb}) = \det(FAC^{T}) \tau_{W(\vf, \vg, B, D, M)}({\tb}) ,
\ee
where $W(\vf, \vg, B, D, M)$ is the subspace with geometric affine coordinates, associated to the quintuple $(\vf, \vg,B,D,M)$, with $M$ given by
\be
M = C^{T}(FAC^{T})^{-1}F.
\ee
\end{theorem}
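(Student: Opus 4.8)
The plan is to reduce the $l\times l$ determinant defining $\tau^f_{(A,B,C,D,F)}(\tb)$ to the $n\times n$ determinantal formula (\ref{eq:tau_geometric}) for $\tau_{W(\vf,\vg,B,D,M)}(\tb)$, with the prescribed choice $M=C^{T}(FAC^{T})^{-1}F$. First I would check that the construction makes sense: since $\det(FAC^{T})\ne 0$ the matrix $M\in\Mat{N}{n}$ is well defined, so $(\vf,\vg,B,D,M)$ is an admissible quintuple, $W(\vf,\vg,B,D,M)$ lies in the big cell, and its $\tau$-function is given by (\ref{eq:tau_geometric}). (As usual one may assume, after enlarging the radius of $S^1$ if necessary, that the eigenvalues of $B$ and $D$ lie inside the unit circle; this rescaling does not affect the resulting algebraic identity.)

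Next I would take the first, $n\times n$, form in (\ref{eq:tau_geometric}) and invoke the rank-$1$ hypothesis (\ref{eq:rank1_again}) through its ``decoupled integral'' consequence (\ref{decoupled_int}), which states exactly that
\[
\frac{1}{2\pi i}\oint_{|z|=1}(z\,\Id{n}-D^{T})^{-1}\vf\vg^{T}(z\,\Id{N}-B)^{-1}e^{\sum t_i z^i}\,dz = A\,e^{\sum t_i B^i}-e^{\sum t_i(D^{T})^i}A .
\]
Substituting this into (\ref{eq:tau_geometric}) and cancelling the factor $e^{-\sum t_i(D^{T})^i}$ against $e^{\sum t_i(D^{T})^i}$ gives
\[
\tau_{W(\vf,\vg,B,D,M)}(\tb)=\det\left(\Id{n}+\left(e^{-\sum t_i(D^{T})^i}A\,e^{\sum t_i B^i}-A\right)M\right).
\]

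Finally I would introduce $G:=FAC^{T}\in GL(l)$, so that $M=C^{T}G^{-1}F$, and write the matrix inside this determinant as $\Id{n}+XF$ with $X:=\left(e^{-\sum t_i(D^{T})^i}A\,e^{\sum t_i B^i}-A\right)C^{T}G^{-1}$ of size $n\times l$ and $F$ of size $l\times n$. The Weinstein--Aronszajn identity (the same one used just above) then converts this into the $l\times l$ determinant $\det(\Id{l}+FX)$, and expanding $FX$ one obtains
\[
\det(\Id{l}+FX)=\det\left(\Id{l}+F e^{-\sum t_i(D^{T})^i}A\,e^{\sum t_i B^i}C^{T}G^{-1}-FAC^{T}G^{-1}\right)=\det\left(F e^{-\sum t_i(D^{T})^i}A\,e^{\sum t_i B^i}C^{T}\right)\det(G)^{-1},
\]
since $FAC^{T}G^{-1}=\Id{l}$ cancels the identity term. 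By the definition (\ref{tau_gen_GK}) of $\tau^f_{(A,B,C,D,F)}$ the right-hand side equals $\det(FAC^{T})^{-1}\,\tau^f_{(A,B,C,D,F)}(\tb)$, and multiplying through by $\det(FAC^{T})$ gives the claim.

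I do not anticipate a genuine obstacle: the two substantive inputs, the geometric-coordinate formula (\ref{eq:tau_geometric}) and the decoupled-integral form (\ref{decoupled_int}) of the rank-$1$ condition, are already available, and the remainder is linear algebra. The only points that need care are keeping the matrix sizes straight in the Weinstein--Aronszajn step (reducing an $n\times n$ determinant to an $l\times l$ one via the factorization $XF$ with $X$ of shape $n\times l$), and checking that the cross term $FAC^{T}G^{-1}$ is precisely $\Id{l}$ — which is exactly what the choice $M=C^{T}(FAC^{T})^{-1}F$ is engineered to ensure.
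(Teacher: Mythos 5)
Your proposal is correct and follows essentially the same route as the paper: substitute the decoupled integral identity (\ref{decoupled_int}) into the first ($n\times n$) form of (\ref{eq:tau_geometric}), apply the Weinstein--Aronszajn identity to pass to an $l\times l$ determinant, and use $FAC^{T}(FAC^{T})^{-1}=\Id{l}$ to cancel the identity term and extract the factor $\det(FAC^{T})^{-1}$. The only (immaterial) difference is the order in which you cancel the exponential factors versus swapping the factors in the determinant.
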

\begin{proof}
As a consequence of the decoupling integral formula (\ref{decoupled_int}), the first finite determinant expressing $\tau_{W(\vf, \vg, B, D, M)}(\tb)$ in  (\ref{eq:tau_geometric}) simplifies to the form
\bea
\nonumber
\tau_{W(\vf, \vg, B, D, C^{T}(FAC^{T})^{-1}F)}(\bm t) \& &= \det\left(\Id{n} + e^{-\sum{t_i(D^{T})^i}}\left(Ae^{\sum t_i B^i} - e^{\sum t_i (D^{T})^i}A\right) C^{T}(FAC^{T})^{-1}F\right)_{n\times n}\\
\nonumber
\& &= \det\left(\Id{l} + Fe^{-\sum{t_i(D^{T})^i}}\left(Ae^{\sum t_i B^i} - e^{\sum t_i (D^{T})^i}A\right) C^{T}(FAC^{T})^{-1}\right)_{l\times l}\\
\& &= \det(FAC^{T})^{-1}\det\left(Fe^{-\sum{t_i(D^{T})^i}}Ae^{\sum t_i B^i}C^{T}\right),
\eea
where the Weinstein--Aronszajn identity was used in the first equality.
\end{proof}
\begin{remark} As an alternative representation of the  $\tau$-function $\tau^{f}_{(A,B,C,D,F)}({\tb})$, we can choose the subspace $W(\vg,\vf,D^{^T},B^{T}, - F^{T}(CA^TF^{T})^{-1}C)$,
 for which
\be
\tau^{f}_{(A,B,C,D,F)}({\tb}) = \det(FAC^{T}) \tau_{W(\vg,\vf,D^{^T},B^{T}, - F^{T}(CA^TF^{T})^{-1}C)}(-{\tb}) .
\ee
\end{remark}
From Corollary \ref{schur_expansion1} follows
\begin{corollary} The following Schur function expansion holds:
\be
\label{GK_Schur}
\tau^{f}_{(A,B,C,D,F)}(\tb) = \sum_{(\ab|\bb)} \det(\vg^{T}(-B)^{b_j} C^{T}(FAC^{T})^{-1}F(D^{T})^{a_i}\vf)S_{(\ab|\bb)}(\tb).
\ee
\end{corollary}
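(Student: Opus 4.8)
This is an immediate corollary of the two results stated just above it, so I expect no real obstacle; the content is purely a matter of combining them and of careful bookkeeping. The plan is to first invoke the preceding Theorem: under the standing hypotheses (the rank-$1$ condition~(\ref{eq:rank1_again}) together with $\det(FAC^{T})\neq 0$) it identifies $\tau^{f}_{(A,B,C,D,F)}(\tb)$ with $\det(FAC^{T})$ times $\tau_{W(\vf,\vg,B,D,M)}(\tb)$, where $W(\vf,\vg,B,D,M)$ is the big-cell subspace carrying geometric affine coordinates associated to the quintuple $(\vf,\vg,B,D,M)$ and $M=C^{T}(FAC^{T})^{-1}F$. Since this subspace lies in the big cell, its Pl\"ucker coordinates are the minors given by the generalized Giambelli formula~(\ref{gen_giambelli}), so Corollary~\ref{schur_expansion1} applies to it without change.

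Next I would write out the expansion furnished by Corollary~\ref{schur_expansion1}: the coefficient of $S_{(\ab|\bb)}(\tb)$, for a partition in Frobenius notation $(\ab|\bb)=(a_1,\dots,a_k\,|\,b_1,\dots,b_k)$, is the $k\times k$ determinant $\det\left(\vg^{T}(-B)^{b_j}M(D^{T})^{a_i}\vf\right)_{i,j=1}^{k}$. Substituting $M=C^{T}(FAC^{T})^{-1}F$ turns each scalar entry $\vg^{T}(-B)^{b_j}M(D^{T})^{a_i}\vf$ into $\vg^{T}(-B)^{b_j}C^{T}(FAC^{T})^{-1}F(D^{T})^{a_i}\vf$, which is precisely the entry appearing in~(\ref{GK_Schur}); multiplying through by the scalar $\det(FAC^{T})$ that relates the two $\tau$-functions then gives the asserted formula. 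One should also note that the sum is in effect finite: since $M=C^{T}(FAC^{T})^{-1}F$ has rank at most $l$, the $k\times k$ determinant vanishes once $k>l$, so only partitions with at most $l$ boxes on the main diagonal contribute.

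The single point deserving care is the placement of the scalar $\det(FAC^{T})$, since Corollary~\ref{schur_expansion1} uses the normalization $\tau_{W(\vf,\vg,B,D,M)}(\0b)=1$ while $\tau^{f}_{(A,B,C,D,F)}(\0b)=\det(FAC^{T})$; this constant must be carried through the expansion (it is the ``explicit gauge factor'' alluded to in the introduction). As a consistency check one may instead run the identical argument through the transposed subspace $W(\vg,\vf,D^{T},B^{T},-F^{T}(CA^{T}F^{T})^{-1}C)$ from the Remark immediately following that Theorem, which reproduces the same coefficients with $\tb\mapsto-\tb$.
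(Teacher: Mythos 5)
Your argument is correct and follows exactly the paper's route: the paper obtains this corollary simply by substituting $M=C^{T}(FAC^{T})^{-1}F$ from the preceding theorem into the expansion of Corollary~\ref{schur_expansion1}, with no further detail given. Your point about the scalar is well taken and worth keeping: carried literally, the right-hand side acquires the overall prefactor $\det(FAC^{T})$ (the trivial-partition coefficient of $\tau^{f}_{(A,B,C,D,F)}$ is $\det(FAC^{T})$, not $1$), so the displayed formula~(\ref{GK_Schur}) should be understood up to this nonzero constant, which is immaterial for a KP $\tau$-function.
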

By Theorem~\ref{thm:gr_rat} we have the inclusions
\begin{corollary} 
\bea
r_D(z) \HH_+  \subset \mu_D(z)\HH_{+} \subset W(\vf, \vg, B, D, M) \subset \frac{1}{\mu_B(z)}\HH_{+}\subset \frac{1}{r_B(z)}\HH_{+}, \nonumber\\
\\\label{eq:GK_inclusions}
r_B(z)\HH_{+} \subset \mu_B(z)\HH_{+} \subset W(\vg,\vf,D^{^T},B^{T}, - E^{T}(CAF^{T})^{-1}C)\subset \frac{1}{\mu_D(z)}\HH_{+} \subset \frac{1}{r_D(z)}\HH_{+}.\nonumber\\
\eea
\end{corollary}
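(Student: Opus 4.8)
The plan is to read off both chains of inclusions directly from Theorem~\ref{thm:gr_rat}, applied to the two subspaces that the preceding Theorem and Remark have already identified as the ones whose $\tau$-functions reproduce $\tau^{f}_{(A,B,C,D,F)}(\tb)$ up to the nonvanishing constant $\det(FAC^{T})$, and then to close off the two outer inclusions in each chain using only the elementary fact that the minimal polynomial of a square matrix divides its characteristic polynomial.

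First I would treat the first chain. By the Theorem of the previous subsection, $\tau^{f}_{(A,B,C,D,F)}(\tb) = \det(FAC^{T})\,\tau_{W(\vf,\vg,B,D,M)}(\tb)$ with $M = C^{T}(FAC^{T})^{-1}F$, so the relevant subspace is $W(\vf,\vg,B,D,M)$. Theorem~\ref{thm:gr_rat} applied to it gives at once the two middle inclusions $\mu_D(z)\HH_{+} \subset W(\vf,\vg,B,D,M) \subset \frac{1}{\mu_B(z)}\HH_{+}$. For the two outer inclusions, write $r_D(z) = \mu_D(z)q_D(z)$ and $r_B(z) = \mu_B(z)q_B(z)$ with $q_D, q_B$ monic polynomials, which is possible since $\mu_D \mid r_D$ and $\mu_B \mid r_B$. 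Since multiplication by a polynomial maps $\HH_{+}$ into itself, $r_D(z)\HH_{+} = \mu_D(z)\,q_D(z)\HH_{+} \subset \mu_D(z)\HH_{+}$, and likewise $\frac{1}{\mu_B(z)}\HH_{+} = \frac{1}{r_B(z)}q_B(z)\HH_{+} \subset \frac{1}{r_B(z)}\HH_{+}$. Concatenating the four inclusions gives the first chain.

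For the second chain I would repeat the argument with the subspace of the preceding Remark, namely $W(\vg,\vf,D^{T},B^{T},M')$ with $M' = -F^{T}(CA^{T}F^{T})^{-1}C$, for which $\tau^{f}_{(A,B,C,D,F)}(\tb) = \det(FAC^{T})\,\tau_{W(\vg,\vf,D^{T},B^{T},M')}(-\tb)$. The only point requiring attention is that, in the notation of Theorem~\ref{thm:gr_rat}, the role of the matrix called ``$B$'' is now played by $D^{T}$ and the role of the matrix called ``$D$'' by $B^{T}$; its conclusion therefore reads $\mu_{B^{T}}(z)\HH_{+} \subset W(\vg,\vf,D^{T},B^{T},M') \subset \frac{1}{\mu_{D^{T}}(z)}\HH_{+}$. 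Since a matrix and its transpose have the same minimal polynomial, $\mu_{B^{T}}=\mu_B$ and $\mu_{D^{T}}=\mu_D$, and also the same characteristic polynomial, so $r_{B^{T}}=r_B$ and $r_{D^{T}}=r_D$; the outer inclusions then follow exactly as before, using $r_B = \mu_B q_B$ and $r_D = \mu_D q_D$. This yields the second chain.

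There is essentially no genuine obstacle here: once Theorem~\ref{thm:gr_rat} and the two representation results are in hand, the statement is pure bookkeeping. The one place to be careful is the role-reversal in the second chain — checking that ``$D$'' in the hypothesis of Theorem~\ref{thm:gr_rat} corresponds to $B^{T}$ (so that its eigenvalues, being those of $B$, still lie inside the unit disk, as required) and that minimal and characteristic polynomials behave correctly under transposition. I would also note in passing that the expression ``$E^{T}(CAF^{T})^{-1}C$'' in the statement is a typo for ``$F^{T}(CA^{T}F^{T})^{-1}C$'', matching the Remark.
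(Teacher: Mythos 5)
Your proof is correct and takes essentially the same route as the paper, whose own justification is simply to invoke Theorem~\ref{thm:gr_rat} for the two subspaces identified in the preceding theorem and remark, with the outer inclusions supplied by the divisibility of the characteristic polynomial by the minimal polynomial (already noted after Theorem~\ref{thm:gr_rat}). Your added checks --- that $\mu$ and $r$ are invariant under transposition so the role-reversed application of Theorem~\ref{thm:gr_rat} to $(D^{T},B^{T})$ is legitimate, and that the expression $E^{T}(CAF^{T})^{-1}C$ in the statement is a typo for $F^{T}(CA^{T}F^{T})^{-1}C$ --- are accurate.
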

Specializing to the case when $l=n$ and $F=\Ib_n$, we obtain:
\begin{corollary}The Gekhtman--Kasman $\tau$-function 
\be
\tau^f_{(A,B,C)}(\tb)=\det\left(Ae^{\sum t_i B^i}C^{T}\right)
\ee
with nonvanishing constant term 
\be 
\det(AC^{T}) \neq 0 
\ee
 can be written in terms of the $\tau$-functions of either of the subspaces $W(\vf, \vg, B, D, C^{T}(AC^{T})^{-1})$ or $W(\vg,\vf,D^{^T},B^{T}, - (CA^{T})^{-1}C)$ as
\be
\label{GK_W}
\tau^{f}_{(A,B,C)}(\tb)=
\left\{
\begin{array}{l}
\ds e^{\sum {t_i{\Tr}(D^i)}}\det(AC^{T})\tau_{W(\vf, \vg, B, D, C^{T}(AC^{T})^{-1})}(\tb),\\
\\
\ds e^{\sum {t_i{\Tr}(D^i)}}\det(AC^{T})\tau_{W(\vg,\vf,D^{^T},B^{T}, - (CA^TF^{T})^{-1}C)}(-\tb) .
\end{array}
\right.
\ee
\end{corollary}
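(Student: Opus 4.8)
The plan is to obtain both identities as direct specializations of the preceding Theorem (and of the Remark following it) to the case $l=n$, $F=\Id{n}$, so that no new computation beyond bookkeeping is required. Recall that the Theorem asserts $\tau^{f}_{(A,B,C,D,F)}(\tb)=\det(FAC^{T})\,\tau_{W(\vf,\vg,B,D,M)}(\tb)$ with $M=C^{T}(FAC^{T})^{-1}F$, valid whenever the rank-$1$ condition $AB-D^{T}A=\vf\vg^{T}$ holds and $\det(FAC^{T})\neq0$; here $D,\vf,\vg$ are the data attached to $(A,B,C)$ through that rank-$1$ condition. Setting $l=n$ and $F=\Id{n}$, the non-singularity hypothesis $\det(FAC^{T})\neq0$ becomes exactly the assumed $\det(AC^{T})\neq0$, and $M$ collapses to $C^{T}(AC^{T})^{-1}$.

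The one small computation is to relate $\tau^{f}_{(A,B,C,D,\Id{n})}(\tb)$ to $\tau^{f}_{(A,B,C)}(\tb)$. By definition $\tau^{f}_{(A,B,C,D,\Id{n})}(\tb)=\det(e^{-\sum_i t_i(D^{T})^{i}}\,A\,e^{\sum_j t_j B^{j}}C^{T})$, and since $e^{-\sum_i t_i(D^{T})^{i}}$ and $A\,e^{\sum_j t_j B^{j}}C^{T}$ are both $n\times n$, multiplicativity of the determinant together with $\det e^{X}=e^{\Tr X}$ and $\Tr((D^{T})^{i})=\Tr(D^{i})$ gives $\tau^{f}_{(A,B,C,D,\Id{n})}(\tb)=e^{-\sum_i t_i\Tr(D^{i})}\,\tau^{f}_{(A,B,C)}(\tb)$. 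Substituting this into the Theorem's identity with $M=C^{T}(AC^{T})^{-1}$ and solving for $\tau^{f}_{(A,B,C)}(\tb)$ produces the first line of the claimed formula. For the second line I would run the identical argument starting instead from the Remark immediately preceding the Corollary, which furnishes the alternative representation $\tau^{f}_{(A,B,C,D,F)}(\tb)=\det(FAC^{T})\,\tau_{W(\vg,\vf,D^{T},B^{T},-F^{T}(CA^{T}F^{T})^{-1}C)}(-\tb)$; specializing $F=\Id{n}$ reduces $-F^{T}(CA^{T}F^{T})^{-1}C$ to $-(CA^{T})^{-1}C$ and $\det(FAC^{T})$ to $\det(AC^{T})$, and combining with the same gauge-factor identity for the left-hand side yields the second line.

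There is no real obstacle here: the argument is pure accounting. The single point that deserves care is the linear-exponential factor $e^{\pm\sum_i t_i\Tr(D^{i})}$ — it arises exactly because inserting $F=\Id{n}$ allows the $l\times l=n\times n$ determinant to split off $\det e^{-\sum_i t_i(D^{T})^{i}}$, and it is consistent with the general remark after Theorem~\ref{theorem_tau_BCD} that a change of $D$ (equivalently of $r(z)$) acts as a gauge transformation producing precisely such a factor. One should also flag that $D,\vf,\vg$ are not uniquely determined by $(A,B,C)$, so the subspaces $W(\vf,\vg,B,D,\cdot)$, $W(\vg,\vf,D^{T},B^{T},\cdot)$ and the scalar $\Tr(D^{i})$ all depend on this choice, whereas the right-hand product is independent of it.
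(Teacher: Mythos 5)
Your proposal is correct and follows exactly the route the paper intends: the Corollary is obtained by specializing the preceding Theorem and Remark to $l=n$, $F=\Id{n}$, with the gauge factor $e^{\sum t_i\Tr(D^i)}$ coming from $\det\bigl(e^{-\sum_i t_i(D^T)^i}\bigr)=e^{-\sum_i t_i\Tr(D^i)}$ applied to the $n\times n$ determinant. Your additional observations (the reduction of $-F^{T}(CA^{T}F^{T})^{-1}C$ to $-(CA^{T})^{-1}C$, and the non-uniqueness of the auxiliary data $D,\vf,\vg$) are accurate and consistent with the paper.
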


The presence of the gauge factor $e^{\sum {t_i{\Tr}(D)^i}}$ in (\ref{GK_W}), when compared with the linear exponential factor appearing in (\ref{gaugetransform}), implies that the effective subspace $\tilde W(\vf, \vg, B, D, C^{T}(AC^{T})^{-1})$ that corresponds to $\tau^{f}_{(A,B,C)}(\tb)$ is given by
\be
\tilde W(\vf, \vg, B, D, C^{T}(AC^{T})^{-1}):=\frac{z^n}{r_D(z)}W(\vf, \vg, B, D, C^{T}(AC^{T})^{-1}),
\ee
which satisfies the inclusion chain
\be
z^n\HH_{+} \subset \tilde W(\vf, \vg, B, D, C^{T}(AC^{T})^{-1}) \subset \frac{z^{n}}{r_B(z)r_D(z)}\HH_{+}.
\ee
The subspace $\tilde W(\vf, \vg, B, D, C^{T}(AC^{T})^{-1})$ thus belongs to the sub-Grassmannian $\Gr_1$ 
of ref. (\cite{SW}).

%%%%%%%%%%%%%%% Section 5  %%%%%%%%%%%%%%%%%%%
\section{Concluding remarks} We have shown how KP $\tau$-functions of the 
finite determinantal  form (\ref{tau_GK}) and (\ref{tau_gen_GK}) may be derived
through a subquotienting procedure applied to KP flows on the infinite dimensional 
Grassmannians of  Sato, Segal and Wilson \cite{S, SS, SW},  viewed as projections to
linear exponential flows on finite Grassmannians. These solutions were also expressed
as  fermionic operator  vacuum expectation values of suitably defined products
of creation and annihilation operators. This approach may be applied
more generally to other integrable hierarchies, such as discrete KP, 
 MKP chains of $\tau$-functions, depending both on continuous and discrete flow variables,
and the 2-Toda lattice hierarchy, resulting in analogous finite determinantal  $\tau$-functions.

The cases treated here are all ``generic'' solutions to the rank-1 condition (\ref{rank1_D}),
in the sense that the eigenvalues of the pair of matrices $B$ and $D$ are required to be 
distinct, and the generic conditions (\ref{eq:generic_vec}) satisfied. In the sequel \cite{BFH}, these
conditions will dropped, and a complete classification of all solutions of the rank-1 condition (\ref{rank1_D}),
for all possible pairs of vectors ${\bf f}, {\bf g}$ and matrices $B$, $D$ determined, up to natural 
equivalence within orbits of the stabilizers of $B$ and $D$.

\bigskip

%%%%%%%%%%%%%%%%%%%%%%% Appendix %%%%%%%%%%%%%%%%%%
 \appendix
 \section{Appendix: Proof that $\tau^f_{(A,B,C,D,F)}({\bf t})$ satisfies\\ the Hirota bilinear equations}
  \label{AppendA}
  
     A concise way to express the Hirota bilinear equations for a KP $\tau$-function $\tau({\bf t})$ is to define
 an associated family of 2-forms $\xi(z_1, z_2, z_3, z_4, {\bf t})$ in $\Cn^4$, depending on four complex parameters $(z_1, z_2, z_3, z_4)$ together with the  KP flow parameters ${\bf t} = (t_1, t_2, \dots )$ : 
 \be
  \xi(z_1, z_2, z_3, z_4, {\bf t})  = \sum_{i,j = 1}^4 \xi_{ij} e_1 \wedge e_j \in \Lambda^2 \Cn^4,
 \ee
where
 \be 
 \xi_{ij} := (z_i -z_j) \tau ({\bf t} - [z_i^{-1}] - [z_j^{-1}]).
 \label{xi_def}
 \ee
    The Hirota bilinear equations are then equivalent \cite{S, SS} to the  single Pl\"ucker relation defining the image of the Grassmannian $\Gr_{2}(\Cn^4)$ in $\Pb (\Lambda^2 \Cn^4) $ under the Pl\"ucker map:
\bea
\grP\grl:\Gr_{2}(\Cn^4)&\&\ra \Pb (\Lambda^2 \Cn^4) \cr
\grP\grl: \span(W_1, W_2) &\& \mapsto [W_1 \wedge W_2], \quad W_1, \ W_2 \in \Cn^4.
\eea
i.e. the decomposibility condition
\bea
\xi \wedge \xi  &\&= 0, \\
\xi_{12} \xi_{34} - \xi_{13}\xi_{24} + \xi_{14}\xi_{23}  &\& = 0,
\label{24_plucker}
\eea
 satisfied identically in the parameters $(z_1, z_2, z_3, z_4)$ for all ${\bf t}$. We will prove this 
holds for the $\tau$-function $\tau^f_{(A,B,C,D,F)}({\bf t})$ by explicitly computing the vectors $W_1, W_2 \in \Cb^4$.

Define $\xi^{(A,B,C,D,F)}(z_1, z_2, z_3, z_4, {\bf t})$ as in (\ref{xi_def}) with $\tau = \tau^f_{(A,B,C,D,F)}$
\be
\xi^{(A,B,C,D,F)}_{ij}  (z_1, z_2, z_3, z_4, {\bf t}) :=  (z_i -z_j) \tau^f_{(A,B,C,D,F)} ({\bf t} - [z_i^{-1}] - [z_j^{-1}]),
\ee
and also define
\bea
F({\bf t}) &\& :=  F e^{-\sum_{i=1}^\infty t_i(D^T)^i}, \quad  C^T({\bf t}) := e^{\sum_{i=1}^\infty t_iB^i} C^T,  \cr
M^T({\bf t}) &\& := C^T({\bf t}) \left( F ({\bf t} )A C^T({\bf t})\right)^{-1}F({\tb}),
\eea
which is well defined provided $\tau^f_{(A,B,C,D,F)} ({\bf t})\ne 0$.
The following lemma shows that eq.~(\ref{24_plucker}) is satisfied by $\tau^f_{(A,B,C,D,F)}({\bf t})$. 
\bl 
\be
\xi^{(A,B,C,D,F)} = W_1\wedge W _2
\label{xiABCDFW1W2}
\ee
where
\bea
 W_1 &\& := \tau^f_{(A,B,C,D,F)} (H(z_1), H(z_2), H(z_3), H(z_4))  \cr
 W_2 &\& := (G(z_1), G(z_2), G(z_3), G(z_4))
 \label{ABCDE_W_24_def}
\eea
with the rational functions $H(z)$, $G(z)$ defined by
\bea
H(z) &\&:=  \tau^f_{(A,B,C,D,F)}({\bf t}) \left( 1 +  {\vg}^t M({\bf t}) (D^T- z\, \Id{n})^{-1} {\vf}\right) \cr
G(z) &\&:= z  +  {\vg}^t B M({\bf t}) ( D^T - z\, \Id{n})^{-1} {\vf} .
\label{HG_def}
\eea
\el
\begin{proof}
Since
\be 
e^{-\sum_{i=1}^\infty \frac{1}{i} (B/z)^i}= \Id{N} - B/z , \quad e^{\sum_{i=1}^\infty \frac{1}{i}(D^T/z)^i} = (\Id{n} -D^T/z)^{-1} ,
\ee
we have
\bea
\xi^{(A,B,C,D,F)}_{ij}&\&= (z_i - z_j)\det\left(F({\bf t}) (D^T-z_i\, \Id{n})^{-1} (D^T-z_j\, \Id{n})^{-1} A  (B-z_i\, \Id{N}) (B-z_j\, \Id{N}) C^T({\bf t})\right) \cr
&\& = (z_i - z_j) \det( F({\bf t}) A C^T({\bf t})\cr
&\& \quad + (D^T-z_i\, \Id{n})^{-1} (D^T-z_j\, \Id{n})^{-1} (D^T - z_i - z_j) {\vf} {\vg}^T C^T({\bf t}) 
\cr
&\& \quad +  (D^T-z_i\, \Id{n})^{-1} (D^T-z_j\, \Id{n})^{-1}  {\vf} {\vg}^T B C^T({\bf t}) ) \cr
&\& = (z_i - z_j) \det( F({\bf t}) A C^T({\bf t})) \det ( \Id{l}\cr &\& \quad +  (D^T-z_i\, \Id{n})^{-1} (D^T-z_j\, \Id{n})^{-1} (D^T - z_i - z_j) {\vf} 
{\vg}^T M^T({\bf t}) \cr
&\& \quad +  (D^T-z_i\, \Id{n})^{-1} (D^T-z_j\, \Id{n})^{-1}  {\vf} {\vg}^T B M^T({\bf t} ))\cr
&\& = (z_1 -z_2) \tau^f_{(A,B,C,D,F)}({\bf t})\det\left(  \Id{l }+ {\vf}_1^T(z_1, z_2) {\vg}^T_1 
+ {\vf}_2^T(z_1, z_2) {\vg}^T_2 \right)
\label{xiij_calc}
\eea
where
\bea
{\vf}_1 &\&:= (D^T-z_i\, \Id{n})^{-1} (D^T-z_j\, \Id{n})^{-1} (D^T - z_i - z_j) {\vf} , \quad
{\vg}^T_1:= {\vg}^T M^T({\bf t}) , \cr
{\vf}_2 &\& := (D^T-z_i\, \Id{n})^{-1} (D^T-z_j\, \Id{n})^{-1} {\vf}, \qquad \qquad  \qquad \quad \  
{\vg}^T_2:= {\vg}^T B M^T({\bf t}).
\eea
In the second line of (\ref{xiij_calc}) we have used the rank-$1$ condition (\ref{rank1_D}) in the form
\be
A (B - z\,\Id{N}) = (D- z \,\Id{n}) A + {\vf} {\vg}^T.
\ee
 Using the Weinstein-Aronszajn identity, we can rewrite this as a $2 \times 2 $ determinant
 \bea
 \xi^{(A,B,C,D,F)}_{ij} &\&= (z_1 -z_2) \tau^f_{(A,B,C,D,F)}({\bf t}) 
 \det \pmatrix {1 + {\vg}^T_1 {\vf}_1 & {\vg}^T_1 {\vf}_2 \cr
  {\vg}^T_2 {\vf}_1  & 1+  {\vg}^T_2 {\vf}_2 }  \cr
  &\& \cr
&\& = \tau^f_{(A,B,C,D,F)}({\bf t})  \det\pmatrix { H(z_i) & H(z_j) \cr G(z_i) & G(z_j)},
 \eea
 where the last line follows from elementary column operations.
This is equivalent to Eq.~(\ref{xiABCDFW1W2}).
\end{proof}

%%%%%%%%%%%%%%%  Bibliography  %%%%%%%%%%%%%%%%%%%

\noindent
{\bf Acknowledgments.}  This work was begun while T.~D.~F. and F.~B. were postdoctoral fellows at the Centre de recherches math\'ematiques (CRM), Montr\'eal, and completed while F.~B. was at SISSA,  Trieste and T.~D.~F. was at  LAPTh Laboratoire d'Annecy-le-Vieux, France. Work of J.~H. was supported in part by the Natural Science and Engineering Research Council at Canada (NSERC) and the Fonds Qu\'eb\'ecois de la recherche sur la nature et les technologies (FQRNT). 

\appendix

\end{document}